\newcommand{\remove}[1]{}
\newcommand{\ind}{\mathbf{1}}
\newcommand{\E}{\mathbf{E}}
\newcommand{\Pro}{\mathbf{P}}
\newtheorem{theorem}{Theorem}
\newtheorem{corollary}{Corollary}
\newtheorem{lemma}{Lemma}
\newtheorem{definition}{Definition}
\newtheorem{algorithm}{Algorithm}
\newtheorem{assumption}{Assumption}
\newcommand{\qed}{\hfill \ensuremath{\Box}}
\begin{document}

 \title{Location Aware Opportunistic Bandwidth Sharing between Static and Mobile Users with Stochastic Learning in Cellular Networks
 \thanks{Arpan Chattopadhyay is with Electrical Engineering department, University of Southern California, Los Angeles, USA.  Bart{\l}omiej B{\l}aszczyszyn are with Inria/ENS, Paris, France. Eitan Altman is 
 with Inria, Sophia-Antipolis, France. Email: achattop@usc.edu, Bartek.Blaszczyszyn@ens.fr,
 eitan.altman@sophia.inria.fr . \newline
 This work was done when Arpan Chattopadhyay was a postdoctoral researcher in Inria/ENS, Paris, France. \newline 
 {\bf                 All appendices are provided in the supplementary material.} }
% \thanks{The research 
% reported in this paper was supported by...}
% \thanks{{\bf All appendices are provided in the supplementary material.}}
}
% \newcounter{one}
% \setcounter{one}{1}
% \newcounter{two}
% \setcounter{two}{2}
%\vspace{-3cm}
%\author{
%{\singlespacing
%Arpan~Chattopadhyay$^\fnsymbol{one}$, and Anurag~Kumar$^\fnsymbol{one}$\\
%\vspace{0.3cm}
%{\singlespacing
%\parbox{0.49\textwidth}{\centering $^\fnsymbol{one}$Dept. of ECE, Indian Institute of Science\\
%Bangalore 560012, India\\
%arpanc.ju@gmail.com, anurag@ece.iisc.ernet.in}
%\hfill
%}

\author{
%\vspace{-0.5cm}
%{\singlespacing
Arpan~Chattopadhyay, Bart{\l}omiej~B{\l}aszczyszyn, Eitan~Altman \\
%\vspace{0.3cm}
%{\singlespacing
% \parbox{0.49\textwidth}{\centering $^\fnsymbol{one}$Dept. of ECE, Indian Institute of Science\\
% Bangalore 560012, India\\
% arpanc.ju@gmail.com, anurag@ece.iisc.ernet.in}
% \hfill
% \parbox{0.49\textwidth}{\centering $^\fnsymbol{two}$Telecom ParisTech and CNRS LTCI \\
% Dept. Informatique et R\'eseaux\\
% 23, avenue d'Italie, 75013 Paris, France\\
% marceau.coupechoux@telecom-paristech.fr}
%}
}

\thispagestyle{empty}

\IEEEcompsoctitleabstractindextext{
\begin{abstract}
In this paper, we consider the problem of location-dependent opportunistic bandwidth sharing between static and 
mobile (i.e., moving) downlink users in a cellular network. Each cell of the network has some fixed number of static users. 
Mobile users enter the cell, 
move inside the cell for some time and then leave the cell. In order to provide higher data rate to the highly mobile users whose     fast fading channel variation is difficult to track, we propose location dependent 
bandwidth sharing between the two classes of static and mobile users; the idea is to provide higher bandwidth to the mobile users at 
favourable    locations, 
and provide higher bandwidth to the static users in other times. Our approach is agnostic to the way the bandwidth is further shared 
within the same class of users; it can be combined with any particular bandwidth allocation policy employed for one of these two 
classes of users. 
We formulate the problem as a long run average reward Markov decision process (MDP) where the per-step reward is a linear combination 
of instantaneous data volumes received by static and mobile users,  
and find the optimal policy. The optimal policy is binary in nature; it allocates the entire bandwidth either to the static users or to the mobile users at any given time. 
The reward structure of this MDP is not  known in general, and it may change with time. To alleviate these issues, 
we propose a learning algorithm based on single timescale {\em stochastic approximation}. 
Also, noting that the MDP problem can be used to maximize the long run average data rate for mobile users subject to a constraint 
on the long run average data rate of static users, we provide a learning algorithm based on 
multi-timescale stochastic approximation. 
We  prove asymptotic convergence of the bandwidth sharing policies under these learning algorithms to the optimal policy. 
The results are extended to address the issue of fair bandwidth sharing between the two classes of static and mobile users, where the notion of 
fairness is motivated by the popular notion of $\alpha$-fairness in the literature. Numerical results exhibit 
significant performance improvement by our scheme, as well as fast convergence, and also demonstrate the trade-off between 
performance gain and fairness requirement.
\end{abstract}

\begin{keywords}
Cellular network, mobility, dynamic bandwidth sharing, location-dependent bandwidth sharing, fair allocation, 
Markov decision process, stochastic approximation.
\end{keywords}
}

\maketitle

\section{Introduction}\label{section:introduction}
In recent years, cellular traffic has shown unprecedented growth, due to the proliferation 
of high-specification handheld/mobile devices such as smartphones and tablets. It is speculated that increasing use of 
applications such as video streaming or downloading, image or 
media file transfer, social networking applications and cloud services (requested or run by these devices)  will further increase 
this traffic demand in the coming years. 
In order to meet the enormous bandwidth demand for these applications, the use of macro-assisted small cell networks 
(see \cite{andrews-etal12femtocells}, \cite{pauli-etalYYhet-LTE-ICI-coordination}, 
\cite{stanze-weber13het-LTE-A}, \cite{nakamura-etal13trends-small-cell-LTE}, \cite{ishii-etal12novel-architecture-LTE-B}) have 
recently become popular; the small cells 
(such as femtocells and picocells) can meet the bandwidth demand of the users, while the macro base stations are supposed to 
provide cellular coverage.

While small cell networks can provide high throughput to the static users,
the performance of mobile users (i.e., fast moving users) deteriorates due to frequent 
handoff at  cell boundaries  resulting in huge signaling overhead 
(see \cite{camp-etal02survey-mobility-models}) and temporary data outage for each handoff. 
As a solution to this problem, the use of heterogeneous network architecture has been proposed 
(see \cite{chattopadhyay-etal16mobility-work-arxiv}), where only macro base stations can serve the 
mobile users; the relatively large cell size of the macro base stations result in a much smaller handoff rate for mobile users 
in this architecture. Static users can be served by either macro or micro base stations. However, this alone is not capable of 
meeting the growing traffic demand from mobile users,  
and hence new improvements in PHY and MAC techniques are essential. 

In order to address the above issue, we propose opportunistic (and dynamic) sharing 
of the total allocated bandwidth to a base station, 
by the two classes of static and mobile {\em downlink} users, based on user locations.  
The  transmission bandwidth available for a macro base station can be shared among its users in many ways. 
However, the interference field and downlink path-loss vary over various locations inside a macro cell, 
due to spatio-temporal variation in fast fading, distance and shadowing  from various 
interfering base stations. Hence, a natural 
way to improve user throughput is to employ dynamic bandwidth sharing among the static and mobile users inside a macro cell, depending on their  
instantaneous location, direction of motion and speed; the idea is to provide more bandwidth to the mobile users opportunistically 
when they are at favourable locations, in a distributed fashion so that the base stations need not communicate among themselves to decide 
on bandwidth allocation. This approach also alleviates the problem of measuring fast fading channel variations from the base station to the highly mobile users. Our goal is to maximize the time average of a linear combination of the expected data rates of mobile and static users. 
We formulate the problem as an average reward Markov decision process (MDP), and establish the policy structure. However, the
decision making requires information on the location of other base stations 
and the shadowing realizations from other base stations to various locations in the macro cell; 
these quantities might not be known to the macro base station, and some of them might even change over time. Hence, 
instantaneous data rate for a fast moving mobile user may not be computable in the presence of the spatially varying 
unknown interference field; only the cumulative amount 
of data downloaded by the mobile user over an interval will be available to the macro BS. Hence, we provide a 
learning algorithm using the theory of 
stochastic approximation, and prove its asymptotic convergence to the optimal dynamic bandwidth sharing policy. 
Next, we  propose a learning algorithm based on  
multi-timescale stochastic approximation, which converges to the optimal policy 
for the problem of maximizing the time-average expected data rate 
of mobile users subject to a constraint on the time-average expected data rate of static users. Hence, the learning 
algorithms can be used by the macrocells to dynamically adapt the bandwidth 
sharing policy depending on mobile user locations. We also explain how to adapt the dynamic bandwidth sharing 
scheme when  fair bandwidth sharing between the classes of static and mobile (i.e., moving) users is required. Finally, 
we demonstrate numerically that the proposed dynamic (opportunistic) bandwidth allocation scheme can improve 
user performance significantly, and also demonstrate the trade-off between performance improvement and 
a measure of the degree of fairness in allocation.

\subsection{Related Work}\label{subsection:related_work}
There has been a vast literature on the impact of user mobility in wireless networks. The authors in 
\cite{grossglauser-tse02mobility-increases-capacity} have shown that mobility increases the capacity.  
\cite{bansal-liu03capacity-delay-mobility-ahn} has explored the trade-off between delay and throughput in ad-hoc networks in presence of mobility. 
The papers \cite{bonald-etal09flow-level-performance-mobility}, 
\cite{bonald-etal04mobility-flow-level-data-systems}, \cite{borst-etal12capacity-with-mobility},  \cite{karray-mobility}, 
\cite{orlik-rappaport01handoff-arrival-process-cellular} study the impact of inter and intra cell mobility on capacity, and also 
the trade-off between throughput and fairness; these results show that mobility increases the capacity of cellular networks when base stations 
cooperate among themselves.

However, in practice, base stations may not cooperate. Moreover, due to frequent handoff of fast moving mobile users, 
significant control bandwidth has to be dedicated for handoff management; it is often the case 
that handoff results in temporary data outage for mobile users. In order 
to optimize the performance of cellular networks under user mobility, we propose to use {\em optimal} 
dynamic bandwidth sharing between the two classes of static and mobile users (depending on user locations); this 
problem is formulated as an average reward MDP (where the reward is a time average linear combination of data rates of static and mobile users) 
and later learning algorithms for computation of the optimal policy are  
provided. There have been some work in the literature relevant to our paper.  The authors of \cite{bonald-mobility} also have evaluated 
gain in performance due to mobility by favouring users with good radio channel conditions, however they did not propose any optimal bandwidth allocation scheme among users under mobility when channel condition may not be measured accurately.  The paper \cite{kushner-whiting04convergence-proportional-fair} deals with proportional fair scheduling algorithm for a {\em fixed} population of users with time-varying channel conditions due to mobility; this paper proposes a {\em single-timescale} stochastic approximation based algorithm to estimate throughput of each user, and analyses its convergence. The paper \cite{margolies-etal16exploiting-mobility-cellular-scheduling} essentially considers location based proportional fair scheduling over a finite time horizon to a fixed user population, but it does not provide any convergence analysis of the proposed algorithm. The authors of \cite{margolies-etal16exploiting-mobility-cellular-scheduling} allocate bandwidth among users opportunistically via the construction of a spatial radio map which depends on path-loss and slow fading but is averaged over fast fading; in other words, they pursue a more experimental and data-driven approach. Reference \cite{ali-etal07mobility-assisted-opportunistic-scheduling} solves the problem of bandwidth allocation among users as a static optimization problem that yields the fraction of bandwidth to be allocated to each user at a given state; this method maximizes the sum throughput of users, is easy to implement, but it requires the knowledge of user mobility statistics at the base station.

{\em However, to the best of our knowledge, there has been no prior work that considers  optimal 
dynamic bandwidth sharing depending on user location in order to maximize the sum data rate of mobile users subject to a constraint on the time-average sum data rate of static users, and proposes any learning algorithm based on multi-timescale stochastic approximation for this problem 
with provable convergence guarantee; in our current paper, we seek to address these problems.}

\subsection{Organization and Our Contribution}\label{subsection:our_contribution}
The rest of the paper is organized as follows:
\begin{itemize}
\item The system model is described in Section~\ref{section:system_model_and_notation}.
 \item  In Section~\ref{section:mdp-formulation-for-the-unconstrained-problem}, we develop optimal bandwidth 
 sharing strategy between the two classes of static and mobile users in a {\em single} cell, so as to maximize 
 the time average of a linear combination of expected sum throughputs of static and mobile users inside the cell. 
 This unconstrained optimization problem is formulated as an average reward Markov decision process (MDP), and 
 optimal policy structure is derived analytically. To the best of our knowledge, this model and mathematical contributions including the specific policy structures are new contributions to the literature and they can be used in practical wireless cellular networks.
 \item In Section~\ref{section:learning-algorithm-given-lagrange-multiplier}, we provide 
 a learning algorithm based on stochastic approximation, which converges asymptotically to the optimal bandwidth sharing policy, without 
 using the transition and cost structure of the MDP. 
 \item Noting that the unconstrained optimization problem can be used to solve the 
 constrained problem of maximizing the time average sum rate of the mobile users subject 
 to a constraint on the time-average sum rate of the static users, we provide, in Section~\ref{section:learning-algorithm-constrained-problem}, a 
 learning algorithm based on multi-timescale stochastic approximation, which {\em provably} converges to the optimal policy   for the constrained 
 problem. 
 This multi-timescale stochastic approximation based learning algorithm yields a randomized policy, and the 
 randomization technique proposed in this paper is novel to the literature. This randomized bandwidth allocation technique can be used in practical cellular network where a precise radio map of the cell is not available.
 \item In Section~\ref{section:fairness}, we show how the dynamic (and opportunistic) bandwidth sharing schemes developed in previous 
 sections can be adapted to ensure a fair allocation between the two classes of static and mobile users; we specifically adapt the notion of $\alpha$-fairness and extend our algorithms to this framework. 
 \item In Section~\ref{section:numerical-work-performance-improvement}, we numerically demonstrate considerable  performance gain due to 
 opportunistic bandwidth sharing, and also explore the trade-off between performance gain and fairness in allocation. Fast convergence of the proposed learning algorithm is also demonstrated.
 \item In Section~\ref{section:implementation-issues}, we show the equivalence of the global problem of 
 decentralized maximization of the time average of a linear combination of the expected sum throughputs of 
 mobile and static users, with a problem where each 
 base station seeks to maximize the time average of a linear combination of the expected sum rates of all mobile users and 
 all static users via location aware opportunistic bandwidth sharing between the two classes of static 
 and mobile users. We also explain how to modify our algorithms in case the location of users in a cell 
 are not known perfectly, thereby extending the algorithms to a more practical regime. We also motivate the need for location based bandwidth sharing instead of channel estimation based bandwidth allocation. It has also been argued how to extend the proposed algorithms for more generalised system model.
 \item Finally, we conclude in Section~\ref{section:conclusion}. 
 \item All proofs are provided in the appendix.
\end{itemize}

\section{System Model}\label{section:system_model_and_notation}

We consider a cellular network with multiple (possibly infinite and heterogeneous) 
base stations (BSs) on the two dimensional plane. Among these BSs, we consider one {\em single} BS and focus on the cell 
served by that BS (see Figure~\ref{fig:location-dependent-bandwidth-allocation}); this BS can be a macro BS if the network is heterogeneous. 
We consider two classes of {\em downlink}  
users served by this BS: {\em Static users (SU)} and {\em mobile users (MU)}. We assume that there exist 
multiple directed lines/routes (e.g., roads) crossing the cell, and MUs  are moving along these lines with
constant speed $v$. 
This can be a model for the roads in urban or suburban areas where users sitting in 
fast moving cars download contents from the base stations. 
Given a realization of the  line segments inside the cell, we assume that, MUs are entering a cell along each line according 
to a time-homogeneous process, and the arrival rate is potentially different along different routes. 
We assume that all the base stations transmit simultaneously (either on the same band or using frequency reuse), and 
these transmissions create interference at the SUs and MUs.

\begin{figure}[!t]
\begin{center}
\includegraphics[height=5cm, width=7cm]{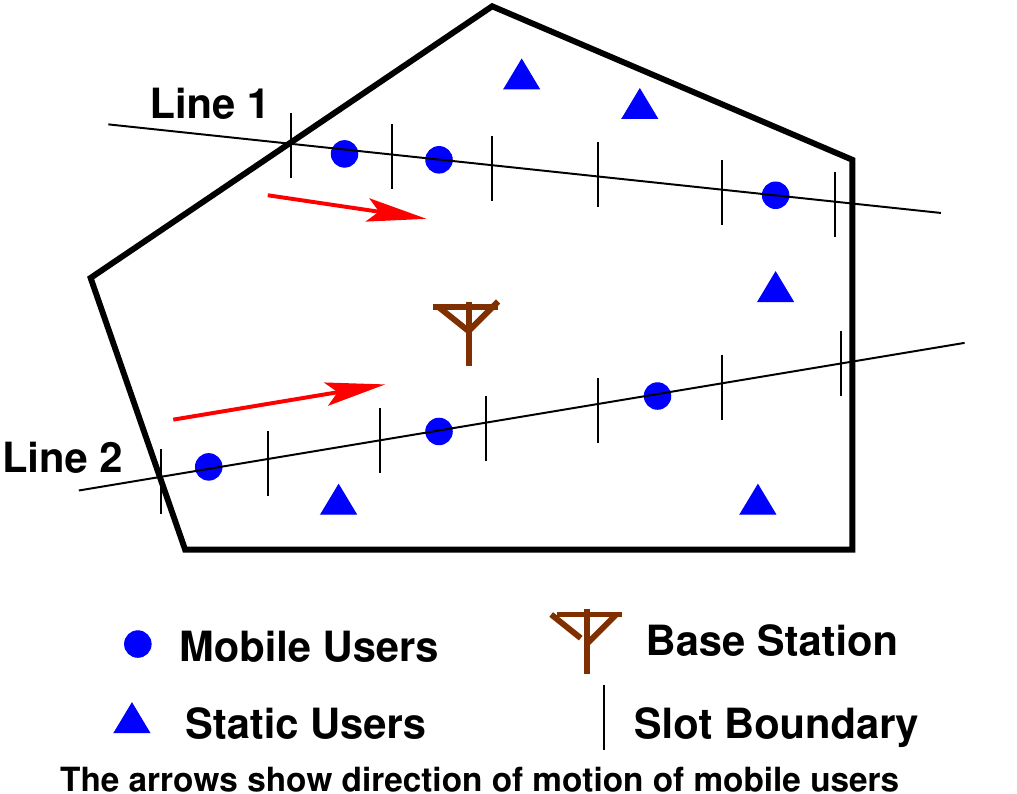}
\end{center}
\caption{A snapshot of one cell with base station, static users and mobile users. Two lines (line~$1$ and line~$2$) cross the cell; these 
lines have lengths $l_1 v \sigma$ and $l_2 v \sigma$ respectively inside the cell, where $l_1=5$, $l_2=6$, $v$ is the 
velocity of mobile users and $\sigma$ is the time slot duration. Mobile users entering the cell stay for 
$l_1$ or $l_2$ slots and then leave the cell. In each slot, depending on the instantaneous location of all users inside the cell, bandwidth is 
shared opportunistically between the classes of static and mobile users.}
\label{fig:location-dependent-bandwidth-allocation}
\end{figure}

In order to mathematically formulate the dynamic bandwidth sharing problem, 
we make the following simplified modeling assumptions (also, see Figure~\ref{fig:location-dependent-bandwidth-allocation} for 
a clear pictorial description):
\begin{itemize}
\item Time is discretized into slots of duration $\sigma$. Hence, a MU moves $v \sigma$ distance in one slot.
\item The BS  under consideration knows the locations of all static users associated with it.
\item The  BS knows the lines intersecting with its  cell, and also the lengths of these corresponding line segments. 
Let us denote the number of line segments intersecting with 
the cell under consideration, by $n$. Let the $i$-th line segment have  
length $l_i\sigma v$, i.e., a MU can remain in the 
$i$-th line segment for $l_i$ number of time slots. 
Thus, the model is as follows: any MU that enters the cell 
(after coming out of a handoff) along the $i$-th line segment spends 
a time of $l_i$ slots, and finally 
enters another cell. The values $\{l_i\}_{1 \leq i \leq n}$ are known to the base station. 
\item We allow the MU 
arrival rates along the $n$ line segments to be unequal. We denote the number of arrivals to the cell along line $i$ 
at time slot $t$ by a {\em bounded} random variable $A_{i,t}$; we assume that $A_{i,t}$ is i.i.d. across  $t$ and independent across $i$;  
the arrival process will be correlated across cells, 
but that does not affect the resource allocation problem 
for a single cell. 
\item At the beginning of each slot $\tau$, the BS under consideration decides the fraction $\eta_{\tau}$ of 
the available bandwidth to be dedicated for transmission 
to the mobile users. The remaining bandwidth is assigned to the set of static users. 
{\em In each slot, 
a base station can allocate equal bandwidth to all available mobile users, or possibly unequal (arbitrarily) 
bandwidth sharing among the mobile users is done.} Similarly, the $(1-\eta_{\tau})$ 
fraction of bandwidth  can be shared arbitrarily among all static users. {\em In this paper, we assume equal bandwidth sharing within one user class for the sake of illustration.}

It is important to note that, for fast moving users, 
traditional channel estimation may not be very accurate since the user might 
travel the fading coherence distance very fast; hence, dynamic bandwidth allocation among 
users based on instantaneous channel qualities  
may not be feasible. This necessitates location dependent bandwidth sharing which works on a slower timescale compared 
to variation in fading due to high speed of users. 
However, our scheme of sharing bandwidth between two classes of users can well accommodate any 
scheduling policy employed within the same class of users.  
Another reason for not considering location-dependent (resp., channel quality based) bandwidth allocation to individual users 
(instead of user classes) 
is that this will result in allocation of bandwidth to the user having the best location (resp., channel quality) at any 
given time slot, which might be unfair to all other users. {\em See Section~\ref{subsection:channel-estimation-versus-location-dependent} for detailed discussion on the necessity of location-dependent bandwidth sharing instead of channel quality measurement based bandwidth allocation.}

\item At the beginning of each slot, the base station gets to know the number $m$ of existing mobile users inside the cell 
(including the newly arrived MU), the index set $z_1,z_2,\cdots,z_m$ 
of lines on which each of these mobile users are moving, and also the remaining sojourn times 
(in terms of slots) $t_1,t_2,\cdots,t_m$ of those mobile users. 
This can be done via the GPS connection of the mobile users. Otherwise, since the base station records the 
time and line of entry of a new MU into the cell, and since the velocity is known, 
the base station can always calculate the location of any mobile 
station inside the cell. We define $s:=(\{t_i, z_i\}_{i=1}^m)$ to be the state of the system at the beginning 
of a slot.

{\em We will explain in Section~\ref{subsection:error-in-location-estimation} 
how we can relax the assumption on availability of perfect information of the system state to the decision maker.}

\item At state $s$, if all available bandwidth (assumed to be equal to $1$~unit) is allocated only to MUs, 
then, given a bandwidth 
sharing scheme among all SUs and a bandwidth sharing scheme among all MUs, and given the 
realization of shadowing and path-loss from each BS to each location in the cell, 
the amount of data each MU will be able to download over a slot is a random variable since the fading 
process seen by each user (from the serving BS and interfering BSs) over this slot is random. However, if these 
quantities and the fading distribution is known, 
the base station can calculate the expected data volume each user will be able to download until the beginning of the 
next slot.\footnote{Note that, 
for a given realization of the location of base stations and for a 
given realization of the spatially varying shadowing process, 
the amount of interference at any location is not a random variable if there is no fading. 
Even the interference averaged over random, time-varying 
fading is a 
deterministic quantity.  But this quantity is unknown in general to a 
BS which does not possess global information about the base station 
locations and shadowing process.}. Let us define 
$R_{mobile}(s)$ to be the (random) total amount of data the MUs download {\bf per unit bandwidth} if the entire 
bandwidth is allocated to MUs, and similar 
meaning applies for $R_{static}(s)$ (i.e., this is the random amount of data the SUs can download in a slot 
in case the entire bandwidth is allocated to SUs). 
In presence of fading, the expectations of these two random variables (expectation taken over fading distribution) 
are denoted by $\overline{R}_{mobile}(s)$ and $\overline{R}_{static}(s)$. 
Note that, $\overline{R}_{mobile}(s)$ and 
$\overline{R}_{static}(s)$ are dependent on the shadowing realizations from all base stations to the static and mobile 
users over various locations in the cell (since they will determine the signal to interference 
ratio for various users at different locations).

We will assume in Section~\ref{section:mdp-formulation-for-the-unconstrained-problem} that 
$\overline{R}_{mobile}(s)$ and $\overline{R}_{static}(s)$ are known to the BS; this assumption will be 
relaxed in subsequent sections.

\end{itemize}

\section{Opportunistic Bandwidth Allocation  Under Perfect Knowledge of Mean User Rates $\overline{R}_{mobile}(s)$ and $\overline{R}_{static}(s)$}
\label{section:mdp-formulation-for-the-unconstrained-problem}

\subsection{Markov decision process formulation}
\label{subsection:smdp-formulation}
We formulate the dynamic (i.e., opportunistic) bandwidth allocation problem for a BS as a Markov decision process. 
We assume in this section that the base station knows $\overline{R}_{mobile}(s)$ and 
$\overline{R}_{static}(s)$ perfectly for each state $s$ at each slot. 

\subsubsection{State Space} 
New mobile users arrive to the cell in each slot. The state of the system 
at the beginning of a slot is considered. The state at the beginning of a slot (after new arrivals in 
the previous slot) is of the form 
$s:=(\{t_i, z_i\}_{i=1}^m)$, where $m$ is the number of mobile users present in the cell, $t_i$ is the residual 
sojourn time of the $i$-th user in the cell, and $z_i \in \{1,2,\cdots,n\}$ 
is the index of the line along which the $i$-th mobile user is moving; if $z_i=k$, then $t_i \in \{1,2,\cdots,l_k\}$. 
Note that the state space is finite since the number of arrivals in each slot is bounded and each mobile user stays inside the 
cell for a bounded number of slots.

\subsubsection{Action Space}  
Given a state, the  BS takes an action $x \in [0,1]$; $x$ is the fraction of bandwidth 
the  BS decides to allocate to the mobile users. Hence, our action space is $[0,1]$. In this work, we assume that,  
at any given time, all static users share  the $(1-x)$ fraction of bandwidth equally among themselves, and all 
mobile users share the $x$ fraction of bandwidth equally among them.\footnote{From the optimization point of view, 
it will always be better to allocate $x$ fraction of bandwidth to the {\em best} mobile user at a given slot, and 
$(1-x)$ fraction to the {\em best} static user for ever. But this will result in complete starvation for many 
static users, and short-term service unfairness among the mobile users; each mobile users will get high data rate in some slots, and 
very low (possibly zero) data rate in some other slots.} 

\subsubsection{State Transition} 
For current state $s:=(\{t_i, z_i\}_{i=1}^m)$, if $p$ MUs arrive to the cell in a slot, then the next state will be 
$s'=(\{(t_i-1)^+, z_i\}_{i=1}^m, \{t_i, z_i\}_{i=m+1}^{m+p})$, where 
$z_i \in \{1,2,\cdots,n\} \forall i \in \{m+1,m+2,\cdots,m+p\}$ is the index of the line along which the $i$-th 
new arrival at the slot enters the cell, and  $t_i=l_k$ if $z_i=k$ for $i \in \{m+1,m+2,\cdots,m+p\}$. 
In course of this, if $(t_i-1)^+=0$ for any $i \in \{1,2,\cdots,m\}$, then information of that user is removed 
from the state since he has already left the cell. We denote the state at time slot $\tau$ by $s(\tau)$.

\subsubsection{Policy} 
A stationary policy $\eta(\cdot | \cdot)$ is a family of  probability distributions $\eta(\cdot | \cdot)$ on the action space 
$[0,1]$ conditioned on the state $s$; i.e., $\eta(\cdot | s)$ denotes the probability distribution 
of the action taken whenever the system reaches state $s$. If $\eta(\cdot | s)$ is such that for each state $s$, 
the policy chooses one action with probability $1$, then the policy is called a stationary deterministic policy 
$\eta(\cdot)$; in this case, 
$\eta(s)$ denotes the action taken at state $s$. 
We denote by $\eta_{\tau}$ the action taken at time $\tau$ (i.e., the fraction of bandwidth allocated 
to the class of MUs in slot $\tau$); this will be equal to $\eta(s(\tau))$ if a stationary deterministic policy $\eta(\cdot)$ 
is used in decision making. We denote by $\eta_{\tau}$ a number 
in $[0,1]$, and by $\eta(\cdot)$ a function.

\subsubsection{Single Stage Reward} 
If the system state is $s(\tau)$ at slot $\tau$, and if an action $\eta_{\tau} \in [0,1]$ is taken, 
the total (random) reward for 
the base station at decision epoch $\tau$ is defined as 
$$R(\tau):=\eta_{\tau}R_{mobile}(s(\tau))+\xi 
(1-\eta_{\tau})R_{static}(s(\tau)).$$

\subsubsection{Objective Function}

Let us denote the expectation under policy $\eta(\cdot | \cdot)$ by 
$\E_{\eta(\cdot | \cdot)}$; the expectation is over the 
randomness in the policy and over the randomness in state evolution. 
We seek to solve the following problem of maximizing the time average of the expected reward 
per slot:

\tiny
\begin{eqnarray}
 && \sup_{\eta(\cdot | \cdot)} \liminf_{N \rightarrow \infty}\frac{1}{N}\sum_{\tau=1}^{N} \E _{\eta(\cdot | \cdot)} \bigg( \eta_{\tau}\overline{R}_{mobile}(s(\tau))+\xi 
(1-\eta_{\tau})\overline{R}_{static}(s(\tau)) \bigg) \nonumber\\
&& \label{eqn:unconstrained_mdp_for_a_cell}
\end{eqnarray}
\normalsize

Here $\xi \geq 0$ can be considered as a Lagrange multiplier; it captures the emphasis we put 
on the time average sum throughput of SUs and MUs in the objective function. 
This problem is an unconstrained optimization problem. 

Note that, there are two expectations in this objective function: one is over randomness in the fading process 
(which are captured by $\overline{R}_{mobile}(s)$ and 
$\overline{R}_{static}(s)$), and the other one is over the randomness in the policy  and over the 
randomness in the state evolution (captured by $\E_{\eta(\cdot | \cdot)}$).

The problem \eqref{eqn:unconstrained_mdp_for_a_cell} has a stationary, deterministic optimal 
policy (by standard MDP theory), which we denote by $\eta_{\xi}^*(\cdot)$. 
Under the deterministic policy $\eta_{\xi}^*(\cdot)$, the optimal action at state $s$ is denoted by 
$\eta_{\xi}^*(s)$ (parametrized by $\xi$) or simply by $\eta^*(s)$. The optimal value for the objective in 
\eqref{eqn:unconstrained_mdp_for_a_cell} is denoted by $\lambda^*(\xi)$ or simply by $\lambda^*$.

It has to be noted that, under $\eta_{\xi}^*(\cdot)$, we have 
$\lim_{t \rightarrow \infty} \frac{\sum_{\tau=1}^t R(\tau)}{t}=\lambda^*(\xi)$ almost surely 
(by the ergodicity of the Markov chain $\{s(\tau)\}_{\tau \geq 1}$). 

Later in Section~\ref{subsection:connection-cell-level-global-problem}, we relate 
\eqref{eqn:unconstrained_mdp_for_a_cell} to a global optimization problem over multiple
cells.

\subsubsection{Connection Between the Unconstrained Problem and a Constrained Problem}
\label{subsubsection:connection-constrained-unconstrained}
The unconstrained optimization 
problem~(\ref{eqn:unconstrained_mdp_for_a_cell}) can be used to solve 
the following constrained optimization problem of maximizing the time-average sum data rate for 
the mobile users while satisfying a minimum time-average sum data rate constraint $R_0$ for static users:

\footnotesize
\begin{eqnarray}
 &&\sup_{\eta(\cdot | \cdot)} \liminf_{N \rightarrow \infty}\frac{1}{N}\sum_{\tau=1}^{N} \E _{\eta(\cdot | \cdot)} \bigg( \eta_{\tau}\overline{R}_{mobile}(s(\tau)) \bigg) \nonumber\\
 &s.t.,&  \liminf_{N \rightarrow \infty}\frac{1}{N}\sum_{\tau=1}^{N} \E _{\eta(\cdot | \cdot)} \bigg( (1-\eta_{\tau})\overline{R}_{static}(s(\tau)) \bigg) \geq R_0  \nonumber\\
 \label{eqn:constrained_mdp_for_a_cell}
\end{eqnarray}
\normalsize

It is well-known that by choosing an appropriate value $\xi^*$ for $\xi$ and solving the optimization 
problem~(\ref{eqn:unconstrained_mdp_for_a_cell}), one can find an optimal policy $\eta_{\xi^*}^*(\cdot | \cdot)$ 
for the constrained 
problem~(\ref{eqn:constrained_mdp_for_a_cell}) as well. 

The following standard result tells us how to choose the optimal {\em Lagrange multiplier} 
$\xi^*$ (see \cite[Theorem~$4.3$]{beutler-ross85optimal-policies-controlled-markov-chains-constraint}):

\begin{theorem}\label{theorem:how-to-choose-optimal-Lagrange-multiplier}
 Consider the constrained problem~(\ref{eqn:constrained_mdp_for_a_cell}). If there exists a multiplier 
 $\xi^* \geq 0$ and a policy $\eta_{\xi^*}^*(\cdot | \cdot)$ 
such that $\eta_{\xi^*}^*(\cdot | \cdot)$ is an optimal policy for the unconstrained problem~(\ref{eqn:unconstrained_mdp_for_a_cell}) 
under  $\xi^*$ and the constraint in 
(\ref{eqn:constrained_mdp_for_a_cell}) is met with equality under policy $\eta_{\xi^*}^*(\cdot | \cdot)$, 
then $\eta_{\xi^*}^*(\cdot | \cdot)$ is an optimal policy for the constrained problem~(\ref{eqn:constrained_mdp_for_a_cell}) also.\qed
\end{theorem}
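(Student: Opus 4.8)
The plan is to run the standard Lagrangian weak-duality argument for constrained average-reward MDPs, taking a little care with the $\liminf$ in the objectives. First I would fix notation: for any stationary policy $\eta(\cdot\,|\,\cdot)$ write
$J_m(\eta) := \liminf_{N \to \infty}\frac1N\sum_{\tau=1}^N \E_{\eta}\big(\eta_\tau\overline{R}_{mobile}(s(\tau))\big)$ and
$J_s(\eta) := \liminf_{N \to \infty}\frac1N\sum_{\tau=1}^N \E_{\eta}\big((1-\eta_\tau)\overline{R}_{static}(s(\tau))\big)$,
so that the constrained problem~\eqref{eqn:constrained_mdp_for_a_cell} is $\sup\{\,J_m(\eta): J_s(\eta)\geq R_0\,\}$, while the value of the unconstrained problem~\eqref{eqn:unconstrained_mdp_for_a_cell} at multiplier $\xi^*$ is $\lambda^*(\xi^*)=\sup_{\eta}\liminf_{N}\frac1N\sum_{\tau=1}^N\E_{\eta}\big(\eta_\tau\overline{R}_{mobile}(s(\tau))+\xi^*(1-\eta_\tau)\overline{R}_{static}(s(\tau))\big)$.

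\emph{Step 1 (upper bound for every feasible policy).} Let $\eta$ be any policy that is feasible for~\eqref{eqn:constrained_mdp_for_a_cell}. Using superadditivity of $\liminf$, namely $\liminf a_N+\liminf b_N\leq\liminf(a_N+b_N)$, applied with $a_N=\frac1N\sum_\tau\E_\eta(\eta_\tau\overline{R}_{mobile}(s(\tau)))$ and $b_N=\frac{\xi^*}{N}\sum_\tau\E_\eta((1-\eta_\tau)\overline{R}_{static}(s(\tau)))$, one gets $J_m(\eta)+\xi^* J_s(\eta)\leq\lambda^*(\xi^*)$. Since $\xi^*\geq 0$ and $J_s(\eta)\geq R_0$ by feasibility, this gives $J_m(\eta)\leq\lambda^*(\xi^*)-\xi^* R_0$. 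Hence $\lambda^*(\xi^*)-\xi^* R_0$ is an upper bound on the optimal value of the constrained problem.

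\emph{Step 2 (the candidate attains this bound).} Here I invoke the facts already recorded in Section~\ref{subsection:smdp-formulation}: $\eta_{\xi^*}^*(\cdot)$ is stationary and deterministic, and under it the finite-state chain $\{s(\tau)\}$ is ergodic, so the individual Ces\`aro averages $\frac1N\sum_\tau\E(\eta_\tau\overline{R}_{mobile}(s(\tau)))$ and $\frac1N\sum_\tau\E((1-\eta_\tau)\overline{R}_{static}(s(\tau)))$ actually converge (to the corresponding stationary expectations). Therefore the two $\liminf$'s are genuine limits and add, giving $J_m(\eta_{\xi^*}^*)+\xi^* J_s(\eta_{\xi^*}^*)=\lambda^*(\xi^*)$ (using optimality of $\eta_{\xi^*}^*$ for the unconstrained problem). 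By hypothesis the constraint holds with equality, $J_s(\eta_{\xi^*}^*)=R_0$, whence $J_m(\eta_{\xi^*}^*)=\lambda^*(\xi^*)-\xi^* R_0$; and $\eta_{\xi^*}^*$ is itself feasible for~\eqref{eqn:constrained_mdp_for_a_cell} since $R_0\geq R_0$. Combining with Step 1, no feasible policy beats $\eta_{\xi^*}^*$, so it is optimal.

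The inequality manipulations in Step 1 are routine; the one place that needs care — and what I expect to be the main obstacle — is the claim in Step 2 that the \emph{combined} $\liminf$ splits as the \emph{sum} of the two individual limits. This is exactly where stationarity, determinism, and ergodicity of the induced Markov chain are used (as noted just after~\eqref{eqn:unconstrained_mdp_for_a_cell}); for a general randomized policy one would only have superadditivity, in the wrong direction. If one prefers to sidestep this, an alternative is to restrict the supremum in Step 1 to stationary policies — sufficient by standard average-reward MDP theory — so that all relevant Ces\`aro limits exist simultaneously and the additivity is automatic.
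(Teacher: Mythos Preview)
Your argument is correct; it is the standard Lagrangian weak-duality proof. The paper, however, does not supply a proof at all: the theorem is stated as a standard result and attributed to \cite[Theorem~4.3]{beutler-ross85optimal-policies-controlled-markov-chains-constraint}, so there is nothing to compare against beyond the citation.

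One small remark on your Step~2. The notation $\eta_{\xi^*}^*(\cdot\,|\,\cdot)$ in the paper denotes a \emph{stationary} (possibly randomized) policy, and in this particular MDP the state transition law is \emph{action-independent} (see the discussion around~\eqref{eqn:Poisson-equation}). Consequently, under any stationary policy the state chain $\{s(\tau)\}$ is the same ergodic finite-state Markov chain, and both Ces\`aro averages converge regardless of whether the policy is deterministic or randomized. So your worry about additivity of the two $\liminf$'s under a randomized $\eta_{\xi^*}^*$ is not an issue here, and you do not actually need the fallback of restricting to stationary policies in Step~1: the splitting in Step~2 is automatic once $\eta_{\xi^*}^*(\cdot\,|\,\cdot)$ is stationary.
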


{\em Remark:} We will see in Section~\ref{section:learning-algorithm-constrained-problem} that, 
in order to meet the constraint in \eqref{eqn:constrained_mdp_for_a_cell} with equality, 
we will need randomization between two deterministic policies (contrary to the 
fact that \eqref{eqn:unconstrained_mdp_for_a_cell} has a stationary, deterministic, optimal policy).

\subsection{Optimal Policy Structure}\label{subsection:policy-structure-and-computation}
In this section, we will only consider the unconstrained problem \eqref{eqn:unconstrained_mdp_for_a_cell}. 
We formulate the problem as a Markov decision process (MDP). The average reward optimality equation for this MDP is given by 
(see \cite[Chapter~$7$, Section~$4$]{bertsekas07dynamic-programming-optimal-control-1}): 

\footnotesize
\begin{eqnarray}
 h^*(s)&=&\max_{x \in [0,1]}  \bigg( x \overline{R}_{mobile}(s)+\xi (1-x) \overline{R}_{static}(s) \nonumber\\
 && -\lambda^*+\E (h^*(S')) \bigg)
 \label{eqn:Poisson-equation}
\end{eqnarray}
\normalsize
where $\lambda^*$ is the optimal average reward per slot for the problem~(\ref{eqn:unconstrained_mdp_for_a_cell}), 
$h^*(s)$ is the optimal differential cost at state $s$ (see 
\cite[Chapter~$7$, Section~$4$]{bertsekas07dynamic-programming-optimal-control-1} for thorough interpretation of the 
differential cost $h^*(s)$), and $S'$ is the (random) next state whose distribution depends 
on $s$ and the realization of new arrivals. Note that, state transition is independent of the action taken in any slot; 
hence, the expectation in $\E (h^*(S'))$ is taken only over the randomness in the new arrivals of MUs to the BS in one slot.

\begin{theorem}\label{theorem:policy-structure}
 {\em (Optimal policy $\eta_{\xi}^*(\cdot)$:)} If the state $s$ is such that, $ \overline{R}_{mobile}(s)-\xi \overline{R}_{static}(s) >0$, then optimal action is 
 $\eta_{\xi}^*(s)=1$. If $ \overline{R}_{mobile}(s)-\xi \overline{R}_{static}(s) <0$, then $\eta_{\xi}^*(s)=0$. 
 If $ \overline{R}_{mobile}(s)-\xi \overline{R}_{static}(s)=0$, then we can choose any action $\eta_{\xi}^*(s)$.
\end{theorem}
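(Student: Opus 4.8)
The plan is to exploit the crucial observation, already flagged in the paper, that the state transition of this MDP is \emph{independent of the action}: the next state $S'$ depends only on $s$ and the realization of new arrivals, not on $\eta$. Consequently, in the average-reward optimality equation \eqref{eqn:Poisson-equation}, the term $-\lambda^* + \E(h^*(S'))$ does not depend on the maximizing variable $x$ (written there as $\eta$), and the maximization over $x \in [0,1]$ reduces to maximizing the single-stage reward alone:
\begin{eqnarray}
\max_{\eta \in [0,1]} \Big( \eta\,\overline{R}_{mobile}(s) + \xi(1-\eta)\,\overline{R}_{static}(s) \Big). \nonumber
\end{eqnarray}
First I would rewrite the objective inside this maximization as $\xi\,\overline{R}_{static}(s) + \eta\big(\overline{R}_{mobile}(s) - \xi\,\overline{R}_{static}(s)\big)$, which is an affine (linear) function of $\eta$ on the compact interval $[0,1]$ with slope $c(s) := \overline{R}_{mobile}(s) - \xi\,\overline{R}_{static}(s)$.

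The rest is the elementary fact that a linear function on $[0,1]$ attains its maximum at an endpoint determined by the sign of the slope. If $c(s) > 0$, the expression is strictly increasing in $\eta$, so the unique maximizer is $\eta = 1$; hence $\eta_\xi^*(s) = 1$. If $c(s) < 0$, it is strictly decreasing, so the unique maximizer is $\eta = 0$; hence $\eta_\xi^*(s) = 0$. If $c(s) = 0$, the expression is constant in $\eta$, so every $\eta \in [0,1]$ attains the maximum and any choice is optimal. Since by standard average-reward MDP theory the problem \eqref{eqn:unconstrained_mdp_for_a_cell} admits a stationary deterministic optimal policy satisfying \eqref{eqn:Poisson-equation}, the action chosen as above at each state is optimal, which is exactly the claimed policy structure.

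The ``hard part'' here is essentially bookkeeping rather than mathematics: one should confirm that the optimality equation \eqref{eqn:Poisson-equation} genuinely applies, i.e., that the standard conditions guaranteeing a solution $(\lambda^*, h^*(\cdot))$ to the average-reward optimality equation are in force. This is straightforward because the state space is finite (bounded arrivals, bounded sojourn times), the per-step reward is bounded, and the induced Markov chain under any stationary policy is the same action-independent chain, which one can check is ergodic (unichain) — so the relative value iteration / policy iteration machinery of \cite[Chapter~7]{bertsekas07dynamic-programming-optimal-control-1} delivers \eqref{eqn:Poisson-equation}. Once that is in hand, the action-independence of the transition kernel does all the work, and the theorem follows immediately. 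I would present the argument in that order: (i) invoke \eqref{eqn:Poisson-equation} and note the transition term is action-free; (ii) isolate the affine-in-$\eta$ single-stage reward; (iii) read off the sign-of-slope case analysis.
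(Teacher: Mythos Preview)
Your proposal is correct and follows essentially the same approach as the paper: the paper's proof is a one-line remark that the transition probabilities are action-independent, so the optimal $\eta$ simply maximizes the expected single-stage reward at state $s$. You have just filled in the routine details (affine-in-$\eta$ rewriting, sign-of-slope case analysis, and verification that the average-reward optimality equation applies), all of which are exactly what the paper leaves implicit.
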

\begin{proof}
From \eqref{eqn:Poisson-equation}, we can say that:
 \begin{eqnarray*}
 && \eta_{\xi}^*=\arg \max_{x \in [0,1]}  \bigg( x \overline{R}_{mobile}(s)+\xi (1-x) \overline{R}_{static}(s) \\
 && -\lambda^*+\E (h^*(S')\bigg),
 \end{eqnarray*} 
 i.e., $\eta_{\xi}^*$ should be the maximizer in the average cost optimality equation. Since $\lambda^*$, $\xi$, $\overline{R}_{mobile}(s)$, $\overline{R}_{static}(s)$ and $\E (h^*(S'))$ are independent of $x$ in this optimization problem, we have 

$$\eta_{\xi}^*=\arg \max_{x \in [0,1]} x \bigg(  \overline{R}_{mobile}(s)- \xi \overline{R}_{static}(s) \bigg)$$
This proves the theorem.
\end{proof}
{\bf Remark:} The binary nature of the optimal policy in Theorem~\ref{theorem:policy-structure} makes is very easy to use the policy for optimal bandwidth allocation in a practical cellular network.
{\bf Comments on Fairness:}  Note that, each static user will 
asymptotically  receive positive throughput, since with positive probability 
a cell will have zero mobile user at a given time slot.  
On the other hand, a mobile user might get zero throughput in the current cell. 
In order to ensure a fair bandwidth sharing inside each cell, we describe in 
Section~\ref{section:fairness} how to share bandwidth between the two classes for a modified objective function which is motivated by the notion of $\alpha$-fairness (see \cite{chiang-fairness} for reference). The modified objective function 
ensures that both classes receive a positive throughput at the same time.

Let us denote the steady-state 
probability of occurrence of state $s$ by $g(s)$, with $\sum_s g(s)=1$. Under policy $\eta_{\xi}^*(\cdot)$, 
the optimal data rate for the mobile users per slot 
is given by: 
$$\overline{R}_{mobile}^*(\xi):=\sum_s g(s)  \overline{R}_{mobile}(s)   \eta_{\xi}^*(s).$$
Similarly, we define the optimal 
data rate of static users per slot by 
$$\overline{R}_{static}^*(\xi):=\sum_s g(s)  \overline{R}_{static}(s)  (1- \eta_{\xi}^*(s) ).$$

\begin{lemma}\label{lemma:mdp-optimal-rate-of-static-mobile-users-increasing-decreasing-in-xi}
 $\overline{R}_{mobile}^*(\xi)$ decreases with $\xi$, and  $\overline{R}_{static}^*(\xi)$ increases in $\xi$.
\end{lemma}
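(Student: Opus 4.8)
The plan is to exploit the very simple structure of the optimal policy established in Theorem~\ref{theorem:policy-structure}: for a fixed $\xi$, the optimal action at state $s$ is $\eta_{\xi}^*(s)=\ind\{\overline{R}_{mobile}(s)-\xi\overline{R}_{static}(s)>0\}$ (with ties broken arbitrarily, which, as I will note, happen with probability zero in the relevant summations or can be handled separately). Since the state transition law does not depend on the action, the Markov chain $\{s(\tau)\}$ and hence its stationary distribution $g(\cdot)$ are \emph{independent of} $\xi$. This is the crucial simplification: monotonicity in $\xi$ reduces to monotonicity of the fixed, $\xi$-weighted sum
\[
\overline{R}_{mobile}^*(\xi)=\sum_s g(s)\,\overline{R}_{mobile}(s)\,\ind\{\overline{R}_{mobile}(s)>\xi\overline{R}_{static}(s)\},
\]
with $g(s)$ held constant.

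First I would show $\overline{R}_{mobile}^*(\xi)$ is nonincreasing. Take $\xi_1<\xi_2$. For each state $s$ with $\overline{R}_{static}(s)>0$, the threshold condition $\overline{R}_{mobile}(s)>\xi\overline{R}_{static}(s)$ is equivalent to $\overline{R}_{mobile}(s)/\overline{R}_{static}(s)>\xi$, which can only switch from true to false as $\xi$ increases; for states with $\overline{R}_{static}(s)=0$ the condition $\overline{R}_{mobile}(s)>0$ does not depend on $\xi$ at all. Hence the indicator $\ind\{\overline{R}_{mobile}(s)-\xi\overline{R}_{static}(s)>0\}$ is nonincreasing in $\xi$ for every $s$, so termwise (and $g(s)\ge 0$, $\overline{R}_{mobile}(s)\ge 0$) the sum is nonincreasing in $\xi$.

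Next I would show $\overline{R}_{static}^*(\xi)=\sum_s g(s)\,\overline{R}_{static}(s)\,(1-\eta_{\xi}^*(s))$ is nondecreasing. By the same observation, $1-\eta_{\xi}^*(s)=\ind\{\overline{R}_{mobile}(s)-\xi\overline{R}_{static}(s)\le 0\}$ is nondecreasing in $\xi$ for every $s$; again $g(s)\ge 0$ and $\overline{R}_{static}(s)\ge 0$, so the sum is nondecreasing termwise. The tie case $\overline{R}_{mobile}(s)=\xi\overline{R}_{static}(s)$ for a particular $\xi$ affects at most finitely many states and either leaves both quantities unchanged (if the chosen tie-breaking action is taken consistently) or, at worst, produces the stated weak monotonicity; I would remark that on any interval of $\xi$ values not equal to one of the finitely many ratios $\overline{R}_{mobile}(s)/\overline{R}_{static}(s)$ the policy, and hence both rates, are locally constant, so the functions are piecewise constant and the monotonicity is genuine.

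The only thing requiring care — the ``main obstacle,'' though a mild one — is the bookkeeping around ties and around states with $\overline{R}_{static}(s)=0$, i.e.\ making precise that the arbitrary tie-breaking in Theorem~\ref{theorem:policy-structure} does not break monotonicity. I would dispatch this by noting the state space is finite, so there are finitely many critical values of $\xi$; between consecutive critical values the optimal policy is unique and constant, and at a critical value any tie-breaking choice yields a rate lying between the left and right limits, which preserves the weak monotonicity claimed in the lemma. Everything else is the termwise-monotonicity argument above.
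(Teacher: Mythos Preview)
Your argument is correct, but it is genuinely different from the paper's. The paper does not invoke the explicit threshold form of $\eta_{\xi}^*(\cdot)$ at all; instead it uses a standard interchange (revealed-preference) inequality: since $\eta_{\xi}^*(\cdot)$ maximizes the Lagrangian at multiplier $\xi$ and $\eta_{\xi+\kappa}^*(\cdot)$ maximizes it at $\xi+\kappa$, one has
\[
\overline{R}_{mobile}^*(\xi)+\xi\,\overline{R}_{static}^*(\xi)\ge \overline{R}_{mobile}^*(\xi+\kappa)+\xi\,\overline{R}_{static}^*(\xi+\kappa),
\]
\[
\overline{R}_{mobile}^*(\xi+\kappa)+(\xi+\kappa)\,\overline{R}_{static}^*(\xi+\kappa)\ge \overline{R}_{mobile}^*(\xi)+(\xi+\kappa)\,\overline{R}_{static}^*(\xi),
\]
and adding these gives $\kappa\bigl(\overline{R}_{static}^*(\xi+\kappa)-\overline{R}_{static}^*(\xi)\bigr)\ge 0$. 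Your route is more concrete: it exploits Theorem~\ref{theorem:policy-structure} and the action-independence of the transition kernel to reduce the claim to termwise monotonicity of an indicator, which also immediately yields the piecewise-constant, staircase structure of $\xi\mapsto\overline{R}_{static}^*(\xi)$ used later in Section~\ref{subsection:need-for-randomized-policy}. The paper's route, by contrast, is more robust: it needs neither the explicit policy form nor the special transition structure, and would apply verbatim to any parametrized family of optimal policies for a Lagrangian objective.
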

\begin{proof}
 See Appendix~\ref{appendix}.
\end{proof}

{\bf Error in estimating user location:} This issue is 
addressed in Section~\ref{subsection:error-in-location-estimation} in detail.

\section{Learning Algorithm for the Unconstrained Problem}
\label{section:learning-algorithm-given-lagrange-multiplier}
In  Section~\ref{section:mdp-formulation-for-the-unconstrained-problem}, we assumed that perfect knowledge of 
 $\overline{R}_{mobile}(s)$ and $\overline{R}_{static}(s)$ is available to the BS. 
 However, in practice, 
 unknown path-loss factor (since path-loss exponent and location of interfering base stations are unknown to the BS), 
 unknown shadowing variation over space 
 and unknown fading distribution will make it impossible for the base station to 
 compute $\overline{R}_{mobile}(s)$ and $\overline{R}_{static}(s)$. Hence, 
 the base station cannot use the simple policy structure given by Theorem~\ref{theorem:policy-structure}. 
However, the base station can get a feedback 
 from the users about how much data the 
 users were able to download between two successive decision instants; this can happen if the base station keeps on 
 sending data packets to the users, and the 
 users measure packet error rate in the received data and send feedback 
 to the base station before a new decision is made. In this section, we propose a sequential 
 bandwidth allocation and learning algorithm, which maintains a running 
 estimate of $\overline{R}_{mobile}(s)$ and $\overline{R}_{static}(s)$ for each state $s$, 
 and updates these running estimates as new user feedbacks 
 are gathered, so as to converge asymptotically to a stationary policy  solving  
 the unconstrained problem~(\ref{eqn:unconstrained_mdp_for_a_cell}).
 
  \begin{assumption}\label{assumption:fading-ergodic}
  The fading gain between any base station (serving or interfering) 
  and a specific location in the cell comes from an ergodic Markov process (across time slots)  
  taking values from a bounded subset of the nonnegative real line, 
  and it is identically distributed across locations in the cell and across various BSs.\qed
 \end{assumption}
 
 Note that, this assumption ensures that if we sample $R_{mobile}(s)$ infinitely often, 
 we can essentially average over fading, and 
 obtain a correct estimate of 
 $\overline{R}_{mobile}(s)$,  even though the slot duration $\sigma$ might be  smaller than the time required to 
average over all possible fading states by a mobile user. 
 
Note that, by Theorem~\ref{theorem:policy-structure}, we can restrict ourselves to the action space $\{0,1\}$ instead of $[0,1]$. 
With this reduced state space, we present our sequential bandwidth allocation and learning algorithm, which 
is motivated by the theory of stochastic approximation (see \cite{borkar08stochastic-approximation-book}).

\subsection{The Learning Algorithm}
{\em Some notation:}  Let $\eta_{\tau} \in \{0,1\}$ denote the decision to be taken 
 at decision instant $\tau$. Let $R_{mobile}(s)$ and $R_{static}(s)$ be the (random) 
 realization of the total rates received between decision 
 instant $\tau$ and decision instant $\tau+1$ by the mobile (resp., static) users, provided that 
 $\eta_{\tau}=1$ (resp., $\eta_{\tau}=0$).
 
 Fix any small number  
 $\epsilon>0$. Suppose that at the decision instant $\tau$, the Markov chain has reached state 
 $s$, and let the current estimates of $\overline{R}_{mobile}(s)$ and $\overline{R}_{static}(s)$ be $R_{mobile}^{(\tau)}(s)$ 
 and $R_{static}^{(\tau)}(s)$, respectively. 
 
 Let us define $\nu(s,1,\tau):=\sum_{t=1}^{\tau} \ind \{s_t=s,\eta_t=1\}$ and $\nu(s,0,\tau):=\sum_{t=1}^{\tau} \ind \{s_t=s,\eta_t=0\}$.

 Let $\{a(t)\}_{t \geq 1}$ be a decreasing sequence of positive numbers with $\sum_{t=1}^{\infty}a(t)=\infty$ 
 and $\sum_{t=1}^{\infty}a^2(t)<\infty$.

\begin{algorithm}\label{algorithm:learning-algorithm-single-timescale}
 Start with arbitrary $R_{mobile}^{(0)}(s)$ and $R_{static}^{(0)}(s)$. 
  
 {\em (Decision on bandwidth sharing:)} At decision instant $\tau$, 
 with probabilities $\frac{\epsilon}{2}$ each, allocate the entire bandwidth 
 to the static users (i.e., take $\eta_{\tau}=0$) or to the mobile users (i.e., take $\eta_{\tau}=1$). Else 
 (with probability $(1-\epsilon)$), allocate the entire bandwidth to mobile users (i.e., $\eta_{\tau}=1$) if 
 $R_{mobile}^{(\tau)}(s)-\xi R_{static}^{(\tau)}(s)>0$, allocate the entire bandwidth to static users (i.e., 
 $\eta_{\tau}=0$) if 
  $R_{mobile}^{(\tau)}(s)-\xi R_{static}^{(\tau)}(s)<0$, and allocate the entire bandwidth 
  arbitrarily either to SUs or to MUs if 
   $R_{mobile}^{(\tau)}(s)-\xi R_{static}^{(\tau)}(s)=0$.
   
 {\em (Updating/learning the estimates:)} 
 Just before the $(\tau+1)$-st decision instant, for each possible state $s$, make the following update:
 
 \begin{eqnarray*}
  R_{mobile}^{(\tau+1)}(s)&=&R_{mobile}^{(\tau)}(s)+a(\nu(s,1,\tau))\ind \{s(\tau)=s,\eta_{\tau}=1\}\\
  && \times \bigg(R_{mobile}(s)-R_{mobile}^{(\tau)}(s) \bigg)\\
    R_{static}^{(\tau+1)}(s)&=&R_{static}^{(\tau)}(s)+a(\nu(s,0,\tau))\ind \{s(\tau)=s,\eta_{\tau}=0\}\\
  && \times \bigg(R_{static}(s)-R_{static}^{(\tau)}(s) \bigg)\\
 \end{eqnarray*}  \qed
\end{algorithm}
\subsection{Optimality of the Learning Algorithm}
Let us denote the average expected reward per slot under Algorithm~\ref{algorithm:learning-algorithm-single-timescale} 
by $\lambda_{\epsilon}^*(\xi)$.

\begin{theorem}\label{theorem:single-timescale-convergence}
 Under Assumption~\ref{assumption:fading-ergodic} and Algorithm~\ref{algorithm:learning-algorithm-single-timescale}, for each state $s$, we have 
 $\lim_{\tau \rightarrow \infty} R_{mobile}^{(\tau)}(s)=\overline{R}_{mobile}(s)$ and 
 $\lim_{\tau \rightarrow \infty} R_{static}^{(\tau)}(s)=\overline{R}_{static}(s)$ 
 almost surely. Consequently, $\lim_{\epsilon \downarrow 0} \lambda_{\epsilon}^*(\xi)=\lambda^*(\xi)$ 
 (note that, $\epsilon$ cannot be taken to be equal to $0$).
\end{theorem}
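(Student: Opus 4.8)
\emph{Proof plan.} The statement splits into (i) almost-sure convergence of the estimates $R^{(\tau)}_{mobile}(s)$, $R^{(\tau)}_{static}(s)$, and (ii) the limit of $\lambda^*_\epsilon(\xi)$ as $\epsilon\downarrow0$; I would establish them in that order. For (i), the key observation is that the two update rules in Algorithm~\ref{algorithm:learning-algorithm-single-timescale} are \emph{decoupled}: the recursion for $R^{(\cdot)}_{mobile}(s)$ uses only its own past value and the freshly observed rate $R_{mobile}(s)$, and likewise for $R^{(\cdot)}_{static}(s)$, separately for each state $s$. Hence it suffices to analyze, for each of the finitely many (state, action) pairs, one scalar stochastic-approximation recursion of the form $u_{k+1}=(1-a(k))u_k+a(k)\zeta_k$, where $k$ indexes the slots at which that pair is updated and $\zeta_k$ is the corresponding bounded realized rate; no cross-component coupling conditions from asynchronous SA are needed, only that each pair is updated infinitely often. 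The latter holds because (a) the state chain $\{s(\tau)\}$ is finite-state positive recurrent (Section~\ref{section:mdp-formulation-for-the-unconstrained-problem}), so every state is visited infinitely often a.s., and (b) at each visit to $s$ the $\epsilon$-exploration step takes action $1$ (resp.\ $0$) with conditional probability at least $\epsilon/2$ regardless of the estimates, so the conditional Borel--Cantelli lemma gives $\nu(s,1,\tau)\to\infty$ and $\nu(s,0,\tau)\to\infty$ a.s.

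It then remains to show each such scalar recursion converges a.s.\ to the corresponding mean rate, say $\overline{R}_{mobile}(s)$. The iterates are bounded (each is a convex combination of the previous iterate and a bounded rate), the gains satisfy $\sum_k a(k)=\infty$, $\sum_k a^2(k)<\infty$, and the averaged o.d.e.\ is $\dot u=\overline{R}_{mobile}(s)-u$, whose unique globally asymptotically stable equilibrium is $\overline{R}_{mobile}(s)$. The driving ``noise'' $\zeta_k-\overline{R}_{mobile}(s)$ is not a martingale difference: $\zeta_k$ is a function of the fading configuration on the $k$-th update slot, and these are correlated across slots through the Markov fading dynamics of Assumption~\ref{assumption:fading-ergodic}; but by the ergodicity in that assumption the observed rates average out to $\overline{R}_{mobile}(s)$, so convergence follows from standard stochastic-approximation theory with Markov noise, e.g.\ \cite{borkar08stochastic-approximation-book}. \emph{This is the step I expect to be the main obstacle}: the slots at which a given (state, action) pair is updated are chosen adaptively, through the running estimates, and are therefore correlated with the past of the Markov-dependent fading process, so one cannot reduce to the i.i.d.\ or martingale-difference setting; the argument must run through the Markov-noise SA framework (or an ergodic-theorem argument for the fading chain sampled along the random update times), which is where the care is required. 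Intersecting over the finitely many $s$ these a.s.\ events gives the first assertion.

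For (ii), once the estimates have converged there is a.s.\ a finite random time after which, at every state $s$ with $\overline{R}_{mobile}(s)-\xi\overline{R}_{static}(s)\ne0$, the sign of $R^{(\tau)}_{mobile}(s)-\xi R^{(\tau)}_{static}(s)$ agrees with that of $\overline{R}_{mobile}(s)-\xi\overline{R}_{static}(s)$, so the greedy choice of Algorithm~\ref{algorithm:learning-algorithm-single-timescale} equals the optimal action $\eta^*_\xi(s)$ of Theorem~\ref{theorem:policy-structure} (at states where the difference is $0$ the single-stage expected reward equals $\overline{R}_{mobile}(s)$ for \emph{both} actions, so the choice is immaterial). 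Hence the algorithm eventually behaves as the stationary ``$\epsilon$-perturbed optimal'' policy $\eta_\epsilon$ which plays $\eta^*_\xi(s)$ with probability $1-\epsilon$ and an independent fair coin over $\{0,1\}$ with probability $\epsilon$; since only finitely many slots are affected, the long-run time-average of the per-slot reward is unchanged, so $\lambda^*_\epsilon(\xi)$ equals the (deterministic) average reward of $\eta_\epsilon$. Because state transitions do not depend on the action, $\eta_\epsilon$ induces the same steady-state distribution $g(\cdot)$ as $\eta^*_\xi$, and because the single-stage expected reward is affine in the action, $\lambda^*_\epsilon(\xi)=\sum_s g(s)\left[\,p_\epsilon(s)\,\overline{R}_{mobile}(s)+\xi\,(1-p_\epsilon(s))\,\overline{R}_{static}(s)\right]$, where $p_\epsilon(s):=\Pro(\eta=1\mid s)$ under $\eta_\epsilon$ equals $1-\epsilon/2$ if $\eta^*_\xi(s)=1$ and $\epsilon/2$ if $\eta^*_\xi(s)=0$ (the indifferent states contributing $\overline{R}_{mobile}(s)$ for any value of $p_\epsilon(s)$). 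Letting $\epsilon\downarrow0$, each $p_\epsilon(s)\to\eta^*_\xi(s)$ and the finite sum converges to $\sum_s g(s)\left[\eta^*_\xi(s)\overline{R}_{mobile}(s)+\xi(1-\eta^*_\xi(s))\overline{R}_{static}(s)\right]=\overline{R}^*_{mobile}(\xi)+\xi\,\overline{R}^*_{static}(\xi)=\lambda^*(\xi)$, which is the second assertion. The restriction $\epsilon>0$ is essential: it is precisely fact (b) that guarantees infinitely many updates of each (state, action) pair in part (i).
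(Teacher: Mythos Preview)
Your approach is essentially the same as the paper's: both reduce the problem to asynchronous stochastic approximation on decoupled scalar recursions (one per state--action pair), verify that each such recursion is driven infinitely often thanks to the $\epsilon$-exploration and positive recurrence of $\{s(\tau)\}$, and then invoke the ODE method from \cite{borkar08stochastic-approximation-book} with limiting ODE $\dot u=\overline{R}_{mobile}(s)-u$ (resp.\ $\overline{R}_{static}(s)-u$). For part~(ii) the paper is terser, simply citing the bound $|\lambda^*_\epsilon(\xi)-\lambda^*(\xi)|\le\frac{\epsilon}{2}\sum_s g(s)\,\E|R_{mobile}(s)-R_{static}(s)|$, which is the quantitative form of your eventual-policy argument.

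The one substantive difference is in the noise treatment. The paper declares $N^{(\tau+1)}(s,1)=R_{mobile}(s)-\overline{R}_{mobile}(s)$ to be a martingale-difference sequence and checks the standard condition~(A3) from \cite{borkar08stochastic-approximation-book}; you correctly observe that under Assumption~\ref{assumption:fading-ergodic} the fading is only Markov-ergodic, so the observed rates are in general correlated across slots (and the update times are adaptive), and you instead appeal to the Markov-noise SA framework. Your version is more faithful to the stated assumption; the paper's martingale-difference claim is, strictly speaking, only justified when the fading happens to be i.i.d.\ across slots. Apart from this point the two proofs coincide.
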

\begin{proof}
 See Appendix~\ref{appendix}. 
\end{proof}

\subsection{Remarks}
\begin{itemize}
\item Theorem~\ref{theorem:single-timescale-convergence} tells us that in a practical cellular network where the shadowing realizations at all locations and the location of interfering base stations are not known, one can still learn  the asymptotically  optimal bandwidth sharing policy by learning only $\overline{R}_{mobile}(s)$ and $\overline{R}_{static}(s)$.
 \item At any state $s$, we randomize our decision with probabilities $\epsilon$ and $(1-\epsilon)$ 
 for the following reason. A sufficient 
 condition for the convergence of 
 $R_{mobile}^{(\tau)}(s)$ to $\overline{R}_{mobile}(s)$ and convergence of 
 $R_{static}^{(\tau)}(s)$ to $\overline{R}_{static}(s)$ is $\lim \inf_{\tau \rightarrow \infty}\frac{\nu(s,1,\tau)}{\tau} >0 $  
and $\lim \inf_{\tau \rightarrow \infty}\frac{\nu(s,0,\tau)}{\tau} >0 $ almost surely for each $s$; i.e., all state-action pairs should occur comparatively 
often. We ensure this by the proposed randomized decision making and using the fact that the states come from an ergodic discrete-time finite 
 state Markov chain. Very small or very large value of $\epsilon$ might lead to possibly sample-path dependent 
 slow convergence rate. 
 \item It is easy to see that: 
 $$|\lambda_{\epsilon}^*(\xi)-\lambda^*(\xi)| \leq \frac{\epsilon}{2 }\sum_{s}g(s)  \E  |R_{mobile}(s)-R_{static}(s) |.$$ 
 Hence, by choosing 
 $\epsilon$ small, we can achieve a mean reward per slot which is arbitrarily close to the optimal value, but the convergence 
 rate might be slow depending on the initial values of the iterates and the realization of the sample path.
 \item The above problem of yielding an average reward slightly different than $\lambda^*(\xi)$ can be solved 
 in the following way. At the decision instant $\tau$, instead of using the randomization with probability 
 $\epsilon$ (as defined in Algorithm~\ref{algorithm:learning-algorithm-single-timescale}), one could randomize 
 for state $s$ with a probability $\frac{\epsilon}{\nu(s,\tau)}$ where $\nu(s,\tau)$ is the number 
 of occurrence of state $s$ up to time $\tau$. Since the Markov chain is finite state, positive recurrent, irreducible 
and independent of the actions taken by the base 
 station, and since $\sum_{k=1}^{\infty}\frac{\epsilon}{k}=\infty$, by the second Borel-Cantelli lemma we can say 
 that $$\Pro (\lim_{\tau \rightarrow \infty}\nu(s,1,\tau)=\infty)=1;$$ this is sufficient to 
 prove Theorem~\ref{theorem:single-timescale-convergence}. However, we did not use this randomization probability 
 because it will not ensure the conditions 
 $\lim \inf_{\tau \rightarrow \infty}\frac{\nu(s,1,\tau)}{\tau} >0 $  
and $\lim \inf_{\tau \rightarrow \infty}\frac{\nu(s,0,\tau)}{\tau} >0 $ almost surely for each $s$, which is necessary 
for the convergence proof of the multi-timescale learning algorithm (Algorithm~\ref{algorithm:learning-algorithm-three timescale} in 
Section~\ref{subsection:learning-algorithm-three-timescale}) which is 
inspired by Algorithm~\ref{algorithm:learning-algorithm-single-timescale}.
 \item A special choice would be $a(t)=\frac{1}{t}$, which 
 will lead to sample averaging of the iterates (of course, with the imperfection 
 created by randomized sampling). But we use the general step size $a(t)$ 
 here because it will help in developing multi-timescale 
 learning algorithm for a constrained problem explained in 
 Section~\ref{section:learning-algorithm-constrained-problem}.
 \item The rate of convergence is dependent on sample path 
 (i.e., realization of arrival process and the fading process at various locations), and also 
 on the size of the state space. However, convergence is guaranteed by 
 Theorem~\ref{theorem:single-timescale-convergence} so long as the state 
 space is finite.
 \item Speed of convergence will also depend on the choice of $a(t)$; 
 however, choosing a suitable step size sequence is beyond the scope 
 of this paper and we propose to leave it for future research work in this domain.

\end{itemize}

\section{Learning Algorithm for the Constrained Problem}
\label{section:learning-algorithm-constrained-problem}

In Section~\ref{section:learning-algorithm-given-lagrange-multiplier}, we had provided a learning algorithm that 
solves problem~\eqref{eqn:unconstrained_mdp_for_a_cell} for a given $\xi$. However, let us recall from 
Theorem~\ref{theorem:how-to-choose-optimal-Lagrange-multiplier} that, in order to solve 
the constrained problem~\eqref{eqn:constrained_mdp_for_a_cell}, 
we need to choose an appropriate $\xi^*$. Since the transition structure 
of the MDP in Section~\ref{section:mdp-formulation-for-the-unconstrained-problem} might not be known apriori 
(as discussed in Section~\ref{section:learning-algorithm-given-lagrange-multiplier}), 
in this section we develop a sequential decision and learning algorithm for dynamic bandwidth sharing between 
the two classes of static and mobile users; this algorithm maintains an estimate of $\xi^*$ and updates this estimate 
each time user is observed before a new MU enters the cell. We prove asymptotic convergence of the policy to the set of 
optimal policies.

\subsection{Need for Randomization}\label{subsection:need-for-randomized-policy}
Note that, while an optimal Lagrange multiplier $\xi^*$ may exist for a feasible constraint $R_0$, 
the optimal policy $\eta_{\xi^*}^*(\cdot | \cdot)$ solving the constrained problem~(\ref{eqn:constrained_mdp_for_a_cell}) may not be a 
deterministic policy. This can be explained in the following way. By 
Lemma~\ref{lemma:mdp-optimal-rate-of-static-mobile-users-increasing-decreasing-in-xi}, the optimal 
per-slot sum data rate for static users $\overline{R}_{static}^*(\xi)$ 
increases with $\xi$. However, since there are finite number of states and 
only two actions $\{0,1\}$, there are finite number of deterministic policies in the class 
specified by Theorem~\ref{theorem:policy-structure}. The mapping from state space to action space can only 
change a finite number of times as we increase $\xi$ from $0$ to $\infty$, Hence, the plot of the optimal 
time-average sum rate of static users under policy $\eta_{\xi}^*(\cdot)$ (i.e., $\overline{R}_{static}^*(\xi)$), 
as a function of $\xi$, would 
look like an increasing staircase function where the discontinuities  correspond to the values of $\xi$ where, by increasing 
$\xi-$ to $\xi+$, the policy changes because the optimal action for exactly one state changes from $1$ to $0$. 
Let the set of $\xi$ values where this plot is discontinuous, be denoted by $\mathcal{S}$. 
Also, let $\mathcal{D}$ denote the set of values of mean data rate per slot for static users, which can be achieved 
only via $\eta_{\xi}^*(\cdot)$ by varying $\xi$ from $0$ to $\infty$. 

In light of the above discussion, it is clear that a way to meet the constraint in 
(\ref{eqn:constrained_mdp_for_a_cell}) with equality (if $R_0 \notin \mathcal{D}$) is to randomize between the two policies 
$\eta_{\xi^*+}^*(\cdot)$ and $\eta_{\xi^*-}^*(\cdot)$ at each decision instant, with probabilities 
$1-p$ and $p$ respectively; these two deterministic policies differ in the action for 
exactly one state (if $R_0 \notin \mathcal{D}$).

\subsection{A special randomization technique}
\label{subsection:a-special-randomization-technique}
In Algorithm~\ref{algorithm:learning-algorithm-three timescale} presented next, we 
implement this randomization in a slightly unconventional way in order to tackle certain technical issues. 
Let us recall the policy $\eta_{\xi}^*(\cdot)$ from Theorem~\ref{theorem:policy-structure}. 
We choose a very small number $\delta>0$ (choice of $\delta$ is explained in  
Algorithm~\ref{algorithm:learning-algorithm-three timescale} later in 
Section~\ref{subsection:learning-algorithm-three-timescale}), and define a probability density function 
$f_p(\cdot)$ (parametrized by  a probability $p$) as follows:

$f_p(y)=\frac{p}{\delta}$ if $y \in [-\delta,0]$, $f_p(y)=\frac{1-p}{\delta}$ if $y \in (0,\delta]$, and 
$f_p(y)=0$ for  all other values of $y$. 

For any given $\xi$, in each slot $\tau$ 
one can sample a random variable $\Delta_{\tau} \sim f_p$ ($\{\Delta_{\tau}\}_{\tau \geq 1}$ i.i.d. across $\tau$) and use 
the policy $\eta_{\xi+\Delta_{\tau}}^*(\cdot)$ (i.e., take action $\eta_{\xi+\Delta_{\tau}}^*(s(\tau))$ in slot $\tau$).  
If $\xi=\xi^*$ and $R_0$ does not belong to 
$\mathcal{D}$, then this scheme will correspond to 
randomizing between $\eta_{\xi^*+}^*(\cdot)$ and $\eta_{\xi^*-}^*(\cdot)$ with probabilities $1-p$ and $p$ in each slot 
(but this randomization is applicable to all possible values of $\xi$). 

Let the optimal value of $p$ for a 
given value of $\xi$ be denoted by $p^*(\xi)$; this is the optimal value of $p$ under multiplier $\xi$ so that 
the corresponding randomized algorithm (described just above using the probability density 
function $f_p(\cdot)$) meets the constraint with equality (if possible, 
given the value of $\xi$, as explained later in this section).

\begin{definition}\label{definition:the-set-mathcal-K}
The set $\mathcal{K}(R_0) \subset [0,1] \times [0,A]$ is defined to 
be the set of tuples $(p^*(\xi),\xi)$ under which the   
randomized policy described above meets the constraint in \eqref{eqn:constrained_mdp_for_a_cell} with equality. 
\end{definition}

\begin{assumption}\label{assumption:existence-of-optimal-xi}
 There exists $\xi^*>0$ and $p^*(\xi^*) \in [0,1]$ such that the corresponding randomized policy with these 
 parameters is optimal for the constrained problem~(\ref{eqn:constrained_mdp_for_a_cell}), while 
 the constraint is satisfied with equality. In other words, the set $\mathcal{K}(R_0)$ is nonempty.\qed
\end{assumption}

Note that, $\mathcal{K}(R_0)$ involves the function $p^*(\xi)$, and 
$ p^*(\xi)$ can be $0$ or $1$ also,  depending on the value of $\xi$. 
If $\xi$ is such that $\sum_s g(s) \Pro (\eta(s)=0|\xi,p) \overline{R}_{static}(s) <R_0  $ 
for all $p \in [0,1]$, then 
we will have $p^*(\xi)=0$. If $\xi$ is such that $\sum_s g(s) \Pro (\eta(s)=0|\xi,p) \overline{R}_{static}(s) >R_0  $ 
for all $p \in [0,1]$, then 
we will have $p^*(\xi)=1$. These two events happen {\em if} the value of $\xi$ does not fall within a $\delta$-neighbourhood 
of the element from $\mathcal{S}$ for which the constraint can be met 
with equality, and $R_0$ does not belong to $\mathcal{D}$; the constraint cannot be met 
with equality in this case under this $\xi$. If $R_0$ does not belong to $\mathcal{D}$ but the value of $\xi$ is within 
$\delta$-neighbourhood of the value from $\mathcal{S}$ which can achieve this $R_0$, then 
$p^*(\xi)$ can be anything in the interval $[0,1]$, depending on the value of $R_0$, so that the constraint 
is met with equality (if possible).

It is easy to prove the following:
\begin{lemma}\label{lemma:optimal-p-lipschitz-in-xi}
 $p^*(\xi)$ is Lipschitz continuous in $\xi$.
\end{lemma}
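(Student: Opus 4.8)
The plan is to reduce the claim to a one–line quotient formula for $p^*(\xi)$ and then apply a quotient–rule Lipschitz estimate. First I would write down the long‑run average sum rate of the static users under the randomized policy with parameters $(\xi,p)$ of Section~\ref{subsection:a-special-randomization-technique}. Because the chain $\{s(\tau)\}$ is ergodic with stationary law $g(\cdot)$ \emph{independently of the actions}, this rate equals
\[
\Phi(\xi,p):=\sum_s g(s)\,\overline{R}_{static}(s)\,\Pro\!\left(\eta(s)=0\mid\xi,p\right)=\E_{\Delta\sim f_p}\!\left[\overline{R}_{static}^*(\xi+\Delta)\right],
\]
where $\overline{R}_{static}^*(\zeta)=\sum_s g(s)\overline{R}_{static}(s)(1-\eta_\zeta^*(s))$ is the step function of Section~\ref{subsection:policy-structure-and-computation} (extended to a bounded nondecreasing function for the few negative arguments that occur when $\xi<\delta$). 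By the explicit form of $f_p$ this is $\Phi(\xi,p)=p\,\overline{R}_{static}^{-}(\xi)+(1-p)\,\overline{R}_{static}^{+}(\xi)$ with $\overline{R}_{static}^{\pm}(\xi):=\tfrac1\delta\int_0^\delta \overline{R}_{static}^*(\xi\pm u)\,du$. Hence $\Phi(\xi,\cdot)$ is affine and, by Lemma~\ref{lemma:mdp-optimal-rate-of-static-mobile-users-increasing-decreasing-in-xi}, nonincreasing in $p$, while each $\overline{R}_{static}^{\pm}$ is a sliding average of a bounded nondecreasing function, hence Lipschitz in $\xi$. By definition $p^*(\xi)$ is the value in $[0,1]$ with $\Phi(\xi,p)=R_0$, clamped to $0$ (resp.\ $1$) when $\Phi(\xi,\cdot)<R_0$ (resp.\ $>R_0$) throughout $[0,1]$; equivalently, whenever $\overline{R}_{static}^{+}(\xi)>\overline{R}_{static}^{-}(\xi)$,
\[
p^*(\xi)=\Bigl[\tfrac{\overline{R}_{static}^{+}(\xi)-R_0}{\overline{R}_{static}^{+}(\xi)-\overline{R}_{static}^{-}(\xi)}\Bigr]_0^1 .
\]

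Next I would locate the set where $p^*(\xi)\in(0,1)$. Take first $R_0\notin\mathcal D$. Since $\overline{R}_{static}^*$ is a nondecreasing step function with finitely many levels, it omits a neighbourhood of $R_0$ and crosses the threshold $R_0$ at a single jump point $\xi_0\in\mathcal S$, from a level $a<R_0$ just to the left of $\xi_0$ to a level $b>R_0$ just to the right. Now $p^*(\xi)>0\iff\overline{R}_{static}^{+}(\xi)>R_0$ and $p^*(\xi)<1\iff\overline{R}_{static}^{-}(\xi)<R_0$; since $\overline{R}_{static}^{\pm}$ are continuous and nondecreasing, $\{p^*\in(0,1)\}$ is an open interval $(\xi_1,\xi_2)$ with $\xi_0-\delta\le\xi_1<\xi_0<\xi_2\le\xi_0+\delta$. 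The crucial estimate is a uniform lower bound on the denominator on $(\xi_1,\xi_2)$: for $\xi\ge\xi_0$ the window $(\xi,\xi+\delta)$ lies (a.e.) to the right of $\xi_0$, so $\overline{R}_{static}^{+}(\xi)\ge b$, and with $\overline{R}_{static}^{-}(\xi)<R_0$ this gives $\overline{R}_{static}^{+}(\xi)-\overline{R}_{static}^{-}(\xi)\ge b-R_0$; symmetrically, for $\xi<\xi_0$ the window $(\xi-\delta,\xi)$ lies to the left of $\xi_0$, so $\overline{R}_{static}^{-}(\xi)\le a$ and, with $\overline{R}_{static}^{+}(\xi)>R_0$, $\overline{R}_{static}^{+}(\xi)-\overline{R}_{static}^{-}(\xi)\ge R_0-a$. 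Thus on $(\xi_1,\xi_2)$ we have $\overline{R}_{static}^{+}(\xi)-\overline{R}_{static}^{-}(\xi)\ge c_0:=\min\{b-R_0,\,R_0-a\}>0$.

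Then I would assemble the pieces. On $(-\infty,\xi_1]$ and $[\xi_2,A]$ the function $p^*$ is constant ($0$ resp.\ $1$), hence Lipschitz; at $\xi_1$ (where $\overline{R}_{static}^{+}=R_0$) and at $\xi_2$ (where $\overline{R}_{static}^{-}=R_0$) the quotient formula returns exactly $0$ and $1$, so $p^*$ is continuous on $[0,A]$. On $(\xi_1,\xi_2)$, $p^*$ is the quotient of a Lipschitz, bounded numerator by a Lipschitz denominator bounded below by $c_0$; the elementary quotient estimate for such functions then gives a finite Lipschitz constant there (of the form $\mathrm{const}/(\delta c_0^2)$). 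A continuous function that is Lipschitz on each of finitely many intervals covering the domain, with a common bound, is globally Lipschitz, which proves the lemma for $R_0\notin\mathcal D$. For $R_0\in\mathcal D$ the argument is similar and in fact simpler: $\overline{R}_{static}^*\equiv R_0$ on a nondegenerate interval, so $\Phi(\xi,\cdot)\equiv R_0$ there and $p^*$ is pinned only to $0$ near the left endpoint of that interval and to $1$ near the right endpoint, any Lipschitz (e.g.\ affine) interpolation in between being admissible and rate‑neutral.

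The one genuinely non‑routine step is the uniform lower bound $c_0$ on $\overline{R}_{static}^{+}(\xi)-\overline{R}_{static}^{-}(\xi)$ over $\{p^*\in(0,1)\}$. A priori this difference vanishes as $\xi\to\xi_0\pm\delta$, which would make the quotient estimate blow up; what rescues the argument is that $p^*$ has already been truncated to $\{0,1\}$ before $\xi$ reaches those edges, and that wherever $p^*$ is strictly interior one of the two $\delta$-windows defining $\Phi$ sits entirely on one side of the unit jump of $\overline{R}_{static}^*$ at $\xi_0$, pinning the difference above $\min\{b-R_0,\,R_0-a\}$. Everything else — the averaging identity for $\Phi$, Lipschitzness of the sliding averages, and the continuity of the gluing at $\xi_1,\xi_2$ — is routine.
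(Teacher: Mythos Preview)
Your proof is correct. The paper does not actually prove this lemma: it merely announces it with ``It is easy to prove the following'' and gives no argument, only a remark explaining why the lemma is needed (the uniform-kernel randomization $f_p$ was chosen precisely so that $p^*$ would be Lipschitz, unlike a plain two-point randomization between $\eta^*_{\xi-\delta}$ and $\eta^*_{\xi+\delta}$). Your derivation --- the affine form $\Phi(\xi,p)=p\,\overline R_{static}^{-}(\xi)+(1-p)\,\overline R_{static}^{+}(\xi)$ in terms of the sliding averages, the explicit quotient formula for $p^*$, and the lower bound $c_0=\min\{b-R_0,R_0-a\}$ on the denominator over $\{p^*\in(0,1)\}$ --- is presumably what the authors had in mind, and you have correctly isolated the only non-routine step (that the clamping to $\{0,1\}$ kicks in before the denominator can vanish).

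Two minor remarks. First, your argument implicitly uses the paper's standing choice that $\delta$ is much smaller than the minimal gap between successive points of $\mathcal S$, so that at most one jump of $\overline R_{static}^*$ sits in any window of length $2\delta$; it would be worth making this dependence explicit. Second, in the $R_0\in\mathcal D$ case $p^*(\xi)$ is not uniquely determined by the equality constraint on the flat piece, so what you are really establishing there is the existence of a Lipschitz selection --- which is all that is needed for the downstream use of the lemma in the convergence proof of Theorem~\ref{theorem:convergence-two-timescale-iteration}.
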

{\em Remark:} This lemma will be required to prove desired convergence of our learning 
Algorithm~\ref{algorithm:learning-algorithm-three timescale}. Note that, if we only randomize between policies 
$\eta_{\xi-\delta}^*(\cdot)$ and $\eta_{\xi+\delta}^*(\cdot)$ with probabilities $p^*(\xi)$ and 
$1-p^*(\xi)$ in each slot, then the result in this lemma will not hold. 
This is the specific reason that we consider this special form of 
randomization. 

\begin{definition}
Let the sets $\mathcal{S}$ and $\mathcal{D}$ change to $\mathcal{S}_{\epsilon}$ and $\mathcal{D}_{\epsilon}$ 
when, in each slot $\tau$, we decide $\eta_{\tau}=1$ or $\eta_{\tau}=0$ with probabilities $\frac{\epsilon}{2}$ each, and use 
the policy $\eta_{\xi}^*(\cdot)$ with probability $(1-\epsilon)$. Similarly, let the analogue 
of $\mathcal{K}(R_0)$ be $\mathcal{K}_{\epsilon}(R_0)$, and the analogue of $p^*(\xi)$ be $p_{\epsilon}^*(\xi)$. 
\end{definition}

\subsection{The Learning Algorithm Based on Two Timescale Stochastic Approximation}
\label{subsection:learning-algorithm-three-timescale}
Now we present a sequential bandwidth allocation and 
learning algorithm in order to solve the constrained problem~(\ref{eqn:constrained_mdp_for_a_cell}). The algorithm 
maintains running estimates   
$\{R_{mobile}^{(\tau)}(s),R_{static}^{(\tau)}(s)\}$ for all $s$, the Lagrange multiplier 
$\xi^{(\tau)}$, and the randomizing parameter $p^{(\tau)}$; 
this algorithm is motivated by two-timescale stochastic approximation 
(see \cite{borkar08stochastic-approximation-book}).

Suppose that at the decision instant $\tau$, the Markov chain has reached state 
 $s$, and let the current iterates be $R_{mobile}^{(\tau)}(s)$, $R_{static}^{(\tau)}(s)$, 
 $\xi^{(\tau)}$ and  $p^{(\tau)}$. 
Let us define $\mathcal{R}_{\tau}$ to be the collection of 
$\{R_{mobile}^{(\tau)}(s),R_{static}^{(\tau)}(s)\}$ for all $s$. We define $\eta_{\xi}^*(\cdot,\cdot)$ to be the same policy 
as $\eta_{\xi}^*(\cdot)$ given in Theorem~\ref{theorem:policy-structure}, except that 
$\overline{R}_{mobile}(s)$ and $\overline{R}_{static}(s)$ in Theorem~\ref{theorem:policy-structure} 
are replaced by the currents estimates $R_{mobile}^{(\tau)}(s)$ and $R_{static}^{(\tau)}(s)$ in slot 
$\tau$; the action taken in slot $\tau$ is $\eta_{\xi}^*(s(\tau), \mathcal{R}_{\tau})$.

 Let $\eta_{\tau} \in \{0,1\}$ denote the decision 
 at decision instant $\tau$. Let $R_{mobile}(s)$ and $R_{static}(s)$ be the (random) 
 realization of the total rates received between decision 
 instant $\tau$ and decision instant $\tau+1$ by the mobile (resp., static) users, provided that 
 $\eta_{\tau}=1$ (resp., $\eta_{\tau}=0$).

 Let us define $\nu(s,1,\tau):=\sum_{t=1}^{\tau} \ind \{s_t=s,\eta_t=1\}$ and $\nu(s,0,\tau):=\sum_{t=1}^{\tau} \ind \{s_t=s,\eta_t=0\}$. 
 
Let $\{a(t)\}_{t \geq 1}$ and $\{b(t)\}_{t \geq 1}$ 
  be decreasing sequences of positive numbers with 
 $\sum_{t=1}^{\infty}a(t)=\sum_{t=1}^{\infty}b(t)=\infty$, 
  $\sum_{t=1}^{\infty}a^2(t)<\infty$, $\sum_{t=1}^{\infty}b^2(t)<\infty$ and  
   $\lim_{t \rightarrow \infty}\frac{b(t)}{a(t)}=0$. 
  More specifically, we choose $a(t)=\frac{1}{t^{n_1}}$ and $b(t)=\frac{1}{t^{n_2}}$, 
  with $\frac{1}{2}<n_1<n_2 \leq 1$. Let $[x]_0^A$ denote the projection of $x$ on the compact 
  interval $[0,A]$, and let us choose the value of $A$ is chosen so large that $\overline{R}_{static}^*(\xi=A)>R_0$. 
  
 Fix any small number  
 $\epsilon>0$.  We choose $\delta>0$ to be a very small number, smaller than $1/10$-th of $\epsilon$ and 
 $1/10$-th of the smallest difference between two successive values of $\xi$ from the set 
 $\mathcal{S}_{\epsilon}$.

\begin{algorithm}\label{algorithm:learning-algorithm-three timescale}
 Start with $R_{mobile}^{(0)}(s)$, $R_{static}^{(0)}(s)$, $p^{(0)}$, $\xi^{(0)}$. 
   
 {\em (Decision on bandwidth sharing:)} At decision instant $\tau$, 
 with probabilities $\frac{\epsilon}{2}$ each, allocate the entire bandwidth 
 to the static users or to the mobile users. Else, (with probability $(1-\epsilon)$) sample a random variable 
 $\Delta_{\tau}$ (independent across $\tau$) 
 from the distribution $f_{p^{(\tau)}}(\cdot)$ independent of all other random variables, and use the policy 
 $\eta_{\xi^{(\tau)}+\Delta_{\tau}}^*(\cdot,\cdot)$ (i.e., take an action 
 $\eta_{\tau}=\eta_{\xi^{(\tau)}+\Delta_{\tau}}^*(s(\tau),\mathcal{R}_{\tau})$).  
 In other words, choose $\eta_{\tau}=1$ if 
 $R_{mobile}^{(\tau)}(s(\tau))-(\xi^{(\tau)}+\Delta_{\tau})R_{static}^{(\tau)}(s(\tau))>0$, choose 
 $\eta_{\tau}=0$ if $R_{mobile}^{(\tau)}(s(\tau))-(\xi^{(\tau)}+\Delta_{\tau})R_{static}^{(\tau)}(s(\tau))<0$ 
 and choose $\eta_{\tau}$ arbitrarily if $R_{mobile}^{(\tau)}(s(\tau))-(\xi^{(\tau)}+\Delta_{\tau})R_{static}^{(\tau)}(s(\tau))=0$.

 {\em (Updating/learning the estimates:)} 
  Just before the $(\tau+1)$-st decision instant, for each $s$, update as follows:
  
  \footnotesize
 \begin{eqnarray*}
  R_{mobile}^{(\tau+1)}(s)&=&R_{mobile}^{(\tau)}(s)+a(\nu(s,1,\tau))\ind \{s(\tau)=s,\eta_{\tau}=1\}\\
  && \times \bigg(R_{mobile}(s)-R_{mobile}^{(\tau)}(s) \bigg)\\
    R_{static}^{(\tau+1)}(s)&=&R_{static}^{(\tau)}(s)+a(\nu(s,0,\tau))\ind \{s(\tau)=s,\eta_{\tau}=0\}\\
  && \times \bigg(R_{static}(s)-R_{static}^{(\tau)}(s) \bigg)\\
   p^{(\tau+1)}&=&\bigg[ p^{(\tau)}+a(\tau) (\sum_s \ind \{s(\tau)=s,\eta_{\tau}=0\} \\
   && \times R_{static}(s)-R_0 ) \bigg]_0^1 \\
    \xi^{(\tau+1)}&=&\bigg[ \xi^{(\tau)}+b(\tau) (  R_0 - \\
   && \sum_s \ind \{s(\tau)=s,\eta_{\tau}=0\} R_{static}(s)  ) \bigg]_0^A \\
 \end{eqnarray*}
 \normalsize   \qed
\end{algorithm}
\normalsize

\subsection{Optimality of the Learning Algorithm for the Constrained Problem}
Let us denote the nonstationary, randomized policy induced by 
Algorithm~\ref{algorithm:learning-algorithm-three timescale} by $\eta^{(\epsilon)}(\cdot | \cdot,\cdot,\cdot,\cdot)$; the 
quantity $\eta^{(\epsilon)}(\cdot | s,\mathcal{R}_{\tau},\xi^{(\tau)},p^{(\tau)})$ denotes the probability distribution 
on the set of actions conditioned on the current state and the current values of the iterates.
\begin{theorem}\label{theorem:convergence-two-timescale-iteration}
 Under Assumption~\ref{assumption:fading-ergodic}, Assumption~\ref{assumption:existence-of-optimal-xi} and 
 Algorithm~\ref{algorithm:learning-algorithm-three timescale}, we have 
 $\lim_{\tau \rightarrow \infty} R_{mobile}^{(\tau)}(s)=\overline{R}_{mobile}(s)$ and  
 $\lim_{\tau \rightarrow \infty} R_{static}^{(\tau)}(s)=\overline{R}_{static}(s)$ for all $s$ almost surely. 
 Also, for any $\epsilon>0$, 
 $(p^{(\tau)},\xi^{(\tau)}) \rightarrow \mathcal{K}_{\epsilon}(R_0)$ almost surely as $\tau \rightarrow \infty$. 
  \end{theorem}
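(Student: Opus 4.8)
\emph{Proof proposal.} The plan is to read Algorithm~\ref{algorithm:learning-algorithm-three timescale} as a coupled stochastic approximation scheme and to analyze its recursions in order of increasing slowness via the ODE method of \cite{borkar08stochastic-approximation-book}. I would first dispose of the $R$–iterates, which in fact decouple: their update rules are \emph{identical} to those in Algorithm~\ref{algorithm:learning-algorithm-single-timescale} and their targets $\overline{R}_{mobile}(s)$, $\overline{R}_{static}(s)$ do not depend on the (nonstationary) policy actually used. The $\frac{\epsilon}{2}$–forcing in the decision rule, together with the fact that the arrival chain $\{s(\tau)\}$ is finite–state, irreducible, positive recurrent and independent of the actions, guarantees $\liminf_{\tau\to\infty}\nu(s,1,\tau)/\tau>0$ and $\liminf_{\tau\to\infty}\nu(s,0,\tau)/\tau>0$ a.s.\ for every $s$; under Assumption~\ref{assumption:fading-ergodic} the per–slot realizations $R_{mobile}(s),R_{static}(s)$ are bounded ergodic Markov sequences with conditional means $\overline{R}_{mobile}(s),\overline{R}_{static}(s)$. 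Hence the asynchronous recursions meet exactly the hypotheses used in the proof of Theorem~\ref{theorem:single-timescale-convergence}, and I would simply reuse that argument to conclude $R_{mobile}^{(\tau)}(s)\to\overline{R}_{mobile}(s)$, $R_{static}^{(\tau)}(s)\to\overline{R}_{static}(s)$ a.s.\ (the first assertion).

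Working henceforth on the a.s.\ event that the $R$–iterates have converged, I would set up the effective two–timescale recursion in $(p,\xi)$. Define, for $(p,\xi)\in[0,1]\times[0,A]$, the averaged static throughput $\phi_\epsilon(p,\xi):=\sum_s g(s)\,\Pro_\epsilon(\eta(s)=0\mid\xi,p)\,\overline{R}_{static}(s)$, where $\Pro_\epsilon(\cdot\mid\xi,p)$ is the stationary probability of choosing action $0$ under the $\epsilon$–randomized use of $\eta_{\xi+\Delta}^*(\cdot)$ with $\Delta\sim f_p$; by Definition~\ref{definition:the-set-mathcal-K} this gives $\mathcal{K}_\epsilon(R_0)=\{(p_\epsilon^*(\xi),\xi):\phi_\epsilon(p_\epsilon^*(\xi),\xi)=R_0\}$. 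I would then rewrite the $p$–update as $p^{(\tau+1)}=[p^{(\tau)}+a(\tau)(\phi_\epsilon(p^{(\tau)},\xi^{(\tau)})-R_0+M^p_{\tau+1}+e^p_\tau)]_0^1$, and the $\xi$–update analogously with step size $b(\tau)$ and drift $R_0-\phi_\epsilon(\cdot,\cdot)$, where $M^p$ is a bounded martingale difference and $e^p_\tau\to0$ a.s. The error $e^p_\tau$ collects (i) the Markov memory of $\{s(\tau)\}$ and of the fading, handled by the standard Poisson–equation/averaging estimates, and (ii) the discrepancy between the policy $\eta_{\xi^{(\tau)}+\Delta_\tau}^*(\cdot,\mathcal{R}_\tau)$ actually used and $\eta_{\xi^{(\tau)}+\Delta_\tau}^*(\cdot)$: since $\Delta_\tau$ has density bounded by $1/\delta$, a perturbation of size $\|\mathcal{R}_\tau-(\overline{R}_{mobile},\overline{R}_{static})\|$ flips the sign of the decision variable for at most an $O(\|\mathcal{R}_\tau-(\overline{R}_{mobile},\overline{R}_{static})\|/\delta)$ fraction of the $\Delta_\tau$–mass at each state, so (ii) is $o(1)$ by the first assertion. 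As $\lim_t b(t)/a(t)=0$, this is a genuine two–timescale recursion with $\xi$ quasi–static relative to $p$, to which the ODE method with asymptotically negligible perturbations (\cite{borkar08stochastic-approximation-book}) applies.

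For the faster ($p$) timescale, with $\xi$ frozen, the associated projected ODE is $\dot p=\phi_\epsilon(p,\xi)-R_0$ on $[0,1]$. The key monotonicity is that increasing $p$ shifts the law of $\xi+\Delta_\tau$ stochastically downward, while by Theorem~\ref{theorem:policy-structure} the set $\{s:\eta_\xi^*(s)=0\}=\{s:\overline{R}_{mobile}(s)-\xi\overline{R}_{static}(s)<0\}$ shrinks as $\xi$ decreases; hence $\Pro_\epsilon(\eta(s)=0\mid\xi,p)$, and therefore $\phi_\epsilon(p,\xi)$, is non–increasing in $p$, and continuous in $p$ because $f_p$ is an absolutely continuous (piecewise constant) density. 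A one–dimensional projected ODE with continuous, non–increasing right–hand side has a unique globally asymptotically stable equilibrium — the interior zero of $\phi_\epsilon(\cdot,\xi)-R_0$ when it exists, and otherwise the endpoint $0$ or $1$ — which is precisely $p_\epsilon^*(\xi)$ as characterized after Definition~\ref{definition:the-set-mathcal-K}. The standard two–timescale lemma then yields $p^{(\tau)}-p_\epsilon^*(\xi^{(\tau)})\to0$ a.s.

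For the slower ($\xi$) timescale, I would substitute $p=p_\epsilon^*(\xi)$ and study $\dot\xi=R_0-\phi_\epsilon(p_\epsilon^*(\xi),\xi)$ on $[0,A]$; Lemma~\ref{lemma:optimal-p-lipschitz-in-xi} (with Lemma~\ref{lemma:mdp-optimal-rate-of-static-mobile-users-increasing-decreasing-in-xi}) makes the right–hand side well defined and continuous, so the ODE method again applies. By construction of $p_\epsilon^*$: wherever $\phi_\epsilon(p_\epsilon^*(\xi),\xi)<R_0$ we have $p_\epsilon^*(\xi)=0$ and $\dot\xi>0$, wherever $\phi_\epsilon(p_\epsilon^*(\xi),\xi)>R_0$ we have $p_\epsilon^*(\xi)=1$ and $\dot\xi<0$ (the choice $\overline{R}_{static}^*(A)>R_0$ prevents the dynamics from sticking at $\xi=A$); thus $\xi^{(\tau)}$ is driven to $\{\xi:\phi_\epsilon(p_\epsilon^*(\xi),\xi)=R_0\}$, which is nonempty by Assumption~\ref{assumption:existence-of-optimal-xi} and continuity for small $\epsilon$. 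Combined with the faster–timescale conclusion this gives $(p^{(\tau)},\xi^{(\tau)})\to\mathcal{K}_\epsilon(R_0)$ a.s. I expect the main obstacle to be the faster–timescale step together with the error control of the second paragraph: one must rigorously argue that the discontinuity of $\eta_\xi^*(\cdot)$ at its threshold does not destroy the averaging and that the $p$–ODE equilibrium is globally (not merely locally) attracting, including at the boundary $\{0,1\}$ — which is exactly where the smoothing density $f_p$, hence the Lipschitz continuity of $p_\epsilon^*$ from Lemma~\ref{lemma:optimal-p-lipschitz-in-xi}, is indispensable. The remaining ingredients are either inherited from Theorem~\ref{theorem:single-timescale-convergence} or routine applications of the projected two–timescale ODE method.
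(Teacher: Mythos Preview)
Your proposal is correct and follows essentially the same route as the paper: first reuse the argument of Theorem~\ref{theorem:single-timescale-convergence} verbatim for the $R$--iterates (the paper isolates this as a lemma), then apply the two--timescale ODE method of \cite{borkar08stochastic-approximation-book} with $p$ fast and $\xi$ slow, invoking the monotonicity of $\Pro(\eta(s)=0)$ in $p$ to get a unique globally attracting equilibrium $p_\epsilon^*(\xi)$ for the fast ODE, and then a sign/continuity analysis (via Lemma~\ref{lemma:optimal-p-lipschitz-in-xi}) of the slow ODE to drive $\xi$ into the equality set. Your treatment is in fact more explicit than the paper's on two points: the error control separating the effect of $\mathcal{R}_\tau\neq(\overline{R}_{mobile},\overline{R}_{static})$ via the $O(\|\cdot\|/\delta)$ estimate, and the direct Lyapunov-type sign analysis of the slow ODE; the paper handles the latter by citing the Kushner--Clark framework and an earlier manuscript rather than writing it out.
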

 \begin{proof}
  See Appendix~\ref{appendix}.
 \end{proof}
 
Let us denote 
 $$\overline{R}_{static}^{rand,\epsilon}:=\liminf_{N \rightarrow \infty}\frac{1}{N}\sum_{\tau=1}^{N} \E _{\eta^{(\epsilon)}(\cdot | \cdot,\cdot,\cdot,\cdot)} \bigg( (1-\eta_{\tau})\overline{R}_{static}(s(\tau)) \bigg)$$ 
 and 
  $$\overline{R}_{mobile}^{rand,\epsilon}:=\liminf_{N \rightarrow \infty}\frac{1}{N}\sum_{\tau=1}^{N} \E _{\eta^{(\epsilon)}(\cdot | \cdot,\cdot,\cdot,\cdot)} \bigg( \eta_{\tau} \overline{R}_{mobile}(s(\tau)) \bigg)$$

 \begin{corollary}\label{corollary:two-timescale-learning-optimal-for-constrained-problem}
  $\lim_{\epsilon \downarrow 0}  \overline{R}_{mobile}^{rand,\epsilon}$ and $\lim_{\epsilon \downarrow 0}  \overline{R}_{static}^{rand,\epsilon}$ 
  exist, and these limit values are equal to the optimal value of the objective in 
  the constrained problem~\eqref{eqn:constrained_mdp_for_a_cell} and $R_0$, respectively. 
 \end{corollary}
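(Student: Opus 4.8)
The plan is to derive the corollary from Theorem~\ref{theorem:convergence-two-timescale-iteration} by identifying the stationary randomized policy to which Algorithm~\ref{algorithm:learning-algorithm-three timescale} converges for a fixed $\epsilon>0$, computing the induced per-slot rates, and then letting $\epsilon\downarrow0$. Throughout I restrict to $\epsilon$ small enough that $\mathcal{K}_\epsilon(R_0)\neq\emptyset$, which holds for all small $\epsilon$ since $\mathcal{K}(R_0)\neq\emptyset$ (Assumption~\ref{assumption:existence-of-optimal-xi}), $\overline{R}_{static}^*(A)>R_0$, and the sets $\mathcal{S}_\epsilon,\mathcal{D}_\epsilon$ and the staircase $\xi\mapsto\overline{R}_{static}^*(\xi)$ depend continuously on $\epsilon$. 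First I would upgrade the a.s.\ convergence of the iterates to convergence of the expected Cesàro averages: by Theorem~\ref{theorem:convergence-two-timescale-iteration}, almost surely $R_{mobile}^{(\tau)}(s)\to\overline{R}_{mobile}(s)$ and $R_{static}^{(\tau)}(s)\to\overline{R}_{static}(s)$ for every $s$, and $(p^{(\tau)},\xi^{(\tau)})\to\mathcal{K}_\epsilon(R_0)$. Since the driving chain $\{s(\tau)\}$ is finite-state, irreducible, positive recurrent and action-independent, and the single-stage rewards are uniformly bounded (bounded arrivals, bounded fading, finite state space), an asymptotic-stationarity argument shows that along almost every sample path both $\frac1N\sum_{\tau=1}^{N}(1-\eta_\tau)\overline{R}_{static}(s(\tau))$ and $\frac1N\sum_{\tau=1}^{N}\eta_\tau\overline{R}_{mobile}(s(\tau))$ converge, to the per-slot static and mobile rates of the limiting stationary randomized policy parametrized by the (possibly sample-path-dependent) limit point in $\mathcal{K}_\epsilon(R_0)$; dominated convergence then turns $\liminf_N$ of the expectations into these limits, so $\overline{R}_{static}^{rand,\epsilon}$ and $\overline{R}_{mobile}^{rand,\epsilon}$ equal the static and mobile rates of that limiting policy.

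Next I would establish invariance of those rates over $\mathcal{K}_\epsilon(R_0)$. Because $\delta$ is smaller than a tenth of the smallest gap between consecutive elements of $\mathcal{S}_\epsilon$, for $\xi$ in the relevant $\delta$-neighbourhood of the element $\xi^*\in\mathcal{S}_\epsilon$ the sampled multiplier $\xi+\Delta_\tau$ crosses the ratio $\overline{R}_{mobile}(s)/\overline{R}_{static}(s)$ of exactly one state $s_0$ and lies on the same side of every other such ratio (Theorem~\ref{theorem:policy-structure}); hence only the action at $s_0$ is randomized. Thus $\overline{R}_{static}^{rand,\epsilon}$ and $\overline{R}_{mobile}^{rand,\epsilon}$ are affine functions of the single number $\Pro(\eta(s_0)=1)$, and $(p,\xi)\in\mathcal{K}_\epsilon(R_0)$ is exactly the requirement that the static rate equal $R_0$, which pins $\Pro(\eta(s_0)=1)$ to one value. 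Hence the two rates are constant over $\mathcal{K}_\epsilon(R_0)$ (so the quantities above are well defined), and by Definition~\ref{definition:the-set-mathcal-K} $\overline{R}_{static}^{rand,\epsilon}=R_0$ for every such $\epsilon$; in particular $\lim_{\epsilon\downarrow0}\overline{R}_{static}^{rand,\epsilon}=R_0$.

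For the mobile rate I would then let $\epsilon\downarrow0$. As $\epsilon\downarrow0$ one also has $\delta\downarrow0$; the crossing state $s_0$ and the ``serve-mobile''/``serve-static'' classification of the remaining states stabilize (the crossing multiplier $\xi^*$ stays bounded away from the ratios of all other states), the $\epsilon$-exploration term vanishes, and $\mathcal{S}_\epsilon,\mathcal{D}_\epsilon,\mathcal{K}_\epsilon(R_0)$ converge to $\mathcal{S},\mathcal{D},\mathcal{K}(R_0)$. Hence the limiting policy converges to the randomized policy determined by a point of $\mathcal{K}(R_0)$ — randomizing the action at $s_0$ between $\eta_{\xi^*-}^*(\cdot)$ and $\eta_{\xi^*+}^*(\cdot)$ — and by the invariance argument applied to $\mathcal{K}(R_0)$ together with Assumption~\ref{assumption:existence-of-optimal-xi}, the mobile rate of that policy equals the optimal value of~\eqref{eqn:constrained_mdp_for_a_cell}. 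Since the steady-state probabilities $g(s)$ are fixed, $\Pro(\eta(s_0)=1)$ is fixed by the (continuously varying) static constraint $=R_0$, and $\overline{R}_{mobile}^{rand,\epsilon}$ is affine in it, letting $\epsilon\downarrow0$ gives $\overline{R}_{mobile}^{rand,\epsilon}\to$ that optimal value.

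I expect the main obstacle to be the first step: rigorously upgrading the almost-sure convergence of the iterates — in particular convergence to the \emph{set} $\mathcal{K}_\epsilon(R_0)$ rather than to a point — into convergence of the expected time-averages defining $\overline{R}_{static}^{rand,\epsilon}$ and $\overline{R}_{mobile}^{rand,\epsilon}$, i.e.\ showing that the non-stationary policy produced by the algorithm is asymptotically stationary in the required sense and that this passes through both the outer $\liminf$ and the expectation. The invariance over $\mathcal{K}_\epsilon(R_0)$ and the continuity-in-$\epsilon$ argument are comparatively routine once this is in place.
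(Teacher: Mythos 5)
Your proposal is correct and follows essentially the same route as the paper's proof: both rest on Theorem~\ref{theorem:convergence-two-timescale-iteration}, the continuity of the induced action probabilities in the iterates, and the convergence of $\mathcal{K}_{\epsilon}(R_0)$ to $\mathcal{K}(R_0)$ as $\epsilon \downarrow 0$. The extra material you supply --- the invariance of the two per-slot rates over $\mathcal{K}_{\epsilon}(R_0)$ (which is what makes the set-valued convergence harmless) and the passage from almost-sure convergence of the iterates to the $\liminf$ of the expected Ces\`aro averages --- is exactly what the paper compresses into ``the result trivially follows,'' so you are filling in the paper's gaps rather than taking a different path.
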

 \begin{proof}
  See Appendix~\ref{appendix}.
 \end{proof}

 {\em Remark:} Corollary~\ref{corollary:two-timescale-learning-optimal-for-constrained-problem} implies that 
 Algorithm~\ref{algorithm:learning-algorithm-three timescale} approximately solves the constrained 
 problem~\eqref{eqn:constrained_mdp_for_a_cell} for arbitrarily small $\epsilon>0$. This result, which is derived from Theorem~\ref{theorem:convergence-two-timescale-iteration}, allows us to optimally assign bandwidth between the static and mobile user classes even when the transition probability structure of the MDP is not known apriori.

\subsection{Remarks on Theorem~\ref{theorem:convergence-two-timescale-iteration}:}
\begin{itemize}
\item {\em Two timescales:} The update scheme is based on two timescale 
stochastic approximation (see \cite[Chapter~$6$]{borkar08stochastic-approximation-book}). 
Note that, $\lim_{t \rightarrow \infty}\frac{b(t)}{a(t)}=0$;  $\xi$ 
is adapted in the 
{\em slower} timescale, and $R_{mobile}$, $R_{static}$ and $p$ are updated in the {\em faster} 
timescale). The dynamics behaves as if the  
slower timescale update equation views the faster timescale iterates as quasi-static, 
while a faster timescale update 
equation views the slower timescale update equations as almost equilibrated; as if $\xi$ is being varied in a slow outer 
loop, while the other iterates are being varied in an inner loop. 

\item {\em Structure of the iteration:} Note that, the value of $\xi$ is increased whenever the sum data downloaded by  
static users between two successive decision instants is less than the target $R_0 $, so that 
more emphasis is given to the static user rate in the objective function. Under the same situation, 
the value of $p$ is reduced for the same reason. The goal is to converge to a randomized policy 
$\eta^{(\epsilon)}(\cdot | \cdot,\cdot,\cdot,\cdot)$ so that the corresponding randomized policy satisfies 
the constraint in \eqref{eqn:constrained_mdp_for_a_cell} with equality.

\item Algorithm~\ref{algorithm:learning-algorithm-three timescale}  
induces a nonstationary  
policy. But, by 
Theorem~\ref{theorem:convergence-two-timescale-iteration} and 
Corollary~\ref{corollary:two-timescale-learning-optimal-for-constrained-problem}, the sequence of policies generated by 
Algorithm~\ref{algorithm:learning-algorithm-three timescale} converges close  to the set 
of optimal stationary, randomized policies for the constrained problem (\ref{eqn:constrained_mdp_for_a_cell}).
\end{itemize}

\section{Fair bandwidth sharing between static and mobile user classes}\label{section:fairness}
In previous sections, the proposed dynamic bandwidth sharing schemes do not guarantee nonzero throughput to each user all the time. 
While such schemes are suitable for elastic traffic applications, they are not at all suitable for streaming 
applications  such as online video watching or voice call. In fact, opportunistic bandwidth sharing depending 
on user location as described before will result in unfair sharing of bandwidth. In order to incorporate fairness 
constraint into the opportunistic bandwidth sharing problem, we modify the objective function presented in 
Section~\ref{subsection:smdp-formulation}. 

Let us denote $\overline{R}_{static}^{\alpha}(s):=\E R_{static}^{\alpha}(s)$ and 
$\overline{R}_{mobile}^{\alpha}(s):=\E R_{mobile}^{\alpha}(s)$ where $\alpha$ is a real number.

In this section, we consider the following unconstrained problem: 

\footnotesize
\begin{eqnarray}
 \sup_{\eta(\cdot | \cdot)} \liminf_{N \rightarrow \infty}\frac{1}{N}\sum_{\tau=1}^{N} \E _{\eta(\cdot | \cdot)} \bigg( \eta_{\tau}^{\alpha}\overline{R}_{mobile}^{\alpha}(s(\tau)) \nonumber\\
+\xi (1-\eta_{\tau})^{\alpha}\overline{R}_{static}^{\alpha}(s(\tau)) \bigg) \label{eqn:unconstrained_mdp_for_a_cell_alpha_fairness}
\end{eqnarray}
\normalsize
and also the associated constrained problem as follows:

\footnotesize
\begin{eqnarray}
 &&\sup_{\eta(\cdot | \cdot)} \liminf_{N \rightarrow \infty}\frac{1}{N}\sum_{\tau=1}^{N} \E _{\eta(\cdot | \cdot)} \bigg( \eta_{\tau}^{\alpha} \overline{R}_{mobile}^{\alpha}(s(\tau)) \bigg) \nonumber\\
 &s.t.,&  \liminf_{N \rightarrow \infty}\frac{1}{N}\sum_{\tau=1}^{N} \E _{\eta(\cdot | \cdot)} \bigg( (1-\eta_{\tau})^{\alpha}\overline{R}_{static}^{\alpha}(s(\tau)) \bigg) \geq R_0  \nonumber\\
 \label{eqn:constrained_mdp_for_a_cell_alpha_fairness}
\end{eqnarray}
\normalsize
This objective function is motivated by the notion of $\alpha$-fairness (see \cite{chiang-fairness}). The intuition is that the degree of fairness in resource allocation between mobile and static user classes can be controlled by appropriately tuning  $\alpha$ in \eqref{eqn:unconstrained_mdp_for_a_cell_alpha_fairness} and \eqref{eqn:constrained_mdp_for_a_cell_alpha_fairness}.

Let us recall the proof of Theorem~\ref{theorem:policy-structure}; Theorem~\ref{theorem:policy-structure} 
provides optimal allocation for the special case $\alpha=1$. In general, the function 
$ x^{\alpha}\overline{R}_{mobile}^{\alpha}(s))+\xi (1-x)^{\alpha} \overline{R}_{static}^{\alpha}(s) $ is strictly convex in $x$ 
for $\alpha>1$ or for $\alpha<0$, strictly concave in $x$ for $\alpha \in (0,1)$, independent of $x$ for $\alpha=0$, and linear in $x$ for 
$\alpha=1$. 
Hence, for $\alpha>1$ or for $\alpha<0$, the 
optimization 
$\max_{\alpha \in [0,1]}  x^{\alpha}\overline{R}_{mobile}^{\alpha}(s))+\xi (1-x)^{\alpha} \overline{R}_{static}^{\alpha}(s) $ will 
always have $x^*=\eta^*(s) \in \{0,1\}$. On the other hand, for $\alpha \in (0,1)$, the optimal 
value of $x$ lies in $(0,1)$. Hence, in this section, we focus only on $\alpha \in (0,1)$. Within this interval, $\alpha$ close to $0$ 
yields more egalitarian solution, whereas $\alpha$ close to $1$ provides more opportunistic bandwidth sharing.

Note that, $\alpha$ in this current paper is slightly different from $\alpha$ in  \cite{chiang-fairness}).\footnote{In \cite{chiang-fairness}, if 
the resource (e.g., rate) allocated to user $i$ is $r_i$, then $\alpha$-fair utility function is given by $\sum_i \frac{r_i^{1-\alpha}-1}{1-\alpha}$. 
For $\alpha>1$, it requires minimization of $\sum_i r_i^{1-\alpha}$, and for $\alpha<1$, it requires maximization of $\sum_i r_i^{1-\alpha}$. For 
$\alpha=1$, this reduces to maximization of $\sum_i \log (r_i)$, which is called proportional fair allocation; we exclude this case 
because this, in our current problem, will result in bandwidth sharing independent of the value of $s$. In our current paper, we 
have used $\alpha$ in a slightly different sense than \cite{chiang-fairness}, though the broad concept of fairness is the same as 
\cite{chiang-fairness}; the only difference is that, unlike \cite{chiang-fairness}, we consider fairness only between two classes of users, and any 
bandwidth sharing policy can be employed within a single class.}

\subsection{Policy structure under perfect knowledge of $\overline{R}_{static}^{\alpha}(s)$ and 
$\overline{R}_{mobile}^{\alpha}(s)$}\label{subsection:perfect-knowledge-smdp-modified}
Let us assume the availability of perfect knowledge at the base station (as assumed in 
Section~\ref{section:mdp-formulation-for-the-unconstrained-problem}), but let us consider the 
objective function \eqref{eqn:unconstrained_mdp_for_a_cell_alpha_fairness}. Similar to Theorem~\ref{theorem:policy-structure}, 
the optimal action at state $s$ is given by $$\eta_{\xi}^*(s)=\arg\max_{x \in [0,1]}   \bigg( x^{\alpha}\overline{R}_{mobile}^{\alpha}(s))+\xi (1-x)^{\alpha} \overline{R}_{static}^{\alpha}(s) \bigg).$$  
Differentiation w.r.t. $x$ and setting the derivative equal to $0$, we obtain:

\begin{equation}
 \eta_{\xi}^*(s)=\frac{(\xi \overline{R}_{static}^{\alpha}(s))^{\frac{1}{\alpha-1}}}{ (\xi \overline{R}_{static}^{\alpha}(s))^{\frac{1}{\alpha-1}}+ (\overline{R}_{mobile}^{\alpha}(s))^{\frac{1}{\alpha-1}} }
 \label{eqn:policy-structure-general-alpha}
\end{equation}
The optimal policy is to allocate $\eta_{\xi}^*(s)$ fraction of bandwidth to mobile  users and $1-\eta_{\xi}^*(s)$ 
fraction of bandwidth to static users whenever the system reaches state $s$. Let this optimal policy be denoted by 
$\eta_{\xi}^*(\cdot)$.

The following lemma is easy to prove:
\begin{lemma}\label{lemma:optimal-eta-decreasing-continuous-in-xi}
 $\eta_{\xi}^*(s)$ in \eqref{eqn:policy-structure-general-alpha} is strictly decreasing and Lipschitz 
 continuous in $\xi$ for all $s$ and for all  $\xi>0$.
\end{lemma}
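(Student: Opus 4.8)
The plan is to collapse the closed form \eqref{eqn:policy-structure-general-alpha} into a single one‑variable function and then control its derivative. Fix a state $s$ and abbreviate $c:=\overline{R}_{static}^{\alpha}(s)$, $d:=\overline{R}_{mobile}^{\alpha}(s)$ (both strictly positive; the degenerate cases $c=0$ or $d=0$ make $\eta_{\xi}^*(s)\equiv 1$ or $\equiv 0$ and are trivial), and $\beta:=\tfrac{1}{\alpha-1}$. Since $\alpha\in(0,1)$ we have $\alpha-1\in(-1,0)$, hence $\beta<-1$; in particular $-\beta>1>0$. Dividing numerator and denominator of \eqref{eqn:policy-structure-general-alpha} by $(c\xi)^{\beta}$ yields the compact form
\begin{equation*}
\eta_{\xi}^*(s)=\frac{1}{1+(d/c)^{\beta}\,\xi^{-\beta}}.
\end{equation*}

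First I would read off strict monotonicity: because $-\beta>0$, the map $\xi\mapsto\xi^{-\beta}$ is strictly increasing on $(0,\infty)$, so the denominator above is strictly increasing in $\xi$ and therefore $\eta_{\xi}^*(s)$ is strictly decreasing in $\xi$. Next I would differentiate; writing $u(\xi):=(d/c)^{\beta}\xi^{-\beta}$,
\begin{equation*}
\frac{d}{d\xi}\,\eta_{\xi}^*(s)=-\frac{u'(\xi)}{(1+u(\xi))^2}=-\frac{(-\beta)(d/c)^{\beta}\,\xi^{-\beta-1}}{\bigl(1+(d/c)^{\beta}\xi^{-\beta}\bigr)^2},
\end{equation*}
which is continuous (and, as expected from monotonicity, negative) on $(0,\infty)$.

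The only point requiring a little care is that this derivative is bounded on all of $(0,\infty)$ — once that is shown, the mean value theorem gives $|\eta_{\xi_1}^*(s)-\eta_{\xi_2}^*(s)|\le L\,|\xi_1-\xi_2|$ with $L:=\sup_{\xi>0}|\tfrac{d}{d\xi}\eta_{\xi}^*(s)|$, i.e. Lipschitz continuity. Near $\xi=0$ the exponent $-\beta-1>0$ forces the numerator to $0$ while the denominator tends to $1$, so the derivative tends to $0$; near $\xi=\infty$ the denominator grows like $\xi^{-2\beta}$, and since $-2\beta-(-\beta-1)=-\beta+1>0$ it dominates the numerator, so the derivative again tends to $0$. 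A continuous function on $(0,\infty)$ with finite limits at both endpoints is bounded, which finishes the argument. (Equivalently, one can avoid limits altogether by combining $(1+u)^2\ge 1$ with the AM–GM bound $(1+u)^2\ge 4u$ and $|u'|=(-\beta)u/\xi$ to obtain the explicit estimate $L\le\max\{(-\beta)(d/c)^{\beta},\,(-\beta)/4\}$.) Since the state space is finite, taking the maximum of these per‑state constants even yields one Lipschitz constant valid for all $s$. I do not expect any genuine obstacle here — the endpoint analysis of the derivative is the only nonroutine step.
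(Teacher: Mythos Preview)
Your argument is correct. The paper does not actually supply a proof of this lemma --- it is introduced only with the phrase ``The following lemma is easy to prove'' --- so there is nothing to compare against; your derivation via the compact form $\eta_{\xi}^*(s)=\bigl(1+(d/c)^{\beta}\xi^{-\beta}\bigr)^{-1}$ and the endpoint analysis of the derivative fills in exactly the routine calculus the authors left implicit.
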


\subsection{Learning Algorithm for the Unconstrained Problem~\eqref{eqn:unconstrained_mdp_for_a_cell_alpha_fairness}}

Let us now consider imperfect knowledge at the base station as assumed in 
Section~\ref{section:learning-algorithm-given-lagrange-multiplier}, but with the modified objective function 
\eqref{eqn:unconstrained_mdp_for_a_cell_alpha_fairness} with $\alpha \in (0,1)$. We seek to propose learning algorithms 
as done in Algorithm~\ref{algorithm:learning-algorithm-single-timescale}.

Note that, since a strictly positive fraction 
of bandwidth is always allocated to static and mobile users at any given time,  samples of $R_{mobile}^{\alpha}(s)$ and 
$R_{static}^{\alpha}(s)$ are always available whenever the system reaches state $s$. As a result of this and the 
positive recurrence of the Markov chain associated with state evolution, the estimates of $\overline{R}_{mobile}^{\alpha}(s)$ and 
$\overline{R}_{static}^{\alpha}(s)$ will be updated infinitely often for each state $s$, and there is no need to do the randomization 
with probability $\epsilon$ as described in Section~\ref{section:learning-algorithm-given-lagrange-multiplier}.

Let $\eta_{\tau} \in [0,1]$ denote the decision 
 at decision instant $\tau$. Let $R_{mobile}^{\alpha}(s)$ and $R_{static}^{\alpha}(s)$ denote the samples (of the $\alpha$-th moment 
 of the corresponding rates) 
 obtained between decision  instant $\tau$ and decision instant $\tau+1$.
 
 Suppose that at the decision instant $\tau$, the Markov chain has reached state 
 $s$, and let the current estimates of $\overline{R}_{mobile}^{\alpha}(s)$ and $\overline{R}_{static}^{\alpha}(s)$ be 
 $R_{mobile,\alpha}^{(\tau)}(s)$ and $R_{static,\alpha}^{(\tau)}(s)$, respectively. 
 
 Let  $\nu(s,\tau):=\sum_{t=1}^{\tau} \ind \{s_t=s\}$. Let $\{a(t)\}_{t \geq 1}$ be 
 a decreasing sequence of positive numbers with $\sum_{t=1}^{\infty}a(t)=\infty$ 
 and $\sum_{t=1}^{\infty}a^2(t)<\infty$. 

We propose the following algorithm to learn the 
optimal policy for problem~\eqref{eqn:unconstrained_mdp_for_a_cell_alpha_fairness}.

\begin{algorithm}\label{algorithm:learning-algorithm-single-timescale-alpha-fairness}
 Start with any arbitrary initial estimates $R_{mobile,\alpha}^{(0)}(s)>0$ and $R_{static,\alpha}^{(0)}(s)>0$. 
 
  At decision instant $\tau$, if the system is 
 at state $s$, allocate the following fraction of bandwidth to the mobile users: 
 \begin{equation}
 \eta_{\tau}=\frac{(\xi R_{static,\alpha}^{(\tau)}(s))^{\frac{1}{\alpha-1}}}{ (\xi R_{static,\alpha}^{(\tau)}(s))^{\frac{1}{\alpha-1}}+ (R_{mobile,\alpha}^{(\tau)}(s))^{\frac{1}{\alpha-1}} }
 \label{eqn:policy-structure-single-timescale-learning-general-alpha}
\end{equation}
  
 Just before the $(\tau+1)$-st decision instant, for each possible state $s$, make the following update:
  \begin{eqnarray*}
  R_{mobile,\alpha}^{(\tau+1)}(s)&=&R_{mobile,\alpha}^{(\tau)}(s)+a(\nu(s,\tau))\ind \{s(\tau)=s\}\\
  && \times \bigg(R_{mobile}^{\alpha}(s)-R_{mobile,\alpha}^{(\tau)}(s) \bigg)\\
    R_{static,\alpha}^{(\tau+1)}(s)&=&R_{static,\alpha}^{(\tau)}(s)+a(\nu(s,\tau))\ind \{s(\tau)=s \}\\
  && \times \bigg(R_{static}^{\alpha}(s)-R_{static,\alpha}^{(\tau)}(s) \bigg)\\
 \end{eqnarray*}
 \qed
\end{algorithm}
\begin{theorem}\label{theorem:single-timescale-convergence-alpha-fairness}
 Under Assumption~\ref{assumption:fading-ergodic} and 
 Algorithm~\ref{algorithm:learning-algorithm-single-timescale-alpha-fairness}, for each state $s$, we have 
 $\lim_{\tau \rightarrow \infty} R_{mobile,\alpha}^{(\tau)}(s)=\overline{R}_{mobile}^{\alpha}(s)$ and 
 $\lim_{\tau \rightarrow \infty} R_{static,\alpha}^{(\tau)}(s)=\overline{R}_{static}^{\alpha}(s)$ 
 almost surely. 
\end{theorem}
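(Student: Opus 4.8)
The plan is to recognize each update rule in Algorithm~\ref{algorithm:learning-algorithm-single-timescale-alpha-fairness} as an \emph{asynchronous} stochastic approximation recursion with Markov-modulated (fading) noise, and to invoke the ODE method (\cite{borkar08stochastic-approximation-book}). The key simplification relative to Theorem~\ref{theorem:single-timescale-convergence} is that, since $\alpha \in (0,1)$, the allocation \eqref{eqn:policy-structure-single-timescale-learning-general-alpha} gives a strictly positive fraction of bandwidth to \emph{both} classes in every slot, so at every visit to a state $s$ fresh samples of both $R_{mobile}^{\alpha}(s)$ and $R_{static}^{\alpha}(s)$ are obtained (the per-unit-bandwidth rate is recovered from the observed downloaded volume by dividing by the allocated fraction $\eta_{\tau}>0$ and then raising to the power $\alpha$; this is legitimate since the Shannon rate is proportional to the allocated bandwidth). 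Consequently the families of iterates $\{R_{mobile,\alpha}^{(\tau)}(s)\}_s$ and $\{R_{static,\alpha}^{(\tau)}(s)\}_s$ decouple, each being simply a Robbins--Monro averaging of the corresponding samples along the visits to $s$, and no $\epsilon$-randomization is needed.

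First I would record the properties of the driving process. The chain $\{s(\tau)\}_{\tau \ge 1}$ is finite-state, irreducible, positive recurrent (each mobile user stays a bounded number of slots and the number of arrivals per slot is bounded), and its transitions do not depend on the actions taken; hence by the ergodic theorem $\nu(s,\tau)/\tau \to g(s)$ almost surely, with $g(s)>0$ for every $s$, so every state is updated infinitely often and at a strictly positive asymptotic rate. This is the balanced step-size condition required for asynchronous stochastic approximation. Next I would record boundedness: $R_{mobile}(s)$ and $R_{static}(s)$ are bounded (bounded spectral efficiency, bounded number of users), so $R_{mobile}^{\alpha}(s), R_{static}^{\alpha}(s) \le C$ for a fixed constant $C$; since $a(t)\to 0$, eventually $a(\nu(s,\tau)) \le 1$ and the update is then a convex combination of the previous iterate and a bounded sample, so the iterates remain in a fixed compact interval almost surely, and no projection is needed.

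Then I would put each recursion in standard form, e.g.\ for the mobile estimate,
\begin{equation*}
 R_{mobile,\alpha}^{(\tau+1)}(s) = R_{mobile,\alpha}^{(\tau)}(s) + a(\nu(s,\tau))\,\ind\{s(\tau)=s\}\Big( \overline{R}_{mobile}^{\alpha}(s) - R_{mobile,\alpha}^{(\tau)}(s) + M_{\tau+1}(s) \Big),
\end{equation*}
where $M_{\tau+1}(s)$ is the centred sampling error $R_{mobile}^{\alpha}(s) - \overline{R}_{mobile}^{\alpha}(s)$ at slot $\tau$. By Assumption~\ref{assumption:fading-ergodic} the fading process generating this sample is a bounded ergodic Markov process identically distributed across locations and base stations, so $M_{\tau+1}(s)$ is bounded Markov noise with zero ergodic average; it decomposes into a martingale-difference term plus an asymptotically negligible term via the Poisson equation for the fading chain (equivalently, one invokes the Markov-noise version of the ODE method). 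The limiting ODE, after the time change accounting for the per-component clocks $\nu(s,\cdot)$, is componentwise the linear stable ODE $\dot y_s(t) = \overline{R}_{mobile}^{\alpha}(s) - y_s(t)$, whose unique globally asymptotically stable equilibrium is $y_s = \overline{R}_{mobile}^{\alpha}(s)$. All hypotheses of the asynchronous convergence theorem are thus in place (summable-square, non-summable step sizes; balanced clocks; bounded iterates; ergodic Markov noise; globally asymptotically stable ODE equilibrium), yielding $R_{mobile,\alpha}^{(\tau)}(s)\to\overline{R}_{mobile}^{\alpha}(s)$ a.s.; the identical argument, with the same driving chain and clocks, gives $R_{static,\alpha}^{(\tau)}(s)\to\overline{R}_{static}^{\alpha}(s)$ a.s., for every $s$.

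The main obstacle is the rigorous handling of the Markov (rather than i.i.d.) noise from the fading process: samples of $R_{mobile}^{\alpha}(s)$ at successive visits to $s$ are correlated, so the i.i.d.-noise version of the ODE method does not apply directly and one must either carry out the martingale/Poisson-equation decomposition for the fading chain or cite the Markov-noise generalization; everything else (ergodicity and action-independence of $\{s(\tau)\}$, boundedness, stability of a linear ODE) is routine. Finally, since the allocation \eqref{eqn:policy-structure-single-timescale-learning-general-alpha} is a continuous function of the estimates (cf.\ Lemma~\ref{lemma:optimal-eta-decreasing-continuous-in-xi}), convergence of the estimates implies that the induced allocation converges, state by state, to $\eta_{\xi}^{*}(s)$ of \eqref{eqn:policy-structure-general-alpha}.
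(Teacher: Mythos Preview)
Your proposal is correct and matches the paper's approach, which simply declares the proof ``similar to the proof of Theorem~\ref{theorem:single-timescale-convergence}''; your reconstruction (asynchronous stochastic approximation and the ODE method from \cite{borkar08stochastic-approximation-book}, the linear stable ODE $\dot y_s = \overline{R}^{\alpha}(s) - y_s$, and the observation that no $\epsilon$-randomization is needed because both samples are obtained at every visit to $s$) is exactly what that similarity entails. If anything, your treatment of the fading-induced noise as Markov noise requiring a Poisson-equation decomposition is more careful than the paper's own proof of Theorem~\ref{theorem:single-timescale-convergence}, which simply asserts that the centred samples form a martingale-difference sequence under Assumption~\ref{assumption:fading-ergodic}.
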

\begin{proof}
 The proof is similar to the proof of Theorem~\ref{theorem:single-timescale-convergence}.
\end{proof}

\subsection{Learning Algorithm for the constrained problem~\eqref{eqn:constrained_mdp_for_a_cell_alpha_fairness}}
\label{subsection:learning-two-timescale-alpha-fairness}
In this subsection, we seek to propose learning algorithms for the constrained 
problem~\eqref{eqn:constrained_mdp_for_a_cell_alpha_fairness}, in a way similar to 
Section~\ref{subsection:learning-algorithm-three-timescale}. Let us define 
$$\overline{R}_{mobile,\alpha}^*(\xi):=\sum_s g(s) \overline{R}_{mobile}^{\alpha}(s)   (\eta_{\xi}^*(s))^{\alpha}$$ and 
$$\overline{R}_{static,\alpha}^*(\xi):=\sum_s g(s) \overline{R}^{\alpha}_{static}(s)   (1-\eta_{\xi}^*(s))^{\alpha},$$  
where $\eta_{\xi}^*(s)$ is defined in \eqref{eqn:policy-structure-general-alpha}. 
By Lemma~\ref{lemma:optimal-eta-decreasing-continuous-in-xi}, $\overline{R}_{static,\alpha}^*(\xi)$ is strictly 
increasing and continuous in $\xi$. Hence, if the constraint in 
\eqref{eqn:constrained_mdp_for_a_cell_alpha_fairness} is feasible, then there exists one $\xi^*>0$ such that the 
constraint is met with equality under the optimal policy given in Section~\ref{subsection:perfect-knowledge-smdp-modified} 
with $\xi=\xi^*$, i.e., $\overline{R}_{static,\alpha}^*(\xi^*)=R_0$ under $\eta_{\xi^*}^*(\cdot)$. 

Now we propose a sequential bandwidth allocation and learning algorithm (based on 
{\em single timescale} stochastic approximation) that will solve 
problem~\eqref{eqn:constrained_mdp_for_a_cell_alpha_fairness}.

Suppose that at the decision instant $\tau$, the Markov chain has reached state 
 $s$, and let the current estimates of $\overline{R}_{mobile}^{\alpha}(s)$, $\overline{R}_{static}^{\alpha}(s)$ and $\xi^*$ be 
 $R_{mobile,\alpha}^{(\tau)}(s)$, $R_{static,\alpha}^{(\tau)}(s)$ and $\xi^{(\tau)}$, respectively. 
 Let  $\nu(s,\tau):=\sum_{t=1}^{\tau} \ind \{s_t=s\}$.
 
 Let $\eta_{\tau} \in [0,1]$ denote the decision 
 at decision instant $\tau$. Let $R_{mobile}^{\alpha}(s)$ and $R_{static}^{\alpha}(s)$ denote the samples 
 obtained between decision  instant $\tau$ and decision instant $\tau+1$.

Let $\{a(t)\}_{t \geq 1}$ be a decreasing sequence of positive numbers with $\sum_{t=1}^{\infty}a(t)=\infty$ 
 and $\sum_{t=1}^{\infty}a^2(t)<\infty$. The numbers $B>0$ and $A>B$ are such that $\xi^* \in (B,A)$.
 
\begin{algorithm}\label{algorithm:learning-algorithm-constrained-problem-alpha-fairness}
 Start with any arbitrary initial estimates $R_{mobile,\alpha}^{(0)}(s)>0$, $R_{static,\alpha}^{(0)}(s)>0$ and $\xi^{(0)}$. \
 
  At decision instant $\tau$, if the system is 
 at state $s$, allocate the following fraction of bandwidth to the mobile users: 
 \begin{equation}
 \eta_{\tau}=\frac{(\xi^{(\tau)} R_{static,\alpha}^{(\tau)}(s))^{\frac{1}{\alpha-1}}}{ (\xi^{(\tau)} R_{static,\alpha}^{(\tau)}(s))^{\frac{1}{\alpha-1}}+ (R_{mobile,\alpha}^{(\tau)}(s))^{\frac{1}{\alpha-1}} }
 \label{eqn:policy-structure-constrained-problem-learning-general-alpha}
\end{equation}

Just before the $(\tau+1)$-st decision instant, for each possible state $s$, make the following update:
 
 \footnotesize
 \begin{eqnarray*}
  R_{mobile,\alpha}^{(\tau+1)}(s)&=&R_{mobile,\alpha}^{(\tau)}(s)+a(\nu(s,\tau))\ind \{s(\tau)=s\}\\
  && \times \bigg(R_{mobile}^{\alpha}(s)-R_{mobile,\alpha}^{(\tau)}(s) \bigg)\\
    R_{static,\alpha}^{(\tau+1)}(s)&=&R_{static,\alpha}^{(\tau)}(s)+a(\nu(s,\tau))\ind \{s(\tau)=s \}\\
  && \times \bigg(R_{static}^{\alpha}(s)-R_{static,\alpha}^{(\tau)}(s) \bigg)\\
  \xi^{(\tau+1)}&=&\bigg[ \xi^{(\tau)}+a(\tau) (  R_0 -  \sum_s \ind \{s(\tau)=s\} R_{static}^{\alpha}(s)  ) \bigg]_B^A \\
 \end{eqnarray*}
 \normalsize
 \qed
\end{algorithm}
\normalsize
\begin{theorem}\label{theorem:single-constrained-problem-convergence-alpha-fairness}
 Under Assumption~\ref{assumption:fading-ergodic} and 
 Algorithm~\ref{algorithm:learning-algorithm-constrained-problem-alpha-fairness}, we have 
 $\lim_{\tau \rightarrow \infty} R_{mobile,\alpha}^{(\tau)}(s)=\overline{R}_{mobile}^{\alpha}(s)$ and 
 $\lim_{\tau \rightarrow \infty} R_{static,\alpha}^{(\tau)}(s)=\overline{R}_{static}^{\alpha}(s)$ for all $s$, and 
 $\xi^{(\tau)} \rightarrow \xi^*$ (if there exists $\xi^* \geq 0$ such that $\overline{R}_{static,\alpha}^*(\xi^*)=R_0$) 
 almost surely. 
\end{theorem}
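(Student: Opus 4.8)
The plan is to decouple the argument into three pieces: (i) convergence of the rate estimates, (ii) convergence of the induced bandwidth-sharing map, and (iii) convergence of the Lagrange-multiplier iterate via the ODE method of stochastic approximation.

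First I would note that the recursions for $R_{mobile,\alpha}^{(\tau)}(s)$ and $R_{static,\alpha}^{(\tau)}(s)$ in Algorithm~\ref{algorithm:learning-algorithm-constrained-problem-alpha-fairness} are \emph{autonomous}: their right-hand sides depend only on which state $s(\tau)$ occurs and on the fading-driven sample observed in that slot, not on $\xi^{(\tau)}$, on $\eta_\tau$, or on the other rate. Since $\{s(\tau)\}$ is a finite-state, irreducible, positive-recurrent Markov chain that is independent of the algorithm's actions, every state is visited infinitely often and $\nu(s,\tau)\to\infty$ a.s.; restricted to the visit times of a fixed state $s$, each recursion is a Robbins--Monro sample-averaging iteration with step sizes $a(\nu(s,\tau))$ satisfying $\sum a=\infty$, $\sum a^2<\infty$, whose samples are bounded (bounded fading, bounded arrivals) with conditional mean $\overline{R}^\alpha_{mobile}(s)$ (resp. $\overline{R}^\alpha_{static}(s)$) by Assumption~\ref{assumption:fading-ergodic}. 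This is exactly the situation of Theorem~\ref{theorem:single-timescale-convergence-alpha-fairness} (and Theorem~\ref{theorem:single-timescale-convergence}), so the same martingale/stochastic-approximation argument yields $R_{mobile,\alpha}^{(\tau)}(s)\to\overline{R}^\alpha_{mobile}(s)$ and $R_{static,\alpha}^{(\tau)}(s)\to\overline{R}^\alpha_{static}(s)$ a.s. for every $s$.

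Next I would show that the induced allocation converges to the structural form of Section~\ref{subsection:perfect-knowledge-smdp-modified}. Because $\alpha\in(0,1)$ and $\xi^{(\tau)}$ is projected onto $[B,A]$ with $B>0$, and because the rate iterates start strictly positive and converge to the strictly positive limits $\overline{R}^\alpha_{mobile}(s),\overline{R}^\alpha_{static}(s)$ (each iterate being a convex combination of strictly positive quantities stays in a compact set bounded away from $0$), the map in \eqref{eqn:policy-structure-constrained-problem-learning-general-alpha} sending $(R_{mobile,\alpha}^{(\tau)}(s),R_{static,\alpha}^{(\tau)}(s),\xi^{(\tau)})$ to $\eta_\tau$ is jointly continuous there despite the negative exponent $1/(\alpha-1)$; hence $\eta_\tau-\eta^*_{\xi^{(\tau)}}(s(\tau))\to 0$ a.s., where $\eta^*_\xi(\cdot)$ is the policy \eqref{eqn:policy-structure-general-alpha}. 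In particular the per-slot static throughput produced in slot $\tau$ differs from the one that $\eta^*_{\xi^{(\tau)}}(\cdot)$ would produce by an a.s.-vanishing amount.

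Finally I would treat the $\xi$-recursion as a one-dimensional projected stochastic approximation and apply the ODE method (see \cite{borkar08stochastic-approximation-book}). Writing the update as $\xi^{(\tau+1)}=\big[\xi^{(\tau)}+a(\tau)\big(h(\xi^{(\tau)})+M_{\tau+1}+e_\tau\big)\big]_B^A$ with $h(\xi):=R_0-\overline{R}^*_{static,\alpha}(\xi)$, the term $M_{\tau+1}$ is a bounded martingale-difference noise (rates bounded, finite state space, so $\sum a^2(\tau)\,\E[\|M_{\tau+1}\|^2]<\infty$), and $e_\tau$ collects two asymptotically negligible errors: the gap between the single-slot term $\sum_s\ind\{s(\tau)=s\}(\cdot)$ and its stationary average $\overline{R}^*_{static,\alpha}(\xi^{(\tau)})$, which is $o(1)$ because the finite chain mixes geometrically and $a(\tau)\to0$ (a Poisson-equation decomposition makes this precise), together with the policy-approximation error of the previous step, which $\to0$ a.s. By Lemma~\ref{lemma:optimal-eta-decreasing-continuous-in-xi}, $\overline{R}^*_{static,\alpha}(\xi)$ is strictly increasing and continuous, so $h$ is strictly decreasing and continuous with the unique zero $\xi^*$; assuming $\xi^*\in(B,A)$, it is a globally asymptotically stable equilibrium of the projected ODE $\dot\xi=h(\xi)$ with reflection at $\partial[B,A]$ (the projection being inactive at $\xi^*$). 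The Kushner--Clark / Borkar ODE tracking result then gives $\xi^{(\tau)}\to\xi^*$ a.s., completing the proof. The main obstacle is precisely this last step: rigorously certifying that the perturbation $e_\tau$ — the combination of the Markovian (rather than i.i.d.) sampling bias and the sample-path policy-approximation gap, the latter vanishing at an unquantified rate — is of the type tolerated by the ODE method, so that the $\xi$-iterates genuinely track $\dot\xi=h(\xi)$ rather than being pushed off by the asymptotic bias.
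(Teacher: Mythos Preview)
Your proposal is correct and follows essentially the same route as the paper, whose proof is the single line ``similar to the proof of Theorem~\ref{theorem:convergence-two-timescale-iteration}''; unpacking that reference yields precisely your three-step decomposition: convergence of the (autonomous) rate estimates as in Lemma~\ref{lemma:convergence-of-rate-estimates-three timescale}/Theorem~\ref{theorem:single-timescale-convergence-alpha-fairness}, continuity of the induced policy in $\xi$ (the analogue of Lemma~\ref{lemma:probabilities-continuous-in-xi}, here simpler via Lemma~\ref{lemma:optimal-eta-decreasing-continuous-in-xi}), and then the Kushner--Clark/ODE tracking argument for the $\xi$-iterate as in Lemma~\ref{lemma:convergence-of-xi-iterates}, with no $p$-iterate needed. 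The obstacle you flag regarding $e_\tau$ is exactly the point the paper handles by appealing to \cite[Theorem~5.3.1]{kushner-clark78SA-constrained-unconstrained} in the proof of Lemma~\ref{lemma:convergence-of-xi-iterates}.
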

\begin{proof}
 The proof is similar to the proof of Theorem~\ref{theorem:convergence-two-timescale-iteration}.
\end{proof}

{\em Remark:} If $\xi^{(\tau)}=0$ for some $\tau$, the entire bandwidth is allocated to the class of mobile users at that 
decision instant. To avoid this, we always maintain $\xi^{(\tau)} \geq B>0$.

\section{Performance improvement through opportunistic bandwidth allocation: a numerical study}
\label{section:numerical-work-performance-improvement}
In this section, we numerically explore the improvement in performance for static and mobile users via opportunistic bandwidth allocation. 

\subsection{Asymptotic performance Improvement for various combinations of $\alpha$ and $\theta$}
\label{subsection:performance-improvement-vs-alpha}
We consider the following simulation environment:
\begin{itemize}
 \item The base stations are located on the corners of a regular grid; the set of locations of base stations is 
 given by $\{(1000i,1000j): -10 \leq i \leq 10, -10 \leq j \leq 10)\}$, where the unit of distance in the 
 $xy$~plane is meter. Hence, the smallest 
 distance between two base stations is $1000$~m. We consider Voronoi cells under this realization of the base stations. 
 The base station whose cell is under consideration is located at the origin, and its cell 
 is a $1000$~m $\times$ $1000$~m square with the origin at its center. 
 \item Path loss at a distance $r$ is $r^{-\beta}$ with the path-loss exponent $\beta=4$. 
 There is no shadowing and fading in the 
 wireless propagation environment. However, later we will also demonstrate the convergence rate of Algorithm~\ref{algorithm:learning-algorithm-three timescale} in presence of shadowing and fading.
 \item All base stations are transmitting at the same power levels. 
 Since we do not assume any thermal noise at the receiving nodes, and sine the 
 signal-to-interference-ratio (SIR) remains unchanged if the transmit power 
 of each base station is multiplied by the same factor, we can safely assume that 
 the transmit power of each base station is $1$~unit.
 \item There are $500$~static users inside the cell containing the origin, and their locations are chosen independently with uniform distribution from the cell. 
 \item Two roads along the $x=25$ line and $y=50$ line intersect the cell under consideration.  Each of these $1000$~m long line segments are divided into $10$ segments of length $100$~m each (it can be segmented further to the shadowing decorrelation distance level). We assume that mobile users 
 enter the cell along these lines at a velocity $50$~m/sec, and the slot duration is $2$~sec so that each MU covers one $100$~m distance 
 segment in one slot, 
 i.e., each MU traverses the cell in $20$~seconds ($10$~slots). 
 \item The number of arrivals (of MUs) to the cell in each slot is $50$ times a  Bernoulli distributed random variable with mean $\frac{1}{\theta}$; this is a batch arrival process. 
 \item We assume that, if the entire  bandwidth 
 (assumed to be $1$~unit) is allocated to a single MU, then the total amount of data this MU can download over a $100$~m long line segment 
 is given by $\log_2(1+SIR_{centre})$ where $SIR_{centre}$ is the SIR value at the center of the segment. For example, the total data rate 
 assigned to a MU when it is crossing the distance between $(-500,50)$ and $(-400,50)$ in a single slot is given by 
 $\log_2(1+SIR_{(-450,50)})$ (provided that the entire bandwidth is allocated to this user). On the other hand, the amount of data downloaded by a static 
 user when the entire bandwidth is allocated to this user is given by $\log_2(1+SIR)$ when the SIR corresponds to the location of the 
 static user.
\end{itemize}

\begin{table}
\footnotesize
\centering
\begin{tabular}{|c |c |c |c |c |c |c|}
\hline
 $\alpha$ &   $\theta$ & $\xi$ & $\overline{R}_{mobile,\alpha}^{equal}$  & $\overline{R}_{static,\alpha}^{equal}$ & $\overline{R}_{mobile,\alpha}^*$ & $\overline{R}_{static,\alpha}^*$ \\ \hline 
 0.1  &    0.1    &    2.3     &    0.4867      &    0.4655     &     0.5263      &         0.4654          \\ \hline
  0.1   &   0.9   &     1     &         0.7675    &     0.6161    &   0.7680    &    0.6212          \\ \hline
  0.9    &   0.1       &     1.5     &   0.0237      &      0.2539     &   0.0387       &   0.2780           \\\hline
 0.9     &   0.9      &  1.9    &     0.0935     &   0.0275   &     0.0941     &      0.0289             \\ \hline
 \end{tabular}
 \caption{Comparison of equal bandwidth sharing among all users against opportunistic (dynamic) bandwidth sharing between the static 
 and mobile user classes, for various combinations of $\alpha$ and $\theta$ (and correspondingly appropriate choice of $\xi$). 
 Under dynamic bandwidth sharing (columns~$6$ and $7$ in the table), 
 it is assumed that all users in the same user class (static or mobile) share equally 
 the bandwidth available to that class at any moment. The notation has been defined in the text.}
\normalsize
\label{table:comparison-equal-allocation-vs-opportunistic}
\end{table}

Since the stochastic approximation algorithms presented in this paper asymptotically converge 
to the optimal value, we first consider perfect knowledge scenario where the MDP transition and cost structures are 
known to the decision maker. 
For opportunistic (i.e., location-dependent) bandwidth allocation, we assume that all 
users in the same class (i.e., static or mobile) share equal bandwidth among themselves  all the time.

%
%\begin{table}
%\footnotesize
%\centering
%\begin{tabular}{|c |c |c |c |c |c |c|}
%\hline
% $\alpha$ &   $\theta$ & $\xi$ & $\overline{R}_{mobile,\alpha}^{equal}$  & $\overline{R}_{static,\alpha}^{equal}$ & $\overline{R}_{mobile,\alpha}^*$ & $\overline{R}_{static,\alpha}^*$ \\ \hline 
% 1     &    0.1     & 2  &  0.7548       &  2.8094      &   1.3325       &    2.7724              \\ \hline
%  1    &    0.5    &  1.53       &    2.3790     &     1.6629     &   2.5458     &     1.7302     \\ \hline
% 0.9    &    0.25    &    1.35     &    1.4770     &     2.4135     &   1.6849      &     2.3938      \\ \hline
% 0.5    &   0.25      &  1.55       &   1.2219      &   1.6093       &     1.2707    &    1.5933      \\ \hline
%  0.9    &    0.1     &   1.5      &    0.7545     &    3.3699      &     0.9659    &    3.3979      \\ \hline
%  0.5    &   0.1      &    2.35     &    0.7792     &     1.7898     &     0.8175    &    1.7880      \\ \hline
% \end{tabular}
% \caption{Comparison of equal bandwidth sharing among all users against opportunistic (dynamic) bandwidth sharing between the static 
% and mobile user classes, for various combinations of $\alpha$ and $\theta$ (and correspondingly appropriate choice of $\xi$). 
% Under dynamic bandwidth sharing (columns~$6$ and $7$ in the table), 
% it is assumed that all users in the same user class (static or mobile) share equally 
% the bandwidth available to that class at any moment. The notation has been defined in the text.}
%\normalsize
%\label{table:comparison-equal-allocation-vs-opportunistic}
%\end{table}

\begin{figure*}[t]
\includegraphics[width=0.33 \linewidth, height=5cm]{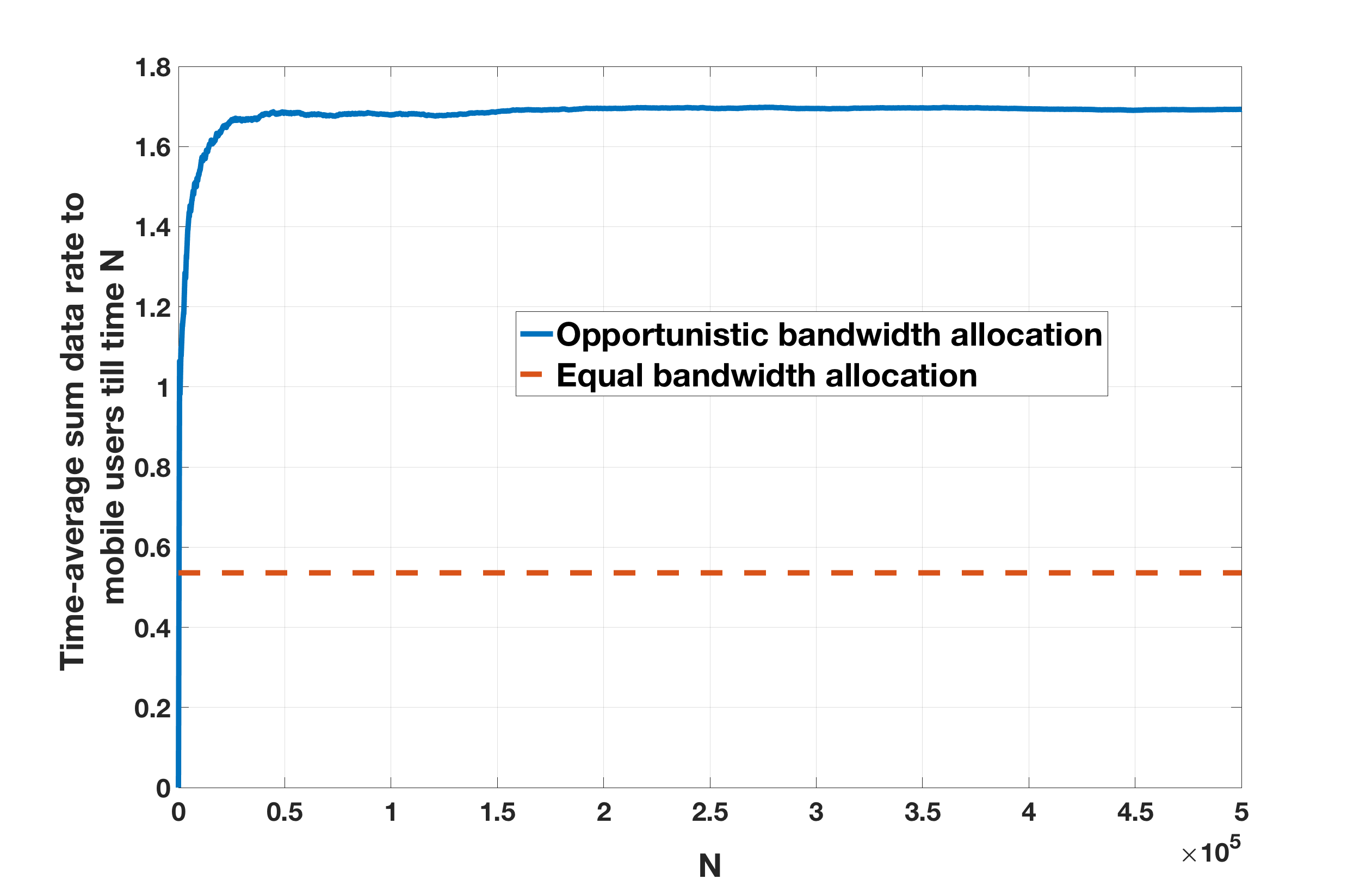}
\includegraphics[width=0.33 \linewidth, height=5cm]{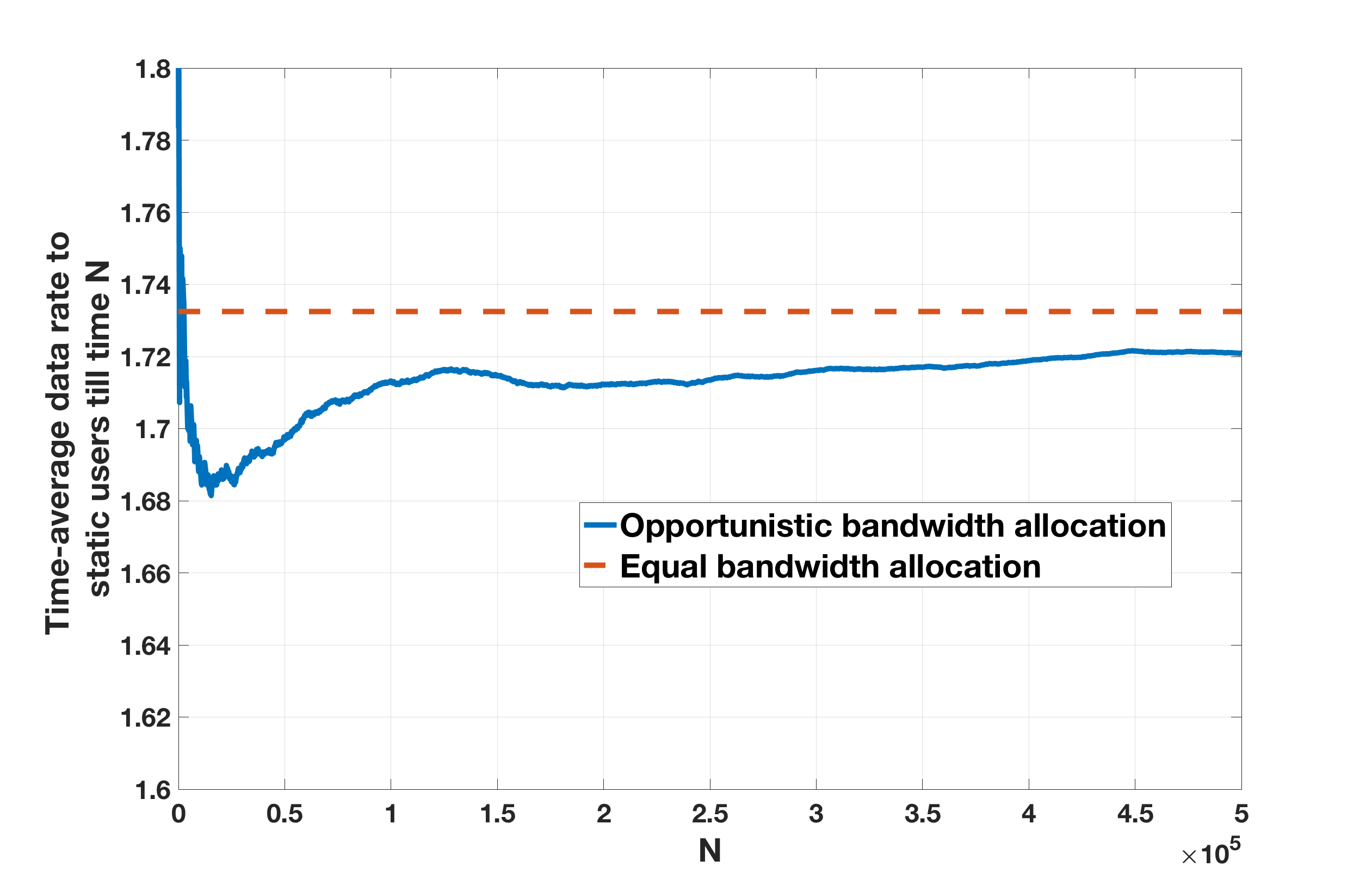}
\includegraphics[width=0.33 \linewidth, height=5cm]{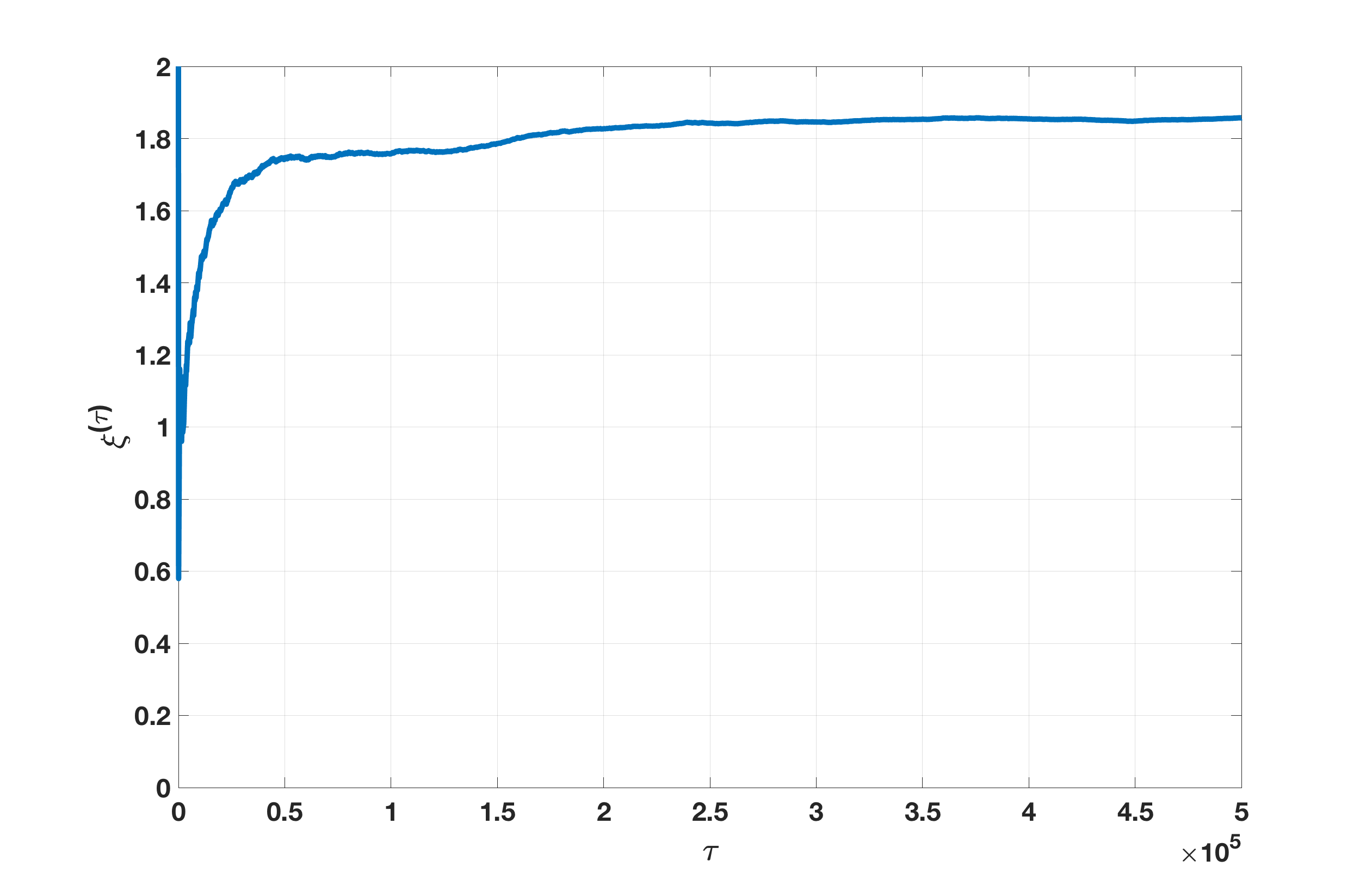}
\caption{Convergence of Algorithm~\ref{algorithm:learning-algorithm-three timescale}.}
\label{fig:convergence-plots}
\vspace{0mm}
\end{figure*}

We have done extensive simulation over a range of parameter values, for various realizations of the location of static users. In this section, we only 
provide a few of them to illustrate the performance gains and trade-offs. 

We first focus on the problem 
\eqref{eqn:constrained_mdp_for_a_cell_alpha_fairness} for $\alpha \in (0,1)$ for comparison. 
For each combination of $\alpha$ and $\theta$, we first 
compute non-opportunistic performance metrics 
$\overline{R}_{mobile,\alpha}^{equal}$ and $\overline{R}_{static,\alpha}^{equal}$ which are analogous to 
$\overline{R}_{mobile,\alpha}^*$ and $\overline{R}_{static,\alpha}^*$ defined in 
Section~\ref{subsection:learning-two-timescale-alpha-fairness} (with $\xi$ dropped from the notation), 
except that $\overline{R}_{mobile,\alpha}^{equal}$ and $\overline{R}_{static,\alpha}^{equal}$ 
are calculated assuming equal bandwidth sharing among all static and mobile users at any point of time. Then 
we chose an appropriate value of $\xi$ (for a given $\alpha$ and $\theta$) so that, under the corresponding optimal 
policies given in  
Section~\ref{subsection:perfect-knowledge-smdp-modified} with this choice of $\xi$, the constraint in 
  \eqref{eqn:constrained_mdp_for_a_cell_alpha_fairness} is (approximately) met with equality; clearly, 
our objective is to solve the constrained problem  \eqref{eqn:constrained_mdp_for_a_cell_alpha_fairness}. 
The quantities $\overline{R}_{mobile,\alpha}^*$ and $\overline{R}_{static,\alpha}^*$ under $\alpha=1$ become  
$\overline{R}_{mobile}^*$ and $\overline{R}_{static}^*$ (defined in Section~\ref{subsection:policy-structure-and-computation}). 
Our goal is to see how much improvement is possible (via opportunistic bandwidth sharing) in the time-average sum data rate 
of mobile users which keeping the same quantity unchanged 
for static users. The results are summarized in Table~\ref{table:comparison-equal-allocation-vs-opportunistic}. Note that, each row in Table~\ref{table:comparison-equal-allocation-vs-opportunistic} corresponds to an independent set of static user locations.

From Table~\ref{table:comparison-equal-allocation-vs-opportunistic}, we observe that even $60 \%$  
improvement is possible in the 
time-average throughput of mobile users, while keeping the time-average 
throughput of static users almost unchanged; this clearly shows that it is worth employing 
the proposed opportunistic bandwidth allocation 
algorithms in cellular networks. We also 
observe that the margin of performance improvement decreases as $\alpha$ becomes smaller. This happens 
because of two reasons: (i) choice of $\alpha \in (0,1)$ allows more fair allocation at the cost of opportunistic gain, 
(ii) it is also an artifact of the choice of $\alpha \in (0,1)$ since the derivative of the concave function $x^{\alpha}$ 
is decreasing in $x$. On the other hand, performance gain in the data rate for mobile users becomes smaller if $\theta$ becomes close to $1$.  When $\theta$ is small, bandwidth is allocated only to the mobile users  when they come close to the base station; however, when $\theta$ is large, there are a large number of mobile users inside the cell at any time with high probability, and hence equal bandwidth sharing among the mobile users results in significant bandwidth allocation to the mobile users which are either close or away from the base station.

 One should also note that, the amount of gain will vary depending on the topology of a cell, 
location of interfering base stations, static user locations,  shadowing realizations in 
various locations as well as fading process statistics; it is hard to quantify these effects but some intuitive conclusions can be drawn. For example, if static users are
very close to the base station, then the performance gain in the throughput of mobile users will be less since opportunistic
allocation will assign more bandwidth to static users. The numerical work presented in this section 
is only an illustration for possible performance gain by location-dependent dynamic bandwidth allocation.

\subsection{Convergence of Algorithm~\ref{algorithm:learning-algorithm-three timescale} for $\alpha=1$}
Here we consider the same network model as in Section~\ref{subsection:performance-improvement-vs-alpha} except that (i) the shadowing between any base station and any static user location or road segment center is assumed to be independent lognormal random variable with standard deviation $8$~dB, (ii) the fading gain between the origin and any location inside the cell is exponentially distributed with mean $1$ (Rayleigh fading), but the fading in any interfering link is averaged out, (iii) $\theta=0.2$, (iv) there is  only one line $x=50$ along which the mobile users traverse.

The convergence of Algorithm~\ref{algorithm:learning-algorithm-three timescale} is examined under this network setting. We first generate the network and compute the time-average expected data rate to static and mobile user classes when all users are allocated equal bandwidth in each slot; the mean data rate per slot for the static user class is then set as the target $R_0$ in \eqref{eqn:constrained_mdp_for_a_cell}, and Algorithm~\ref{algorithm:learning-algorithm-three timescale} is employed with stepsize sequences $a(t)=\frac{1}{t^{0.6}}$, $b(t)=\frac{1}{t}$ and initial estimates $R_{mobile}^{(0)}(s)=1$, $R_{static}^{(0)}(s)=1$ and $\xi^{(0)}=2$. Under Algorithm~\ref{algorithm:learning-algorithm-three timescale}, the evolution of $\frac{1}{N}\sum_{\tau=1}^N \eta_{\tau}  R_{mobile}(s(\tau))$, $\frac{1}{N}\sum_{\tau=1}^N (1-\eta_{\tau})  R_{static}(s(\tau))$ against $N$ and $\xi^{(\tau)}$ against $\tau$ for a single sample path are shown in Figure~\ref{fig:convergence-plots}. From Figure~\ref{fig:convergence-plots}, we can see that all iterates converge asymptotically, and they are close to the respective limiting values within $20000$~iterations. In practice, the convergence will be faster because the initial values  $R_{mobile}^{(0)}(s)$, $R_{static}^{(0)}(s)$ and $\xi^{(0)}$ will be chosen based on prior experience in previous days, and will be chosen close to the target values. Also, the convergence speed will depend on the network parameters, network topology, wireless propagation model and the step size sequences $\{a(\tau),b(\tau)\}_{\tau \geq 0}$. From Figure~\ref{fig:convergence-plots}, we can see that more than $150 \%$ improvement is possible in the sum throughput of mobile users while achieving the same sum throughput for static users.

\vspace{-3mm}

\section{Additional Discussion}
\label{section:implementation-issues}

\subsection{Connection Between the Cell Level Problem and a Global Problem}
\label{subsection:connection-cell-level-global-problem}
Let us consider a heterogeneous network consisting of two tiers of
base stations (BSs). The two tiers are modeled by two independent stationary, ergodic processes 
$\Phi_{macro}$ and $\Phi_{micro}$ (such as homogeneous Poisson point processes). 
 SUs are assumed to be located on $\mathbb{R}^2$   
according to a stationary, ergodic point process $\mathcal{U}_{static}$
of intensity $\lambda_{SU}$. 
We assume that MUs are moving with
constant speed $v$ along a collection of directed routes; these routes are  modeled by 
a stationary, ergodic line process (such as directed homogeneous  Poisson line process, see 
\cite[Chapter~$8$]{chiu-etal13stochastic-geometry-and-its-applications}). 
Given a realization of the  line process, we assume that, at any time $t$, two  successive MUs on any line
of $\mathcal{L}$ are separated by an exponentially distributed distance with mean $\frac{1}{\lambda_{MU}}$, 
i.e., the MUs on any line form a Poisson point 
process of intensity $\lambda_{MU}$ at any time $t$. 
Hence, the crossing of any point of a line 
by the MUs form a time homogeneous Poisson process with intensity $\lambda_{MU} v$. 

A static user is served by a macro or micro  BS, and the association rule can be arbitrary (e.g., a SU can be associated 
with the BS that sends strongest signal to the SU). Each MU is served by the nearest macro BS. We call the Voronoi cells generated by 
$\Phi_{macro}$ as macro cells. From now on, unless specified, a cell will mean a macro cell.

Note that, the heterogeneous network model is used to illustrate our model in advanced cellular network context (e.g., for LTE). 
But the analysis presented in this paper will be valid even if the network is homogeneous and each BS is allowed to serve both SUs and MUs.

Let us consider the time-slotted simplification of the above system and the problem addressed in 
Section~\ref{section:mdp-formulation-for-the-unconstrained-problem}. 
The unconstrained optimization problem \eqref{eqn:unconstrained_mdp_for_a_cell} can be used for performance optimization 
in a single macro cell. Let us enumerate the macro BSs on the plane by $\{1,2,\cdots\}$. 
Since the base stations do not communicate for making the decision on bandwidth allocation, and since each macro BS 
has different number of line segments intersecting its cell and different number of SUs associated with it, 
the dynamic bandwidth sharing policy adopted by the network is $\underline{\eta}=\times_{k=1}^{\infty} \eta^{(k)}$ where $\eta^{(k)}$ 
is the policy used by the $k$-th macro BS. Let us denote the numerator in \eqref{eqn:unconstrained_mdp_for_a_cell} for the 
$k$-th macro BS, i.e., 
$\sum_{\tau=1}^N \E _{\eta^{(k)}} \bigg( \eta_{\tau} \overline{R}_{mobile,k}(s(\tau))+\xi (1-\eta_{\tau})\overline{R}_{static,k}(s(\tau)) \bigg)$ by 
$r(k,N)$. Let us consider the following problem:

\footnotesize
\begin{eqnarray}
 \sup_{\underline{\eta}} \liminf_{M \rightarrow \infty} \liminf_{N \rightarrow \infty}   \frac{\sum_{k=1}^M r(k,N)}{NM} 
 \label{eqn:unconstrained_mdp_for_all_cells}
\end{eqnarray}
\normalsize

Now, since $\underline{\eta}=\times_{k=1}^{\infty} \eta^{(k)}$, the above problem can be rewritten as: 
\footnotesize
\begin{eqnarray*}
  \liminf_{M \rightarrow \infty} \frac{1}{M} \sum_{k=1}^M \sup_{\eta^{(k)}} \liminf_{N \rightarrow \infty}   \frac{ r(k,N)}{N}
  \label{eqn:unconstrained_mdp_for_all_cells_modified}
\end{eqnarray*}
\normalsize

Let the optimal mean reward per slot for the problem \eqref{eqn:unconstrained_mdp_for_a_cell} for cell $k$ be 
$\lambda_k$. Now, for \eqref{eqn:unconstrained_mdp_for_all_cells}, 
$\liminf_{M \rightarrow \infty} \frac{1}{M}\sum_{k=1}^M \lambda_k$ is almost surely equal to the expected optimal 
time-average reward for the typical macro cell. Hence, by solving the problem \eqref{eqn:unconstrained_mdp_for_a_cell} 
for each cell, we can maximize the expected optimal time-average reward for the typical macro cell.

\subsection{{\bf Addressing the Possibility of Error in Location Estimation for MUs}}
\label{subsection:error-in-location-estimation}

 Let us recall the framework in Section~\ref{section:mdp-formulation-for-the-unconstrained-problem}. 
It has to be noted that there can be error in estimating the location of a mobile user, and 
therefore an error in estimating the residual sojourn time of a mobile user inside a cell is possible. Let us 
assume that the error in estimation of states at any two different time slots are independent, and that we know the error 
statistics (i.e., given the observed state $\hat{s}$, we know the conditional 
distribution $p(s|\hat{s})$ of the true state $s$). Since the action in a slot 
does not affect the state transition, the best possible action one can take in a slot is to choose 

\begin{eqnarray*}
&&\eta_{\xi}^*(\hat{s})\\
&=& \arg \max_{\eta \in [0,1]}\sum_{s}p(s|\hat{s}) \bigg( \eta \overline{R}_{mobile}(s)+\xi (1-\eta) \overline{R}_{static}(s) \bigg)\\
&:=& \arg \max_{\eta \in [0,1]}\bigg( \eta \tilde{r}_{mobile}(\hat{s})+\xi (1-\eta) \tilde{r}_{static}(\hat{s}) \bigg).
\end{eqnarray*}

The structure of the optimal policy will be similar to Theorem~\ref{theorem:policy-structure}. 
Similarly, it will be optimal to work with the observed state $\hat{s}$ in case learning algorithms are employed.

\subsection{Deviation from movement along a line}
The analysis in this paper can be trivially extended to the case where the location of mobile users vary according to a positive recurrent discrete-time Markov chain over the cell divided into a finite number of area segments. The state in this case should include the location of all mobile users at a given time. However, in such cases, the location of each user needs to tracked by the base station in each slot; this is not required if the users move along straight lines with known velocity, since one can easily predict the location of a user at a time once the initial location of that user at a given time instant is known. In this paper, we considered movement of users along a line  because it stands for vehicle movements along roads and it is a simple but powerful example.

\subsection{Unequal bandwidth sharing within a single class of users}
From the anslysis presented in Section~\ref{section:mdp-formulation-for-the-unconstrained-problem}, it is clear that the decision at any given state $s$ depends only on   $\bar{R}_{mobile}(s)$ and $\bar{R}_{static}(s)$, and not on the specific bandwidth fraction allocated to each user within a user class. The percentage bandwidth allocation among mobile users within the class of mobile users determines $\bar{R}_{mobile}(s)$ which affects the decision $\eta_{\xi}^*$.

\subsection{Multiple user classes}
In case there are multiple user classes with different velocities, analogous to Theorem~\ref{theorem:policy-structure}, the optimal policy will be to allocate the entire bandwidth to a single user class at any given time. Algorithm~\ref{algorithm:learning-algorithm-three timescale} can be extended for maximizing the time average data rate for one class subject to a minimum time-average data rate constraint on each of the other classes; in this case, one Lagrange multiplier needs to be updated for each constraint, and the optimal solution for the constrained problem will involve randomization among multiple policies.

\subsection{Channel estimation versus location-based bandwidth sharing}
\label{subsection:channel-estimation-versus-location-dependent}
For fast moving users, traditional channel estimation may not be very accurate since the user might travel the fast  fading coherence distance very fast; hence, dynamic bandwidth allocation among users based on instantaneous channel qualities may not be feasible. Also, even if channel measurement is accurate, for a user moving at a velocity $72$~kmph, the channel coherence time will be less than $50$~ms; hence, gathering chennel state information from each user every $50$~ms will require huge signaling overhead. Moreover, it may be difficult to estimate the interference at any given location, since the interference at any location depends on path-loss, shadowing and time-varying fast fading gains from all interfering base stations. As an alternative, we propose location-dependent bandwidth sharing. Section~\ref{section:mdp-formulation-for-the-unconstrained-problem} deals with the situation where $\bar{R}_{mobile}(s)$ and 
$\bar{R}_{static}(s)$ are known; this amounts to assuming that the path-loss and shadowing from serving and interfering base stations are known, and the distribution of fast fading gains from the serving and interfering base stations to each location are also known. {\em We emphasize that this is an idealistic assumption, and, in practice, the serving base station has to learn $\bar{R}_{mobile}(s)$ and 
$\bar{R}_{static}(s)$ over time from the   download data volume reported by the users; this is discussed in Section~\ref{section:learning-algorithm-given-lagrange-multiplier} and Section~\ref{section:learning-algorithm-constrained-problem}. Clearly, the learning algorithms do not need any propagation based model. While  channel quality measurement, if done accurately, can result in superior user performance, our proposed algorithms for location-based bandwidth algorithms with learning are useful when accurate channel estimates and interference estimates are not available due to high velocity of users.}

Location dependent bandwidth sharing has also been discussed in \cite{bonald-mobility}, where a preference is given to the mobile users located close to the base station.

\section{Conclusion}\label{section:conclusion}
In this paper, we have  proposed and analyzed opportunistic (dynamic) bandwidth sharing depending on user location and mobility, 
in order to improve the performance of cellular networks. 
Even though we have solved the basic problem in this paper, there are numerous issues to improve upon: 
(i) In practice, there can be multiple (possibly uncountable) values of user velocity. 
Hence, a dynamic bandwidth sharing scheme that allocates bandwidth depending on exact velocity of each user 
needs to be developed (this might require classification of user velocities into a finite set).  
(ii) For general motion of users, one reasonable approach would be to divide the cell into various zones (or locations), 
and assume a  Markov evolution of user locations; similar learning techniques as in our paper  
can be applied in such situation. (iii) Testing and optimizing the proposed and subsequent algorithms in real 
data-traffic networks will be an important requirement. We propose 
to pursue these topics in our future research endeavours.

\vspace{-3mm}

{\small
\bibliographystyle{unsrt}
\bibliography{arpan-techreport}
}

\begin{figure*}[t!]
\centering{\huge{\bf Supplementary Material for ``Location Aware Opportunistic Bandwidth Sharing between Static and 
Mobile Users with Stochastic Learning in Cellular Networks''} }
\end{figure*}

\vspace{-15mm}

\begin{IEEEbiography}[{\includegraphics[width=1in,height=1in,clip,keepaspectratio]{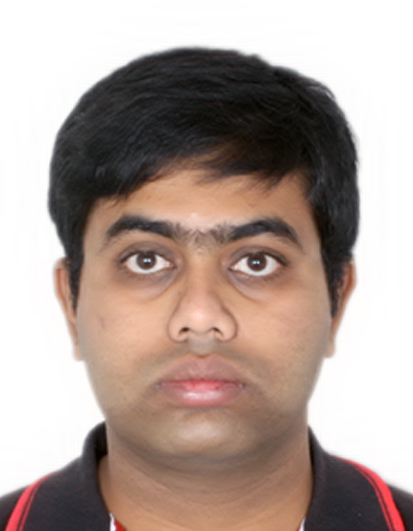}}]{Arpan 
Chattopadhyay} obtained his B.E. in Electronics and Telecommunication Engineering from Jadavpur University, 
Kolkata, India in the year 2008, and M.E. and Ph.D in Telecommunication Engineering from Indian Institute of Science, 
Bangalore, India in the year 2010 and 2015, respectively. After Ph.,D, he did his first postdoc in the group DYOGENE of INRIA, Paris, France. He is currently working in University of Southern California, Los Angeles as a postdoctoral researcher. 
His research interests include  networks, cyber-physical systems, machine learning,  and networked control.
    \end{IEEEbiography}

\vspace{-15mm}

\begin{IEEEbiography}[{\includegraphics[width=1in,height=1in,clip,keepaspectratio]{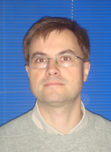}}]
{Bartlomiej Blaszczyszyn} received his PhD degree and Habilitation qualification
in applied mathematics from University of Wrocław (Poland) in 1995
and 2008, respectively. He is now a Senior Researcher at Inria (France), and
a member of the Computer Science Department of Ecole Normale Supérieure
in Paris. His professional interests are in applied probability, in particular in
stochastic modeling and performance evaluation of communication networks.
He coauthored several publications on this subject in major international
journals and conferences, as well as a two-volume book on {\em Stochastic
Geometry and Wireless Networks} NoW Publishers, jointly with F. Baccelli.
   
    \end{IEEEbiography}

    \vspace{-15mm}

   \begin{IEEEbiography}[{\includegraphics[width=1in,height=1in,clip,keepaspectratio]{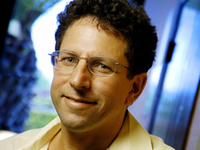}}]
   {Eitan Altman}  received the B.Sc. degree in electrical
engineering (1984), the B.A. degree in
physics (1984) and the Ph.D. degree in electrical
engineering (1990), all from the Technion-Israel
Institute, Haifa. In (1990) he further received his
B.Mus. degree in music composition in Tel-Aviv
University. Since 1990, he has been with INRIA
(National research institute in informatics and
control) in Sophia-Antipolis, France. His current
research interests include performance evaluation
and control of telecommunication networks and
in particular congestion control, wireless communications and networking
games. He is in the editorial board of several scientific journals: JEDC,
COMNET, DEDS and WICON. He has been the (co)chairman of the
program committee of several international conferences and workshops
on game theory, networking games and mobile networks.
   \end{IEEEbiography}

\renewcommand{\thesubsection}{\Alph{subsection}}

\newpage

\appendices

 \section{}
\label{appendix}

\subsection{Proof of Lemma~\ref{lemma:mdp-optimal-rate-of-static-mobile-users-increasing-decreasing-in-xi}}
\label{subsection:proof-of-rate-of-mobile-users-decreasing-in-xi}
Let $\xi\geq 0$ and $\kappa>0$. By the optimality of $\eta_{\xi}^*(\cdot)$ and $\eta_{\xi+\kappa}^*(\cdot)$, we can write:
\small
$$ \overline{R}_{mobile}^*(\xi)+ \xi \overline{R}_{static}^*(\xi) \geq \overline{R}_{mobile}^*(\xi+\kappa)+ \xi \overline{R}_{static}^*(\xi+\kappa) $$
and 
\small
$$ \overline{R}_{mobile}^*(\xi+\kappa)+ (\xi+\kappa) \overline{R}_{static}^*(\xi+\kappa) \geq \overline{R}_{mobile}^*(\xi)+ (\xi+\kappa) \overline{R}_{static}^*(\xi) $$
 \normalsize
 By adding these two equations, we obtain: $$\overline{R}_{static}^*(\xi+\kappa) \geq \overline{R}_{static}^*(\xi).$$  
 Similarly we can prove that $\overline{R}_{mobile}^*(\xi)$ decreases in $\xi$.

 \subsection{Proof of Theorem~\ref{theorem:single-timescale-convergence}}
 \label{subsection:proof-of-single-timescale-convergence-theorem}
 Let us rewrite the update equation in Algorithm~\ref{algorithm:learning-algorithm-single-timescale} as follows:
 
 \footnotesize
   \begin{eqnarray*}
  R_{mobile}^{(\tau+1)}(s)&=&R_{mobile}^{(\tau)}(s)+a(\nu(s,1,\tau))\ind \{s(\tau)=s\} \ind \{\eta_{\tau}=1\}\\
  && \times \bigg(\overline{R}_{mobile}(s)-R_{mobile}^{(\tau)}(s)+N^{(\tau+1)}(s,1) \bigg)\\
    R_{static}^{(\tau+1)}(s)&=&R_{static}^{(\tau)}(s)+a(\nu(s,0,\tau))\ind \{s(\tau)=s\} \ind \{\eta_{\tau}=0\}\\
  && \times \bigg(\overline{R}_{static}(s)-R_{static}^{(\tau)}(s)+N^{(\tau+1)}(s,0) \bigg)\label{eqn:single-timescale-update-equation-in-new-form}
 \end{eqnarray*}
 \normalsize
 where $$N^{(\tau+1)}(s,1):=R_{mobile}(s)-\overline{R}_{mobile}(s),$$ and $$N^{(\tau+1)}(s,0):=R_{static}(s)-\overline{R}_{static}(s).$$ 
 This is an asynchronous stochastic approximation iteration as described in \cite{borkar08stochastic-approximation-book}, 
 with $N(s,1)$ and $N(s,0)$ as Martingale difference noise sequences. However, for each $s$,  
 $$\lim \inf_{\tau \rightarrow \infty}\frac{\nu(s,1,\tau)}{\tau} \geq \frac{g(s)\epsilon}{2}>0, $$  where $g(s)$ has been defined in 
 Section~\ref{subsection:policy-structure-and-computation}, and  
$$\lim \inf_{\tau \rightarrow \infty}\frac{\nu(s,0,\tau)}{\tau} \geq \frac{g(s)\epsilon}{2}>0 $$ almost surely. 

Since each iterate is updated infinitely often, and since the iterations of various components of the iterates are uncoupled, 
for each $s$ we can cast the iteration as an ordinary stochastic approximation as defined in 
\cite[Chapter~$2$]{borkar08stochastic-approximation-book}.
 
 Now we will check some conditions from  \cite{borkar08stochastic-approximation-book}. 
 Let us denote $\underline{R}:=\{R_{mobile}(s), R_{static}(s) \}_{\forall s}$.
 
 \textbf{Checking Assumption~$(A1)$ of \cite[Chapter~$2$]{borkar08stochastic-approximation-book}:} Clearly, 
 $\overline{R}_{mobile}(s)-R_{mobile}^{(\tau)}(s)$ is Lipschitz in $R_{mobile}^{(\tau)}(s)$ and 
$\overline{R}_{static}(s)-R_{static}^{(\tau)}(s)$ is Lipschitz in $R_{static}^{(\tau)}(s)$ 
for each $s$; hence, this assumption is satisfied.
 
\textbf{Checking Assumption~$(A2)$ of \cite[Chapter~$2$]{borkar08stochastic-approximation-book}:} This 
assumption is satisfied by the choice of the step size sequence.
 
\textbf{Checking Assumption~$(A3)$ of \cite[Chapter~$2$]{borkar08stochastic-approximation-book}:} It is easy to see that, 
$\{N^{(\tau+1)}(s,1), N^{(\tau+1)}(s,0)\}_{\tau \geq 1}$ for each $s$ is a sequence of Martingale 
difference noise with zero mean, adapted to the sigma algebra generated by 
$\{N^{(k)}(s,1), N^{(k)}(s,0)\}_{0 \leq k \leq \tau, \forall s}$. Also, 
the conditional mean of $|N^{(\tau+1)}(s,1)|^2$ given all the noise values up to time $\tau$ is uniformly 
upper bounded by some constant, since $T$ is Geometrically distributed and fading process is bounded by 
Assumption~\ref{assumption:fading-ergodic}. Hence, 
Assumption~$(A3)$ of \cite[Chapter~$2$]{borkar08stochastic-approximation-book} is satisfied. 

\textbf{Checking Assumption~$(A5)$ of \cite[Chapter~$3$]{borkar08stochastic-approximation-book}:}
Note that, $\lim_{c \rightarrow \infty} \frac{\overline{R}_{mobile}(s)-cx(s,1)}{c}=-x(s,1)$ 
is continuous in $x(s,1)$ for all $s$. 
Also, $\frac{\overline{R}_{mobile}(s)-cx(s,1)}{c}$ is decreasing in $c$. Hence, 
by Theorem~$7.13$ of \cite{rudin76principles-of-mathematical-analysis}, 
convergence of $\frac{\overline{R}_{mobile}(s)-cx(s,1)}{c}$ 
over compacts is uniform. 
Also, the collection of  ODEs of the form 
$\dot{x}(s,1) =  \lim_{c \rightarrow \infty} \frac{\overline{R}_{mobile}(s)-cx(s,1)}{c}=- x(s,1)$   
has a unique unique globally asymptotically stable equilibrium 
$x(s,1)=x(s,0)=0$ for all $s$. Hence, this assumption is satisfied.

Let us consider the following ODE for all $s$:
\begin{eqnarray*}
 \dot{x}(s,1) &=& \overline{R}_{mobile}(s)-x(s,1)  \\
 \dot{x}(s,0) &=&  \overline{R}_{static}(s)-x(s,0)  
\end{eqnarray*}
The above ODE 
has a unique globally asymptotically stable equilibrium $x(s,1)=\overline{R}_{mobile}(s)$ and 
$x(s,0)=\overline{R}_{static}(s)$. Hence, by 
\cite[Theorem~$7$, Chapter~$3$]{borkar08stochastic-approximation-book} and 
\cite[Theorem~$2$, Chapter~$2$]{borkar08stochastic-approximation-book}, convergence of 
Algorithm~\ref{algorithm:learning-algorithm-single-timescale} follows.

Now, it is easy to see that, 
$$|\lambda_{\epsilon}^*(\xi)-\lambda^*(\xi)| \leq \frac{\epsilon}{2 }\sum_{s}g(s)  \E  |R_{mobile}(s)-R_{static}(s) |$$.
 
The second part of the theorem follows from this.\qed

\subsection{Proof of Theorem~\ref{theorem:convergence-two-timescale-iteration}}
\label{subsection:proof-of-three timescale-convergence-theorem}
We will prove desired convergence in the two timescales one by one.

\subsubsection{Convergence in the faster timescale}
\label{subsubsection:convergence-in-fastest-timescale}
\begin{lemma}\label{lemma:convergence-of-rate-estimates-three timescale}
 Under Algorithm~\ref{algorithm:learning-algorithm-three timescale}, we have 
 $\lim_{\tau \rightarrow \infty} R_{mobile}^{(\tau)}(s)=\overline{R}_{mobile}(s)$ and  
 $\lim_{\tau \rightarrow \infty} R_{static}^{(\tau)}(s)=\overline{R}_{static}(s)$ for all $s$ almost surely.
\end{lemma}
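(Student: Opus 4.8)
The plan is to reduce this lemma essentially to the already-established Theorem~\ref{theorem:single-timescale-convergence}, exploiting the fact that the recursions for $R_{mobile}^{(\tau)}(s)$ and $R_{static}^{(\tau)}(s)$ in Algorithm~\ref{algorithm:learning-algorithm-three timescale} have \emph{exactly} the same form as those in Algorithm~\ref{algorithm:learning-algorithm-single-timescale}: the same step sizes $a(\nu(s,1,\tau))$, $a(\nu(s,0,\tau))$, the same indicator gating, and the same increments $R_{mobile}(s)-R_{mobile}^{(\tau)}(s)$, $R_{static}(s)-R_{static}^{(\tau)}(s)$. As far as these two iterates are concerned, the only change relative to Algorithm~\ref{algorithm:learning-algorithm-single-timescale} is the rule that generates the action $\eta_{\tau}$ (now depending also on $\xi^{(\tau)}$, $p^{(\tau)}$ and the auxiliary variable $\Delta_{\tau}$), and this affects only \emph{which} iterate is updated at a given instant, not the form of the update. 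In particular the rate recursions do not involve $\xi^{(\tau)}$ or $p^{(\tau)}$ at all, so they are genuinely decoupled from the slower/other iterates and the faster-timescale analysis does not even need the quasi-static argument; it only needs that every state--action pair keeps getting sampled often enough.

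First I would verify the key sufficient condition highlighted in the remarks of Section~\ref{section:learning-algorithm-given-lagrange-multiplier}, namely $\liminf_{\tau\to\infty}\nu(s,1,\tau)/\tau>0$ and $\liminf_{\tau\to\infty}\nu(s,0,\tau)/\tau>0$ almost surely for every state $s$. This holds for Algorithm~\ref{algorithm:learning-algorithm-three timescale} for precisely the same reason it holds for Algorithm~\ref{algorithm:learning-algorithm-single-timescale}: at each decision instant the algorithm independently forces $\eta_{\tau}=1$ or $\eta_{\tau}=0$ with probability $\epsilon/2$ each, irrespective of the current state, the current iterate values, or $\Delta_{\tau}$; and the state process $\{s(\tau)\}$ is an autonomous (action-independent), finite-state, irreducible, positive-recurrent Markov chain, hence visits each state $s$ with asymptotic frequency $g(s)>0$. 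Combining the strong law for Markov chains with a conditional strong-law argument over the i.i.d.\ forcing coins gives $\liminf_{\tau\to\infty}\nu(s,1,\tau)/\tau\geq g(s)\epsilon/2>0$, and likewise for action $0$.

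With this in hand I would then, exactly as in the proof of Theorem~\ref{theorem:single-timescale-convergence} (Section~\ref{subsection:proof-of-single-timescale-convergence-theorem}), rewrite the two rate recursions as a family of uncoupled asynchronous stochastic-approximation iterations, one per $(s,1)$ and $(s,0)$ pair, driven by the noise terms $N^{(\tau+1)}(s,1)=R_{mobile}(s)-\overline{R}_{mobile}(s)$ and $N^{(\tau+1)}(s,0)=R_{static}(s)-\overline{R}_{static}(s)$. By Assumption~\ref{assumption:fading-ergodic} these are (martingale/Markov) noise terms with zero conditional mean and uniformly bounded conditional second moments, since the fading gains are bounded and the sojourn times and per-slot arrivals are bounded so the state space is finite. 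The assumptions $(A1)$--$(A3)$ of \cite[Chapter~$2$]{borkar08stochastic-approximation-book} and $(A5)$ of \cite[Chapter~$3$]{borkar08stochastic-approximation-book} are checked verbatim as in Section~\ref{subsection:proof-of-single-timescale-convergence-theorem}; the relevant limiting ODEs $\dot{x}(s,1)=\overline{R}_{mobile}(s)-x(s,1)$ and $\dot{x}(s,0)=\overline{R}_{static}(s)-x(s,0)$ each have the unique globally asymptotically stable equilibrium $x(s,1)=\overline{R}_{mobile}(s)$, $x(s,0)=\overline{R}_{static}(s)$; and the conclusion follows from \cite[Theorem~$7$, Chapter~$3$]{borkar08stochastic-approximation-book} together with \cite[Theorem~$2$, Chapter~$2$]{borkar08stochastic-approximation-book}.

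The one place that needs a little care --- and what I expect to be the main, though minor, obstacle --- is that the decision rule here is genuinely non-stationary (it drifts as $\xi^{(\tau)}$ and $p^{(\tau)}$ move), whereas in Algorithm~\ref{algorithm:learning-algorithm-single-timescale} the only non-stationarity came from the estimate drift. One must therefore make sure the positive-frequency bound of the second paragraph does not secretly use stationarity of the policy; it does not, since it rests solely on the forced-exploration coins and on the action-independence of $\{s(\tau)\}$. A closely related point is that the step sizes in the rate updates are indexed by the local clocks $\nu(s,1,\tau)$, $\nu(s,0,\tau)$, so the argument must be run inside the asynchronous-stochastic-approximation framework of \cite{borkar08stochastic-approximation-book}; the positive-frequency property just established is exactly the condition ensuring the local and global clocks are comparable, which legitimises that passage.
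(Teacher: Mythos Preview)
Your proposal is correct and follows exactly the approach the paper takes: the paper's own proof of this lemma consists solely of the sentence ``The proof is similar to the proof of Theorem~\ref{theorem:single-timescale-convergence},'' and your write-up is a faithful elaboration of why that reduction goes through (identical rate recursions, $\epsilon$-forced exploration plus action-independent ergodic state chain giving $\liminf_{\tau}\nu(s,j,\tau)/\tau\geq g(s)\epsilon/2$, then the same $(A1)$--$(A3)$, $(A5)$ verification and ODE argument from \cite{borkar08stochastic-approximation-book}). Your additional remark that the non-stationarity of the decision rule through $(\xi^{(\tau)},p^{(\tau)})$ is harmless here---because the positive-frequency bound rests only on the i.i.d.\ exploration coins and the autonomous state dynamics---is exactly the point that makes the reduction valid.
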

\begin{proof}
 The proof is similar to the proof of Theorem~\ref{theorem:single-timescale-convergence}.
\end{proof}

\begin{lemma}\label{lemma:convergence-of-p-iterates}
 Under Algorithm~\ref{algorithm:learning-algorithm-three timescale}, we have 
 $\lim_{\tau \rightarrow \infty}|p^{(\tau)}-p_{\epsilon}^*(\xi^{(\tau)})|=0$. 
\end{lemma}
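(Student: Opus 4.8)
The plan is to treat the $p^{(\tau)}$-update as the faster-timescale component of a two-timescale stochastic approximation scheme in the sense of \cite[Chapter~6]{borkar08stochastic-approximation-book}, with $\xi^{(\tau)}$ playing the role of the quasi-static slower-timescale variable. First I would invoke Lemma~\ref{lemma:convergence-of-rate-estimates-three timescale}, so that on the timescale of interest the rate estimates $R_{mobile}^{(\tau)}(s),R_{static}^{(\tau)}(s)$ may be replaced by their limits $\overline{R}_{mobile}(s),\overline{R}_{static}(s)$; since the $R$-updates in Algorithm~\ref{algorithm:learning-algorithm-three timescale} do not depend on $p^{(\tau)}$ or $\xi^{(\tau)}$, the pair $(\mathcal{R}_\tau,p^{(\tau)})$ is a cascade whose driving block is globally asymptotically stable, and it suffices to analyze the $p$-iteration with the true rates plugged in. Rewriting the $p$-update with the conditional mean of the increment separated out, it is a projected stochastic approximation on $[0,1]$ of the form $p^{(\tau+1)}=\big[p^{(\tau)}+a(\tau)\big(\bar h_\epsilon(p^{(\tau)},\xi^{(\tau)})+M^{(\tau+1)}\big)\big]_0^1$, whose associated projected ODE, with $\xi$ frozen, is $\dot p=\widehat\Gamma\big(\bar h_\epsilon(p,\xi)\big)$, where $\bar h_\epsilon(p,\xi):=\sum_s g(s)\,\overline{R}_{static}(s)\,\Pro\big(\eta_{\xi+\Delta}^*(s)=0\mid p,\epsilon\big)-R_0$ with $\Delta\sim f_p$ and $\widehat\Gamma$ the projection term.

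The heart of the argument is to show that this scalar ODE has a unique globally asymptotically stable equilibrium, namely $p_\epsilon^*(\xi)$, and that $p_\epsilon^*(\cdot)$ is Lipschitz. Write $t_0(s):=\overline{R}_{mobile}(s)/\overline{R}_{static}(s)-\xi$ (states with $\overline{R}_{static}(s)=0$ drop out of the sum). Then $\{\eta_{\xi+\Delta}^*(s)=0\}=\{\Delta>t_0(s)\}$, and from the definition of $f_p$ the map $t_0\mapsto\Pro(\Delta>t_0)$ equals $1$ for $t_0\le-\delta$, equals $1-p(\delta+t_0)/\delta$ for $t_0\in[-\delta,0]$, equals $(1-p)(\delta-t_0)/\delta$ for $t_0\in[0,\delta]$, and equals $0$ for $t_0\ge\delta$. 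In particular this quantity is Lipschitz in $t_0$ uniformly in $p$, and non-increasing in $p$ for every value of $t_0$; this smoothing effect of $f_p$ is precisely why this form of randomization is used. Hence $\bar h_\epsilon(\cdot,\xi)$ is Lipschitz and non-increasing, so the projected scalar ODE is globally attracted to $0$ if $\bar h_\epsilon(0,\xi)\le0$, to $1$ if $\bar h_\epsilon(1,\xi)\ge0$, and otherwise to the unique zero of $\bar h_\epsilon(\cdot,\xi)$; in each case this is exactly $p_\epsilon^*(\xi)$ as defined (the degenerate situation $\bar h_\epsilon(\cdot,\xi)\equiv0$ corresponds to $R_0\in\mathcal{D}_\epsilon$, i.e.\ the deterministic policy already meeting the constraint, and can be excluded, or else read off by interpreting $p_\epsilon^*$ as the equilibrium set). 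Lipschitz continuity of $\xi\mapsto p_\epsilon^*(\xi)$ is the $\epsilon$-analogue of Lemma~\ref{lemma:optimal-p-lipschitz-in-xi}.

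Finally I would verify the remaining hypotheses of the two-timescale result. The step sizes satisfy $\sum_t a(t)=\sum_t b(t)=\infty$, $\sum_t a^2(t),\sum_t b^2(t)<\infty$ and $b(t)/a(t)\to0$ by the choice $a(t)=t^{-n_1}$, $b(t)=t^{-n_2}$; all iterates stay bounded because $p^{(\tau)}\in[0,1]$ and $\xi^{(\tau)}\in[0,A]$ by projection and the rate iterates remain convex combinations of uniformly bounded quantities; and $\{M^{(\tau+1)}\}$ is a martingale-difference sequence with conditionally bounded second moments, using boundedness of the per-slot arrivals $A_{i,t}$ and of the fading gains from Assumption~\ref{assumption:fading-ergodic}. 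One must also handle the asynchronous clocks: the $R$-updates use rescaled step sizes $a(\nu(s,\cdot,\tau))$, but $\nu(s,1,\tau)/\tau$ and $\nu(s,0,\tau)/\tau$ are a.s.\ bounded away from $0$ (ensured by the $\epsilon$-exploration and ergodicity of the finite state chain), so the asynchronous-to-synchronous reduction of \cite[Chapter~7]{borkar08stochastic-approximation-book} applies and the two timescales stay comparable. Invoking the two-timescale convergence result of \cite[Chapter~6]{borkar08stochastic-approximation-book} then gives $\lim_{\tau\to\infty}|p^{(\tau)}-p_\epsilon^*(\xi^{(\tau)})|=0$ almost surely.

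The main obstacle is the middle step: identifying the mean-field drift $\bar h_\epsilon$, establishing its Lipschitz and monotone dependence on $p$ through the density $f_p$, and pinning down the equilibrium of the projected scalar ODE as exactly $p_\epsilon^*(\xi)$, including the boundary-clipping and degenerate cases. The cascade coupling with the simultaneously converging rate estimates and the asynchronous step-size bookkeeping are secondary technical points.
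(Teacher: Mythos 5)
Your proposal follows essentially the same route as the paper's own proof: it casts the $p$-update as the fast component of a two-timescale projected stochastic approximation with $\xi$ quasi-static, uses the cascade/continuity argument to replace $\mathcal{R}_\tau$ by the true rates, identifies the mean drift $\sum_s g(s)\Pro(\eta(s)=0)\overline{R}_{static}(s)-R_0$, and exploits its monotonicity in $p$ (via the explicit form of $f_p$) together with the Lipschitz continuity of $p_\epsilon^*(\cdot)$ to conclude tracking via the standard two-timescale lemma. Your explicit computation of $\Pro(\Delta>t_0)$ and the boundary/degenerate case analysis only make more precise what the paper asserts in one line, so the argument is correct and matches the paper's approach.
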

\begin{proof}
Note that, we  can rewrite the $p$ iteration (using Taylor series expansion) as follows:
\tiny
 \begin{eqnarray*}
  && p^{(\tau+1)}=\bigg[ p^{(\tau)}+a(\tau) (\sum_s \ind \{s(\tau)=s,\eta_{\tau}=0\} \\
  && \times R_{static}(s)-R_0 ) \bigg]_0^1 \\
  &=& p^{(\tau)}+o( a(\tau) ) \\
  && + \lim_{\beta \rightarrow \infty} \frac{  \bigg[ p^{(\tau)}+\beta (\sum_s \ind \{s(\tau)=s,\eta_{\tau}=0\} R_{static}(s)-R_0 ) \bigg]_0^1 - p^{(\tau)} }{\beta}\\
 &=& p^{(\tau)}+o( a(\tau) ) + N^{(\tau)} \\
  && + a(\tau) \E \lim_{\beta \rightarrow \infty} \frac{  \bigg[ p^{(\tau)}+\beta (\sum_s \ind \{s(\tau)=s,\eta_{\tau}=0\} R_{static}(s)-R_0 ) \bigg]_0^1 - p^{(\tau)} }{\beta}
 \end{eqnarray*}
\normalsize

where $N^{(\tau)}$ is a Martingale difference noise sequence, $o(b(\tau))$ is the tail of the Taylor 
series expansion, and  the expectation is under the randomized 
policy $\eta^{(\epsilon)}(\cdot | \cdot,\cdot,\cdot,\cdot)$.

 Now,
\begin{eqnarray*}
&& \Pro (\eta_{\tau}=0|s(\tau)=s,\mathcal{R}_{\tau},\xi^{(\tau)},p^{(\tau)}) \\
&=&\frac{\epsilon}{2}+(1-\epsilon)\Pro (R_{mobile}^{(\tau)}(s)-(\xi^{(\tau)}+\Delta_{\tau}) R_{static}^{(\tau)}(s) \leq 0 )
\end{eqnarray*}

Note that, $ \Pro (\eta_{\tau}=0|s(\tau)=s,\mathcal{R}_{\tau},\xi^{(\tau)},p^{(\tau)})$ is continuous in $(R_{mobile}^{(\tau)}(s), R_{static}^{(\tau)}(s), p^{(\tau)}, \xi^{(\tau)})$. As a result of this and 
Lemma~\ref{lemma:convergence-of-rate-estimates-three timescale}, the difference of the above expectation under 
the policies $\eta^{(\epsilon)}(\cdot | \cdot,\cdot,\cdot,\cdot)$ and 
$\eta^{(\epsilon)}(\cdot | \cdot,\{\overline{R}_{mobile}(s),\overline{R}_{static}(s)\}_{\forall s},\cdot,\cdot)$ go to $0$ as 
$\tau \rightarrow \infty$.

Now we claim that $(p^{(\tau)},\xi^{(\tau)})$ converges 
to the internally chain transitive invariant sets of the o.d.e. 
\tiny
\begin{eqnarray*}
&& \dot{p}(t)\\
&=& \E \lim_{\beta \rightarrow 0}\frac{[p(t)+\beta  (\sum_s \ind \{s(t)=s,\eta_{t}=0\} 
R_{static}(s)-R_0 ) ]_0^1-p(t)}{\beta}, \\
&&\dot{\xi}(t)=0,
\end{eqnarray*} 
\normalsize
where 
the expectation is under the policy 
$\eta^{(\epsilon)}(\cdot | \cdot,\{\overline{R}_{mobile}(s),\overline{R}_{static}(s)\}_{\forall s},\cdot,\cdot)$ and the 
fading distribution. 

Note that, this o.d.e becomes $\dot{p}(t) \geq 0$ at $p(t)=0$, $\dot{p}(t) \leq 0$ at $p(t)=1$,  
and else 

\begin{eqnarray*}
\dot{p}(t)&=&\E  \bigg( \sum_s \ind \{s(t)=s,\eta_{t}=0\} R_{static}(s)-R_0 ) \bigg) \\
&=& \sum_s g(s)\Pro (\eta_t(s)=0) \overline{R}_{static}(s)-R_0.
\end{eqnarray*}  

Since $\Pro (\eta(s)=0)$ is decreasing in $p(t)$, the o.d.e. 
\footnotesize
\begin{eqnarray*}
&&\dot{p}(t)\\
&=& \E \lim_{\beta \rightarrow 0}\frac{[p(t)+\beta  (\sum_s \ind \{s(t)=s,\eta_{t}=0\} 
R_{static}(s)-R_0 ) ]_0^1-p(t)}{\beta}
\end{eqnarray*}
\normalsize

 can have at most one limit point. This limit point, which we call 
$p_{\epsilon}^*(\xi)$, is either in $\{0,1\}$ or it is a stationary point of the above o.d.e.

Also, by an argument similar to Lemma~\ref{lemma:optimal-p-lipschitz-in-xi}, 
$p_{\epsilon}^*(\xi)$ is Lipschitz continuous in $\xi$.

Hence, using an argument similar to \cite[Chapter~$6$, Lemma~$1$]{borkar08stochastic-approximation-book}, we prove 
the lemma.
\end{proof}

\subsubsection{Convergence in the slower timescale}
\label{subsubsection:convergence-in-slowest-timescale}

We first prove the following lemma.

\begin{lemma}\label{lemma:probabilities-continuous-in-xi}
 $\Pro (\eta(s)=0)$ under the randomized policy 
$\eta^{(\epsilon)}(\cdot | s,\{\overline{R}_{mobile}(s),\overline{R}_{static}(s)\}_{\forall s},p_{\epsilon}^*(\xi),\xi)$ 
is continuous in $\xi$.
\end{lemma}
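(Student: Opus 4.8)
The plan is to write $\Pro(\eta(s)=0)$ as an explicit function of $\xi$ and observe that it is a composition of continuous maps. Fix a state $s$. The degenerate case $\overline{R}_{static}(s)=0$ is immediate: then $\overline{R}_{mobile}(s)-(\xi+\Delta)\overline{R}_{static}(s)<0$ never holds (and if also $\overline{R}_{mobile}(s)=0$, the action has no effect on the constraint and may be fixed as a constant), so $\Pro(\eta(s)=0)=\tfrac{\epsilon}{2}$ is constant in $\xi$ and there is nothing to prove. So assume $\overline{R}_{static}(s)>0$ and set $c(s):=\overline{R}_{mobile}(s)/\overline{R}_{static}(s)$. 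Under the policy in the statement, with probability $\tfrac{\epsilon}{2}$ the action is forced to $0$; with probability $1-\epsilon$ we draw $\Delta\sim f_{p_{\epsilon}^*(\xi)}$ and, by Theorem~\ref{theorem:policy-structure} applied with multiplier $\xi+\Delta$, set $\eta=0$ precisely when $\overline{R}_{mobile}(s)-(\xi+\Delta)\overline{R}_{static}(s)<0$, i.e.\ when $\Delta>c(s)-\xi$. Since $f_p$ is a genuine density, $\{\Delta=c(s)-\xi\}$ is a null event, so
\begin{equation*}
\Pro(\eta(s)=0)=\frac{\epsilon}{2}+(1-\epsilon)\bigl(1-G(c(s)-\xi,\,p_{\epsilon}^*(\xi))\bigr),
\end{equation*}
where $G(y,p):=\int_{-\infty}^{y}f_p(u)\,du$ is the cumulative distribution function of $f_p$.

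Next I would record the closed form of $G$ from the definition of $f_p$: $G(y,p)=0$ for $y\le-\delta$; $G(y,p)=p(y+\delta)/\delta$ for $y\in[-\delta,0]$; $G(y,p)=p+(1-p)y/\delta$ for $y\in(0,\delta]$; and $G(y,p)=1$ for $y\ge\delta$. On each of these four regions $G$ is affine in $(y,p)$, and the four formulas agree on the shared boundaries $y\in\{-\delta,0,\delta\}$; hence $G$, and therefore $1-G$, is jointly continuous on $\mathbb{R}\times[0,1]$.

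Finally I would assemble the composition. The map $\xi\mapsto c(s)-\xi$ is affine, hence continuous; the map $\xi\mapsto p_{\epsilon}^*(\xi)$ is Lipschitz continuous — this is the $\epsilon$-perturbed analogue of Lemma~\ref{lemma:optimal-p-lipschitz-in-xi} established within the proof of Lemma~\ref{lemma:convergence-of-p-iterates}. Thus $\xi\mapsto(c(s)-\xi,\,p_{\epsilon}^*(\xi))$ is continuous, and composing with the jointly continuous map $1-G(\cdot,\cdot)$ shows that $\xi\mapsto\Pro(\eta(s)=0)$ is continuous, which is the claim.

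The argument is essentially bookkeeping; the only point requiring care is the joint continuity of $G$ at the breakpoints of its piecewise formula — and this is precisely what the mollified density $f_p$ buys us. Indeed, with the naive randomization between $\eta_{\xi-\delta}^*(\cdot)$ and $\eta_{\xi+\delta}^*(\cdot)$ one would instead obtain a term of the form $p\,\ind\{c(s)<\xi-\delta\}+(1-p)\ind\{c(s)<\xi+\delta\}$, which is discontinuous in $\xi$ as $\xi$ crosses an element of $\mathcal{S}_{\epsilon}$; replacing the point masses by uniform densities on $[-\delta,0]$ and $(0,\delta]$ turns each jump into a linear ramp and restores continuity. Beyond invoking the already-established Lipschitz continuity of $p_{\epsilon}^*(\cdot)$, I expect no genuine obstacle.
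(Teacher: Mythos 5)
Your proposal is correct and follows essentially the same route as the paper's proof: rewrite $\Pro(\eta(s)=0)$ as $\tfrac{\epsilon}{2}+(1-\epsilon)\Pro\bigl(\Delta \geq \overline{R}_{mobile}(s)/\overline{R}_{static}(s)-\xi\bigr)$ and conclude by continuity of this expression in $\xi$ together with the (Lipschitz) continuity of $p_{\epsilon}^*(\xi)$. You merely spell out what the paper leaves implicit — the closed form of the CDF of $f_p$, its joint continuity at the breakpoints, and the degenerate case $\overline{R}_{static}(s)=0$ — which is sound bookkeeping but not a different argument.
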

\begin{proof}
We have:
\begin{eqnarray*}
 && \Pro (\eta(s)=0)\\
 &=& \frac{\epsilon}{2}+(1-\epsilon)\Pro \bigg( \overline{R}_{mobile}(s) -(\xi+\Delta) \overline{R}_{static}(s) \leq 0 \bigg)\\
  &=& \frac{\epsilon}{2}+(1-\epsilon)\Pro \bigg(\Delta \geq \frac{\overline{R}_{mobile}(s)}{\overline{R}_{static}(s)} -\xi  \bigg)\\
\end{eqnarray*}
This is continuous in $\xi$ and $p_{\epsilon}^*(\xi)$, and by an argument similar to Lemma~\ref{lemma:optimal-p-lipschitz-in-xi}, 
$p_{\epsilon}^*(\xi)$ is continuous in $\xi$. This proves the lemma.
 \end{proof}

Now we state the final lemma.

\begin{lemma}\label{lemma:convergence-of-xi-iterates}
 Almost surely, as $\tau \rightarrow \infty$, the iterates $\xi^{(\tau)}$ converges to the projection of 
 $\mathcal{K}_{\epsilon}(R_0)$ onto the $\xi$ axis.
\end{lemma}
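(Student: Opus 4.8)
The plan is to treat the $\xi$ iteration as the slowest component of the two timescale scheme of Algorithm~\ref{algorithm:learning-algorithm-three timescale} and to identify the limiting projected o.d.e.\ that it tracks. For a multiplier $\xi$ and randomization parameter $p$, write $H_{\epsilon}(\xi,p):=\sum_{s} g(s)\,\Pro(\eta(s)=0\mid\xi,p)\,\overline{R}_{static}(s)$ for the stationary expected per-slot sum rate of the static users under the $\epsilon$-explored randomized policy; using the explicit form $\Pro(\eta(s)=0\mid\xi,p)=\tfrac{\epsilon}{2}+(1-\epsilon)\int_{\overline{R}_{mobile}(s)/\overline{R}_{static}(s)-\xi}^{\infty}f_p(y)\,dy$, whose integrand is bounded by $1/\delta$, together with Lemma~\ref{lemma:probabilities-continuous-in-xi}, this is Lipschitz in $\xi$ and in $p$. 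First I would invoke Lemma~\ref{lemma:convergence-of-rate-estimates-three timescale} and Lemma~\ref{lemma:convergence-of-p-iterates}: on the natural timescale of the $b(\tau)$ iteration the rate estimates have already converged to $\{\overline{R}_{mobile}(s),\overline{R}_{static}(s)\}$ and $p^{(\tau)}$ has already tracked $p_{\epsilon}^{*}(\xi^{(\tau)})$, so the $\xi$ update may be rewritten as $\xi^{(\tau+1)}=\bigl[\xi^{(\tau)}+b(\tau)\bigl(R_0-H_{\epsilon}(\xi^{(\tau)})+M^{(\tau+1)}+o(1)\bigr)\bigr]_0^A$, where $H_{\epsilon}(\xi):=H_{\epsilon}(\xi,p_{\epsilon}^{*}(\xi))$ (Lipschitz in $\xi$, since $p_{\epsilon}^{*}(\cdot)$ is Lipschitz by the argument used in Lemma~\ref{lemma:optimal-p-lipschitz-in-xi}), $M^{(\tau+1)}$ is a martingale difference sequence with bounded conditional second moment (boundedness of $A_{i,t}$ and of the fading gains, Assumption~\ref{assumption:fading-ergodic}), and the $o(1)$ term collects the errors from replacing the faster iterates by their equilibria and vanishes almost surely by the two cited lemmas. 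The standard projected stochastic approximation results (\cite[Chapter~5; Chapter~6, Theorem~2 and Lemma~1]{borkar08stochastic-approximation-book}) then give that $\xi^{(\tau)}$ converges almost surely to an internally chain transitive invariant set of the projected o.d.e.\ $\dot\xi(t)=\hat\Gamma\bigl(R_0-H_{\epsilon}(\xi(t))\bigr)$, with $\hat\Gamma$ the Gronwall projection term keeping the trajectory in $[0,A]$.

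Next I would analyze this scalar projected o.d.e. The key structural fact is monotonicity: for fixed $p$, $\Pro(\eta(s)=0\mid\xi,p)$ is nondecreasing in $\xi$ (the threshold $\overline{R}_{mobile}(s)/\overline{R}_{static}(s)-\xi$ decreases) and nonincreasing in $p$ (more mass of $\Delta$ on $[-\delta,0]$ shifts $\xi+\Delta$ down, making $\eta(s)=1$ more likely); consequently $H_{\epsilon}(\xi,\cdot)$ is nonincreasing, so the $p$-equilibrium $p_{\epsilon}^{*}(\xi)$ is $0$ when $H_{\epsilon}(\xi,0)<R_0$, is $1$ when $H_{\epsilon}(\xi,1)>R_0$, and is the unique $p$ with $H_{\epsilon}(\xi,p)=R_0$ when $H_{\epsilon}(\xi,1)\le R_0\le H_{\epsilon}(\xi,0)$; in these three cases $H_{\epsilon}(\xi)$ equals $H_{\epsilon}(\xi,0)$, $H_{\epsilon}(\xi,1)$, and $R_0$ respectively, and in all cases $H_{\epsilon}(\cdot)$ is continuous and nondecreasing. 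By the ($\epsilon$-analogue of the) Definition~\ref{definition:the-set-mathcal-K}, the projection of $\mathcal{K}_{\epsilon}(R_0)$ onto the $\xi$-axis is exactly $J:=\{\xi\in[0,A]:H_{\epsilon}(\xi,1)\le R_0\le H_{\epsilon}(\xi,0)\}=\{\xi\in[0,A]:H_{\epsilon}(\xi)=R_0\}$, a nonempty (the $\epsilon$-perturbed form of Assumption~\ref{assumption:existence-of-optimal-xi}) closed subinterval on which the drift $R_0-H_{\epsilon}(\xi)$ vanishes; for $\xi$ strictly below $J$ the drift is strictly positive and for $\xi$ strictly above $J$ it is strictly negative, while the projection $\hat\Gamma$ creates no spurious equilibria at the endpoints (by monotonicity $H_{\epsilon}(0)\le R_0$, and the choice of $A$ with $\overline{R}_{static}^{*}(A)>R_0$ gives $H_{\epsilon}(A)>R_0$ for $\epsilon,\delta$ small, so the drift at $A$ points inward). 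Hence $J$ is the unique internally chain transitive invariant set and the global attractor of the o.d.e., with $V(\xi)=\mathrm{dist}(\xi,J)$ a Lyapunov function.

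Combining the two parts yields $\xi^{(\tau)}\to J$ almost surely, i.e.\ $\xi^{(\tau)}$ converges to the projection of $\mathcal{K}_{\epsilon}(R_0)$ onto the $\xi$-axis; together with Lemma~\ref{lemma:convergence-of-p-iterates} this also gives $(p^{(\tau)},\xi^{(\tau)})\to\mathcal{K}_{\epsilon}(R_0)$, which completes Theorem~\ref{theorem:convergence-two-timescale-iteration}. The step I expect to be the main obstacle is not the stochastic approximation machinery (the timescale separation is already delivered by Lemmas~\ref{lemma:convergence-of-rate-estimates-three timescale} and \ref{lemma:convergence-of-p-iterates}) but the clean characterization of the limit set: verifying that $H_{\epsilon}(\cdot)$ is continuous and monotone despite the case split defining $p_{\epsilon}^{*}(\xi)$, and checking that the boundary projection at $0$ and $A$ introduces no extra equilibria, so that the global attractor is precisely the projection of $\mathcal{K}_{\epsilon}(R_0)$ and nothing larger.
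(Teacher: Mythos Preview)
Your proposal is correct and follows essentially the same two-timescale route as the paper: freeze the faster iterates via Lemmas~\ref{lemma:convergence-of-rate-estimates-three timescale} and~\ref{lemma:convergence-of-p-iterates}, show the slow $\xi$ iteration tracks a scalar projected o.d.e., and identify its attractor with the $\xi$-projection of $\mathcal{K}_{\epsilon}(R_0)$. The only noteworthy differences are cosmetic: the paper appeals to the Kushner--Clark projected-ODE framework (\cite[Theorem~5.3.1]{kushner-clark78SA-constrained-unconstrained}) and defers the verification of its hypotheses to \cite{chattopadhyay-etal15measurement-based-impromptu-deployment-arxiv-v1}, whereas you use Borkar's chain-transitivity machinery and give a self-contained monotonicity analysis of $H_{\epsilon}(\xi)=H_{\epsilon}(\xi,p_{\epsilon}^{*}(\xi))$ via the three-case split for $p_{\epsilon}^{*}(\xi)$; your explicit argument that the drift $R_0-H_{\epsilon}(\xi)$ changes sign exactly on $J$ and that the endpoints $0,A$ are not spurious equilibria is something the paper leaves implicit in its reference to the earlier work.
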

\begin{proof}
 The proof follows using similar arguments as 
 \cite[Appendix~E.C.3, Appendix~E.C.4 and Appendix~E.C.5]{chattopadhyay-etal15measurement-based-impromptu-deployment-arxiv-v1}. 
 The arguments require the results in Lemma~\ref{lemma:probabilities-continuous-in-xi}, 
 Lemma~\ref{lemma:convergence-of-rate-estimates-three timescale} and 
 Lemma~\ref{lemma:convergence-of-p-iterates}. The choice of $A$ should be sufficiently large, otherwise 
 $\xi^{(\tau)}$ might converge to $A$. 
 
 The proof of the slowest timescale convergence in 
 \cite[Theorem~$12$]{chattopadhyay-etal15measurement-based-impromptu-deployment-arxiv-v1} involves  checking of  
 five conditions required for \cite[Theorem~$5.3.1$]{kushner-clark78SA-constrained-unconstrained}.  
 The constrained MDP associated with 
 \cite[Theorem~$12$]{chattopadhyay-etal15measurement-based-impromptu-deployment-arxiv-v1} had two constraints and hence 
 two slower timescale iterates, whereas we have only one slowest timescale iterate for which it is 
 easier to check these five conditions. Since these conditions hold, we can claim that the $\xi$ iteration  
 converges almost surely to the set of stationary points of a suitable o.d.e. Again, since we have only one slower 
 timescale iterate, using large enough $A$ is sufficient to ensure that the stationary points of that o.d.e. 
 lie in $(0,A)$; it was much more complicated in \cite[Theorem~$12$]{chattopadhyay-etal15measurement-based-impromptu-deployment-arxiv-v1} 
 since there were two slower timescale iterates.
 
 The proof of this lemma requires Lemma~\ref{lemma:probabilities-continuous-in-xi} and the fact that $R_{static}(s)$ has a bounded support 
 (since by Assumption~\ref{assumption:fading-ergodic}, fading gain distribution has bounded support).
\end{proof}
In light of Lemma~\ref{lemma:convergence-of-rate-estimates-three timescale}, 
Lemma~\ref{lemma:convergence-of-p-iterates} and Lemma~\ref{lemma:convergence-of-xi-iterates}, 
the theorem is proved.\qed

\subsection{Proof of Corollary~\ref{corollary:two-timescale-learning-optimal-for-constrained-problem}}
\label{subsection:proof-of-corollary}
Using arguments similar to the proof of Lemma~\ref{lemma:probabilities-continuous-in-xi},  under 
$\eta^{(\epsilon)}(\cdot | \cdot,\cdot,\cdot,\cdot)$, one can claim that 
$$ \Pro(\eta_{\tau}=0|s(\tau)=s,\mathcal{R}_{\tau}=\mathcal{R},
\xi^{(\tau)}=\xi,p^{(\tau)}=p,\epsilon)$$  
is continuous in $(\mathcal{R},\xi,p,\epsilon)$ for given $s$. Hence, by 
Theorem~\ref{theorem:convergence-two-timescale-iteration}, we can claim that: 
$\Pro(\eta_{\tau}=0|s(\tau)=s,\epsilon)$ converges to the set 
$ \{x: x=\eta^{(\epsilon)}(0|s,\{\overline{R}_{mobile}(s),\overline{R}_{static}(s)\}_{\forall s}, \xi,p ), (\xi,p) \in 
\mathcal{K}_{\epsilon}(R_0)\}$ almost surely as $\tau \rightarrow \infty$. 

Now, 
\small
\begin{eqnarray*}
&& \lim_{\epsilon \downarrow 0} \{x: x=\eta^{(\epsilon)}(0|s,\{\overline{R}_{mobile}(s),\overline{R}_{static}(s)\}_{\forall s}, \xi,p ), (\xi,p) \in 
\mathcal{K}_{\epsilon}(R_0)\} \\
&=&\{x: x=\eta^{(0)}(0|s,\{\overline{R}_{mobile}(s),\overline{R}_{static}(s)\}_{\forall s}, \xi,p ), (\xi,p) \in 
\mathcal{K}(R_0)\}. 
\end{eqnarray*}
\normalsize

The proof trivially follows from this.\qed

\end{document}